\title{Minimal Translations from Synchronous Communication to Synchronizing Locks}
\author{
Manfred Schmidt-Schau{\ss}
\institute{Goethe-University, Frankfurt am Main, Germany}
\email{schauss@ki.cs.uni-frankfurt.de}
\and 
David Sabel 
\institute{LMU, Munich, Germany} 
\email{david.sabel@ifi.lmu.de}
}
\newcommand{\ignore}[1]{}
\newcommand{\maycon}{{\downarrow}}
\newcommand{\wrt}{{w.r.t.}}
\newcommand{\PAR}{{\ensuremath{\hspace{0.1mm}{\!\scalebox{2}[1]{\tt |}\!}\hspace{0.1mm}}}}
\newcommand{\dstodo}[2][]{{\color{blue} \bfseries DAVID: #2}}
\newcommand{\msstodo}[2][]{{\color{cyan} \bfseries MANFRED: #2}}
\newenvironment{proof*}{{\it Proof.}}{}
\newtheorem{theorem}{Theorem}[section]
\newtheorem{lemma}[theorem]{Lemma} 
\newtheorem{example}[theorem]{Example} 
\newtheorem{definition}[theorem]{Definition} 
\newtheorem{proposition}[theorem]{Proposition}
\newtheorem{corollary}[theorem]{Corollary}
\begin{document}
\maketitle
 
\newcommand{\bbbn}{\mathbb{N}}
\newcommand{\CHSIMPLE}{\ensuremath{\text{CHSIMPLE}}}
\newcommand{\LOCKSIMPLE}{\ensuremath{\mathrm{LOCKSIMPLE}}}
\newcommand{\PISIMPLE}{\ensuremath{\mathrm{{SYNCSIMPLE}}}}
\newcommand{\PIS}{\mathit{SYS}}
\newcommand{\TransCheckTool}{\text{Refute-Pi}} 
\newcommand{\RegCheckTool}{\text{Refute-Regex}}
\newcommand{\gentzen}[2]{{\displaystyle \frac{#1}{#2}}}
\newcommand{\gentzent}[3]{{\displaystyle \frac{#1}{#2}}\quad #3}

\newcommand{\success}{\checkmark}
\newcommand{\silent}{0}


\newcommand{\full}{{\ensuremath{\blacksquare}}}

\newcommand{\eempty}{{\ensuremath{\Box}}}

\begin{abstract}
In order to understand the relative expressive power of larger concurrent programming languages, we analyze translations of small process calculi which model the communication and synchronization of concurrent processes. The source language SYNCSIMPLE is a minimalistic model for message passing concurrency while the target language LOCKSIMPLE is a minimalistic model for shared memory concurrency. The former is a calculus with synchronous communication of processes, while the latter has synchronizing mutable locations -- called locks -- that behave similarly to binary semaphores. The criteria for correctness of translations is that they preserve and reflect may-termination and must-termination of the processes. We show that there is no correct compositional translation from SYNCSIMPLE to LOCKSIMPLE that uses one or two locks, independent from the initialisation of the locks. We also show that there is a correct translation that uses three locks. Also variants of the locks are taken into account with different blocking behavior.
\end{abstract}


\sloppy

%

 \section{Introduction}


Different models of concurrency are studied and used in theory and in practice of computer science. One main approach are \emph{message passing models} where the concurrently running threads 
(or processes) communicate by sending and receiving messages.
A prominent example for a message passing model is the $\pi$-calculus \cite{milner1992calculus,sangiorgi-walker:01}.
There exist approaches with asynchronous and with synchronous message passing. 
In asynchronous message passing, a sender sends its message and proceeds \emph{without} waiting that a receiver collects the message (thus the message is kept in some medium until the receiver collects it from that medium). 
In synchronous message passing, the message is exchanged in one step and thus sender and receiver wait until the communication 
has happened. Thus, synchronous message passing can be used for synchronization of processes. 

Another main approach for concurrency are program calculi with \emph{shared memory} where concurrent processes communicate by using shared memory primitives. For instance, $\lambda(\textbf{fut})$ \cite{jlambda-fut} is a program calculus that models the core of  the strict concurrent functional language Alice ML, and it has concurrent threads, handled futures, and memory cells with an atomic exchange-operation. Also other shared memory synchronization primitives like concurrent buffers  and their encodability into  $\lambda(\textbf{fut})$  are analyzed \cite{schwinghammer-sabel-schmidt-schauss-niehren:09:ml}.
Other examples are the calculi  CH \cite{schmidt-schauss-sabel-2020} and CHF \cite{sabel-schmidt-schauss-PPDP:2011,sabel-schmidt-schauss-LICS:12,schauss-sabel-dallmeyer:18}. 
The latter is a program calculus that models the core of Concurrent Haskell \cite{peyton-gordon-finne:96}: it extends the functional programming language Haskell by concurrent threads and so-called MVars, which are synchronizing mutable memory locations. Thus, depending on the model (or the concurrent programming languages) there exist different primitives. 
The simplest approach is some kind of locking primitive to block a process until some event happens.
 To exchange a message, for instance, atomic read-write registers can be used. More sophisticated primitives are for example semaphores, monitors, or Concurrent Haskell's MVars. 
 All these approaches have in common that processes can be blocked until an event occurs, which is performed by another process.

Expressivity of (concurrent) programming languages is an important topic, since
the corresponding results allow us to classify the languages and their programming constructs, and to understand their differences. 
Investigating the expressivity  to clarify the relation between
message passing models and shared memory concurrency can  in principle 
be done by constructing correct translations from one model to the other.
Our research considers the question whether and how synchronous message passing can be implemented by models that support shared memory and some of these synchronization primitives.

In previous work \cite{schmidt-schauss-sabel-2020}, we analyzed translations from the synchronous $\pi$-calculus into a core language of  Concurrent Haskell. In particular, we looked for compositional translations that preserve and reflect 
the convergence behavior of processes (in all program-contexts) {w.r.t.}~may- and should-convergence.
This means, processes can successfully terminate or not, where may-convergence observes whether there is a possible execution path to a successful process and should-convergence means that the ability to become successful holds for all execution paths. 
We found correct translations and proved them to be correct with respect to this correctness notion. 
Looking for small translations has several advantages: The resource usage of the translated programs is lower, they  are easier to understand than larger ones, 
and the corresponding correctness proofs often are easier than for large ones. Hence, we also tried to find smallest translations, but in the end
we could not answer the following question:
what is the minimal number of MVars that are necessary to correctly encode the message passing synchronization using MVars?
This leads us to the general question how synchronous communication can be encoded by synchronizing primitives and what is the minimal number of primitives that is required. 
This question is addressed in this paper.
We choose to work with models that are as simple as possible and also as complex as needed, 
but nevertheless are also relevant for full programming languages
(we discuss the transportion of the results to full languages in
\cref{subsec:correct-translations}).
%
Thus we consider translations from a small message passing source language into a small target language with shared memory concurrency and synchronizing primitives.

For the source language $\PISIMPLE$, we use a minimalistic model for concurrent processes 
that synchronize by communication. The language has constructs for sending (denoted by ``!'') and for receiving (denoted by ``?''). A communication step 
atomically processes one $!$ from one process together with one $?$ from another process. For simplicity, there is no message that is sent and there are 
no channel names ({i.e.}~the language can be seen as a variant of the synchronous $\pi$-calculus (without replication and sums) where only one global channel name exists).

For the target language $\LOCKSIMPLE$ we choose a similar calculus where the communication is removed and replaced by synchronizing shared memory locations. 
These locations are called \emph{locks}. 
 A lock can be empty or full. There are operations to fill an empty lock (put) or to empty a full lock (take). The main variant that we consider is the one 
 where the put-operation blocks on a full lock, but the take-operation is not blocking on an empty lock.
Thus these locks are like binary semaphores where put is the wait-operation and take is the signal-operation (where signal on an unlocked semaphore is allowed but has no effect).
We also consider the language with several locks with different initializations (empty or full).
Based on this setting, the question addressed by the paper is:
\begin{quote}
\emph{What is the minimal number of locks that is  required to correctly translate the source calculus into the target calculus?}
\end{quote}

The notion of correctness of a translation requires comparing the semantics of both calculi. We adopt the approach of observational correctness \cite{schmidt-schauss-niehren-schwinghammer-sabel-ifip-tcs:08,schmidtschauss-sabel-niehren-schwing-tcs:15} and thus we use correctness {w.r.t.}~a~contextual equivalence which considers the may- and the must-convergence in both calculi. May-convergence means that the process can be evaluated to a successful process 
(in both calculi we add a constant to signal success). Due to the nondeterminism, observing may-convergence is too weak since for instance, it equates processes 
that must become successful with processes that either diverge or become successful. Hence we also observe must-convergence, which holds if any evaluation of the process ends 
with a successful process. Considering  must-convergence only is also too weak since it equates processes that always fail with processes that either fail or become successful. 
Thus we use the  combination of both convergencies as program semantics. In turn, a correct translation must preserve and reflect the may- and must-convergence of any program.  

This can also be seen as a minimalistic requirement on a correct translation since for instance, requiring equivalence of strong or weak bisimulation (see {e.g.}~\cite{sangiorgi-walker:01}) would be a much stronger requirement.


{\em Results.}
We show that there does not exist a correct compositional   
translation from $\PISIMPLE$ into $\LOCKSIMPLE$ that uses one (\cref{thm:storage-1lock-impossible}) or two locks (\cref{thm:tau-2-incorrect}), while there is a correct 
compositional translation that uses three locks (\cref{thm:translation-k-3-len-6}).

 The non-existence is proved for any initial state of the lock variables and also for different kinds of blocking behaviour of the lock ({i.e.}~whether {put} or whether~{take} blocks).

{\em Related Work.}
Validity of translations between process calculi is discussed in \cite{Gorla:10,GlabbeekGLM19} 
where five criteria for valid translations resp.~encodings are proposed:
compositionality, name invariance, operational correspondence, divergence reflection, and success sensitiveness. Compositionality and name invariance restrict the syntactic form of the translated processes;  operational correspondence means that the transitive closure of the reduction relation is transported by the translation, modulo the syntactic equivalence;
and divergence reflection and success sensitiveness are conditions on the semantics.  
  
We adopt the first condition for our non-encodability results since we will require that the translation is compositional. The name invariance is irrelevant since our simple calculi do not have names. 
We do not use the third condition in the proposed form, since it has a flavour of showing equivalence of bisimulations, instead, we require equivalence of may- and must-convergence 
which is a bit weaker. Thus, for our non-encodability result the property could be included (still showing non-encodability), but for the correct translation 
in \cref{thm:translation-k-3-len-6}, we did not check the property.
Convergence equivalence for may- and must-convergence is our replacement of Gorla's divergence reflection and success sensitiveness.

Translations from synchronous to asynchronous communication  are investigated in the $\pi$-calculus \cite{Honda:1991,boudol:1992,Palamidessi03,Palamidessi:97}. Encodability results are obtained for the $\pi$-calculus without sums \cite{Honda:1991,boudol:1992},
while Palamidessi analyzed synchronous and asynchronous communication in the $\pi$-calculus with \emph{mixed sums} and non-encodability is the main result \cite{Palamidessi:97,Palamidessi03}.

A high-level encoding of synchronous communication into shared memory concurrency is
an encoding of CML-events in Concurrent Haskell using MVars \cite{Russell01,Chaudhuri09},
however a formal correctness proof for the translation is not provided. 


{\em Outline.}
 In \cref{sec:simple-languages} we introduce the process language $\PISIMPLE$ with synchronous communication and the process language $\LOCKSIMPLE$ with asynchronous locks.
After defining the correctness conditions on translations, we show that three locks (with a specific initialization) are sufficient 
 for a correct translation and we discuss variants of the target language.
 In particular, we show that changing blocking variants is equivalent to a modification of the initial store.
 In  \cref{section:k=1-impossible} it is shown that one lock in $\LOCKSIMPLE$ is insufficient for a correct translation
 and \cref{sec:general-props} exhibits certain general properties of correct translations which use two or more locks.
 \Cref{sec:k=2-refuting} contains the structuring into different blocking types of translations,  and proofs that there are no 
 correct translations for two locks and any initial store. 
 \Cref{section:conclusion} concludes the paper.
 For space reasons some proofs are omitted, but they can be found in the extended version of this paper \cite{schmidt-schauss-sabel:21-conclock-ext}.



\section{Languages for Concurrent Processes}\label{sec:simple-languages}
We define abstract and simple models for concurrent processes with synchronous communication and for concurrency with synchronizing shared memory.
The former model is a simplified variant of the $\pi$-calculus with a single global channel name and without replication or recursion, the latter can be seen as a variant where 
interprocess communication is replaced by binary semaphores.
Thereafter we define correct translations, prove correctness of a specific translation and consider variants of the target language. 

\subsection{The Calculus \PISIMPLE}
%
%
%
%
\begin{definition}
The syntax of processes and subprocesses of the calculus \PISIMPLE~is defined by the following grammar, where $i \in \{1,\ldots,k\}$:

\begin{center}
\begin{tabular}{ll@{~}c@{~}l}
Subprocesses &  $\mathcal{U}$ & $::=$ & $\success \mid \silent \mid\ !\mathcal{U} \mid\  ?\mathcal{U}$  \\

Processes &  
        ${\cal P}$ & $::=$ & $\mathcal{U} ~|~ \mathcal{U} \PAR {\cal P}$ 
\end{tabular}
\end{center}
\end{definition}
We informally describe the meaning of the symbols.
The symbol 
  $\silent$ means the silent subprocess; the symbol  
$\success$ means success,
     The operation $!$ means an output (or send-command), and $?$ means an input (or receive-command),
         and $\PAR$ is parallel composition.
For example, the expression $?!!\success \PAR !?\silent$  is a process, and so are  
also $???!!!?\success$  and  $?!!\success \PAR !?\silent \PAR \success \PAR !!!!!!?!\success$.
We assume that $\PAR$ is commutative and associative
and that $\silent$ is an identity element {w.r.t.}~$\PAR$,
{i.e.}~$\silent \PAR P = P$ for all $P$. 
Thus a process can be seen as a multiset of subprocesses.


%

\begin{definition} 
The {\em operational semantics} of $\PISIMPLE$ 
is a (non-deterministic) small-step operational semantics.
A single step
$\xrightarrow{\PIS}$
is defined as
    $$!\mathcal{U}_1 \PAR ?\mathcal{U}_2 \PAR {\cal P}   \xrightarrow{\PIS}  \mathcal{U}_1 \PAR \mathcal{U}_2 \PAR {\cal P}$$  
where $\mathcal{U}_1,\mathit{U_2}$ are arbitrary subprocesses and ${\cal P}$ is an arbitrary process.

The reflexive-transitive closure of  $\xrightarrow{\PIS}$ is denoted as  $\xrightarrow{\PIS,*}$.

If a process is of the form $\success\PAR{\cal P}$,
then the process is  {\em successful}. 
A sequence of $\xrightarrow{\PIS}$-steps starting with ${\cal P}$ is called \emph{an execution of ${\cal P}$}.
\end{definition}

Note that there may be several executions of processes, but every execution terminates.
\begin{example} Two examples for the execution of $P =\  ?!\silent \PAR  !!\success \PAR  ?\silent$ are: 
\begin{itemize}
\item $ P =  {?}{!}\silent  \PAR  !!\success \PAR  ?\silent    ~\xrightarrow{\PIS}~  !\silent  \PAR   !\success \PAR  ?\silent    
     ~\xrightarrow{\PIS}~  !\silent \PAR  \success \PAR  \silent$,
     where the final process is successful.
\item $P =  {?}!\silent \PAR  !!\success \PAR  ?\silent~\xrightarrow{\PIS}~  !\silent \PAR   !\success \PAR  ?\silent
    ~\xrightarrow{\PIS}~ \silent \PAR   !\success \PAR  \silent $
    where the final process is terminated, but not successful.
\end{itemize}
 This means there may be executions leading to a successful process, and at the same time executions leading to a fail.
  \end{example}
 We often omit the suffix $\silent$ for a subprocess, {i.e.}~whenever a subprocess ends with symbol $!$ or $?$ we mean the same subprocess extended by $\silent$.

 \begin{definition} A process $\mathcal{P}$ is called
 \begin{itemize}
   \item {\em may-convergent}  if there is some successful process ${\cal P}'$ with ${\cal P} \xrightarrow{\PIS,*} {\cal P}'$. 
  
  \item {\em must-convergent}  if for all processes ${\cal P}'$  with ${\cal P} \xrightarrow{\PIS,*} {\cal P}'$, the process ${\cal P}'$ is may-convergent. 
   \item {\em must-divergent} or a {\em fail}, if there is no execution leading to a successful process.
  \item  {\em may-divergent}, if there exists an execution ${\cal P} \xrightarrow{\PIS,*} {\cal P}'$, where ${\cal P}'$ is a fail.
 \end{itemize}  
\end{definition}

Our definition of must-convergence is the same as so-called should-convergence 
(see {e.g.}~\cite{sabel-schmidt-schauss-MSCS:08,schmidt-schauss-sabel-maymust:10,sabel-schmidt-schauss-PPDP:2011}). However, since there are no infinite reduction sequences, 
the notions of should- and must-convergence coincide (see {e.g.}~\cite{rensink-vogler:07,schmidt-schauss-sabel-maymust:10} for more discussion on the different notions). 
Thus, an alternative but equivalent definition of must-convergence is: a process $P$ is must-convergent, if all maximal reductions starting from $P$ end with a successful process.

%
 \subsection{The Calculus {\LOCKSIMPLE}}
 We now define the calculus $\LOCKSIMPLE$ which can be seen as a modification of $\PISIMPLE$ where ? and ! are removed, and operations $P_i$
 and $T_i$, which mean put and take, are added where $i=1,\ldots,k$ and $k$ is the number of locks ({i.e.}~storage cells). Locks can be empty (written as $\eempty$) or full (written as $\full$). For $k$ locks, the {\em initial store} is a $k$-tuple $(C_1,\ldots,C_k)$ where $C_i \in \{\eempty,\full\}$. We make this explicit by writing ${\LOCKSIMPLE}_{k,IS}$ for the language with $k$ locks and initial store $IS$.
 Subprocesses in ${\LOCKSIMPLE}_{k,IS}$ for a fixed value $1 \leq k \in \bbbn$ are
 built from $\success,\silent$, the symbols  $P_i,T_i$ and concatenation.
%
%
 Processes are a multiset of subprocesses:
 they are composed by parallel composition $\PAR$ which is assumed to be associative and commutative.
  
  \begin{definition} The syntax of processes and subprocesses of the calculus ${\LOCKSIMPLE_{k,IS}}$  is defined by the following grammar:
\begin{center}  
   \begin{tabular}{llcl}
     subprocess: &  $\mathcal{U}$ & $::=$ &   $\silent \mid \success \mid   P_i\mathcal{U} \mid T_i\mathcal{U}$
     \\
  process: & ${\cal P}$ & $::=$ &  $\mathcal{U} ~|~ \mathcal{U} \PAR {\cal P}$
   \end{tabular} 
\end{center}

  \end{definition}

We first describe the operational semantics of processes of {$\LOCKSIMPLE_{k,IS}$} and then give the formal definition.
The operational semantics is a non-deterministic small-step reduction $\xrightarrow{LS}$ which operates on 
$k$ locks $C_i$ (which are full ({i.e.}~$\full$)  or empty (written as $\eempty$)). The  execution of the operations $P_i$ or $T_i$ is as follows:
\begin{center}   
  \begin{tabular}{lll}
      $P_i$:& (put) &  changes $C_i$ from $\eempty \to \full$, or waits, if $C_i$ is $\full$.\\
      $T_i$:& (take) & changes $C_i$ from $\full \to \eempty$, or goes on (no change), if $C_i$ is $\eempty$
   \end{tabular}
\end{center}  
Note that locks together with $P_i$ and $T_i$ behave like binary semaphores, where $(P_i,T_i)$ means (wait,signal) (or (down,up), resp.). The semaphore is set to $1$ 
if the lock is empty, and set to $0$ if the lock is full. Note that locks specify a particular behavior for the 
case of a signal operation and the semaphore set to 1:
the signal has no effect (since $T_i$ on an empty lock does not have an effect).
Now we formally define the operational semantics:
 \begin{definition}
The relation $\xrightarrow{LS}$ operates on  a pair $({\cal P},(C_1, \ldots, C_k))$, where ${\cal P}$ is a $\LOCKSIMPLE_{k,IS}$-process, $C_1,\ldots,C_k$ are the storage cells. For a $\LOCKSIMPLE_{k,IS}$-process ${\cal P}$ the reduction starts with initial store $({\cal P},IS)$.
   
  We write the state as ${\cal C}$, and with ${\cal C}[C_i = \eempty]$ we denote that the specific cell $C_i$ has value $\eempty$. The notation 
  ${\cal C}[C_i \mapsto \eempty]$ means that in ${\cal C}$ the value in storage cell $C_i$ is replaced by $\eempty$. The same for $\full$ instead of $\eempty$. 
    The relation $\xrightarrow{LS}$ is defined by the following two rules:
   $$
  (P_i{\cal U} \PAR {\cal P},{\cal C}[C_i = \eempty]) \xrightarrow{LS}({\cal U} \PAR {\cal P},{\cal C}[C_i \mapsto \full])
   \quad\text{and}\quad
   (T_i{\cal U} \PAR {\cal P},{\cal C})
   \xrightarrow{LS}({\cal U} \PAR {\cal P},{\cal C}[C_i \mapsto \eempty])
   $$
  
%
%
 The reflexive-transitive closure of $\xrightarrow{LS}$ is denoted as $\xrightarrow{LS,*}$. 
 A sequence $(\mathcal{P},\mathcal{C}) \xrightarrow{LS,*} (\mathcal{P}',\mathcal{C}')$
 is called an \emph{execution of 
 $(\mathcal{P},\mathcal{C})$}, and if
 $\mathcal{C} =  IS$ then it is also called an \emph{execution of $\mathcal{P}$}.
 \end{definition}
 
To simplify notation, we write $\LOCKSIMPLE_k$ for the language with $k$ locks where all locks are empty at the beginning, {i.e.}~it is $\LOCKSIMPLE_{k,IS}$ with $IS = (\eempty,\ldots,\eempty))$.

Note that the blocking behavior of the put-operation is modelled by the operational semantics as follows:
for  $(P_i{\cal U} \PAR {\cal P},~{\cal C}[C_i = \full])$ there is no step (for subprocess $P_i{\cal U}$) defined and thus  $P_i{\cal U}$ has to wait until another subprocess changes the value of $C_i$.
  
\ignore{
 \begin{definition} A process $\mathcal{P}$ is called
 \begin{itemize}
   \item {\em successful}, if there is a subprocess $\success$ of $\mathcal{P}$, {i.e.}~$\mathcal{P} = \success \PAR \mathcal{P}'$ for some $\mathcal{P}'$. 
   \item {\em may-convergent},  if there is some successful process $\mathcal{P}'$ with $\mathcal{P} \xrightarrow{LS,*} \mathcal{P}'$. 
  
  \item {\em must-convergent},  if for all processes $\mathcal{P}'$  with $\mathcal{P} \xrightarrow{LS,*} \mathcal{P}'$, the process $\mathcal{P}'$ is may-convergent. 
   \item {\em must-divergent} or a {\em fail}, if there is no execution leading to a successful process.
  \item  {\em may-divergent},  if for some process $\mathcal{P}'$:  $\mathcal{P} \xrightarrow{LS,*} \mathcal{P}'$, where $\mathcal{P}'$ is a fail.
 \end{itemize}  
 \end{definition}
}
\begin{definition}   
A process $\mathcal{P}$ of $\LOCKSIMPLE_{k,IS}$ is called
 {\em successful}, if there is a subprocess $\success$ of $\mathcal{P}$, {i.e.}~$\mathcal{P} = \success \PAR \mathcal{P}'$ for some $\mathcal{P}'$.
%
A state $(\mathcal{P},{\cal C})$ is called 
 \begin{itemize}
   \item {\em successful}, if $\mathcal{P}$ is successful.
   \item {\em may-convergent},  if there is some successful $(\mathcal{P}',{\cal C}')$ with $(\mathcal{P},{\cal C}) \xrightarrow{LS,*} (\mathcal{P}',{\cal C}')$. 
  
  \item {\em must-convergent},  if for all states $(\mathcal{P}',{\cal C}')$  with $(\mathcal{P},{\cal C}) \xrightarrow{LS,*} (\mathcal{P}',{\cal C}')$, the state
   $(\mathcal{P}',{\cal C}')$ is may-convergent. 
   \item {\em must-divergent} or a {\em fail}, if there is no execution leading to a successful state.
  \item  {\em may-divergent},  if for some state $(\mathcal{P}',{\cal C}')$:  $(\mathcal{P},{\cal C}) \xrightarrow{LS,*} (\mathcal{P}',{\cal C}')$, where $(\mathcal{P}',{\cal C}')$ is a fail.
\end{itemize}
A process $\mathcal{P}$ is called {\em may-convergent}, {\em must-convergent}, {\em must-divergent}, or {\em may-divergent}, resp.~iff the state $(\mathcal{P},IS)$ is 
{\em may-convergent}, {\em must-convergent}, {\em must-divergent}, or {\em may-divergent}, resp.
%

 \end{definition}
 %
An example for a reduction sequence for $k = 2$ is: 
%
 $$ (P_2\silent \PAR T_2\success, (\eempty,\eempty))  
     \xrightarrow{LS}    (\silent \PAR T_2\success,  (\eempty, \full)) 
     \xrightarrow{LS}   (\silent \PAR \success,  (\eempty, \eempty)) 
      ~~~  (\text{successful}) 
  $$    
%
The process  $P_2\silent \PAR T_2\success$ is  even must-convergent.
  
In the following, we often leave the state implicit and in abuse of notation, we 
``reduce'' processes without explicitly mentioning the state.

As in $\PISIMPLE$ we often omit the suffix, $\silent$,  for a subprocess, {i.e.}~whenever a subprocess ends with symbol $P_i$ or $T_i$ we mean the same subprocess extended by $\silent$.  

%
  

 \subsection{Correct Translations}\label{subsec:correct-translations}
 
%

We are interested in translations from
 one full  concurrent programming language with synchronous semantics into  another full imperative concurrent language with locks, 
 where the issues are expressive power and the comparison between the languages.
In order to focus considerations, we investigate this issue by considering translations from a core concurrent language (SYNCSIMPLE) with 
synchronous semantics into a core of an imperative concurrent language (LOCKSIMPLE).   

%
%
%
%
 However, even in our simple languages there are
 interesting questions, for example, whether there exists
 a correct translation 
 and how many locks are necessary for such a translation. 

 Since our analysis started top-down, we are sure that  the non-encodability results can be transferred back to larger calculi.
For discussing this, let us call the full languages $\mathrm{SYNCFULL}$ and $\mathrm{LOCKFULL}$, resp. The language $\mathrm{SYNCFULL}$ may be  the $\pi$-calculus and thus, it extends $\PISIMPLE$ by names, named channels, name restriction, sending and receiving names and replication or recursion. The language $\mathrm{LOCKFULL}$ 
may be a variant of the core language of Concurrent Haskell, where locks are extended to synchronising memory cells which have addresses (or names) and content (for instance, numbers).
The main argument why non-encodability in the small languages implies
non-encodability in the larger languages is the following: 
Suppose we have non-encodability between the small languages for $2$ locks, and  
 there  exists a correct (compositional) translation $\phi:\mathrm{SYNCFULL}\to\mathrm{LOCKFULL}$ that uses only one synchronising memory cell in $\mathrm{LOCKFULL}$.
 Then the  idea is to embed every $\PISIMPLE$-program $\mathcal{P}$ into a $\mathrm{SYNCFULL}$-program $\mathcal{P}'$
 by using only one channel, and then using the translation $\phi$ to derive a $\mathrm{LOCKFULL}$-program $\phi(\mathcal{P}')$.  
 Using this construction, we also get a translation of $!$ and $?$ into $\mathrm{LOCKFULL}$, where every ! translates into a send-prefix, and every ? into a receive-prefix. 
 The parallel-operator remains as it is. Then the correctness of $\phi$ tells us that the $\mathrm{LOCKFULL}$-program
$\phi(\mathcal{P}')$ has the same may- and must-convergencies. 
 Compositionality gives us a $\LOCKSIMPLE$-program that uses at most $2$ locks, 
  and  it has the same parallel-structure as $\mathcal{P}$, and the !,?, are translated always in the same way.  
 The result can be reduced to a $\LOCKSIMPLE$-program with at most $2$ locks, (perhaps after restricting $\phi$ \wrt~contents of messages and recursion), 
 which  contradicts the result on  small languages,
 since the reasoning holds for all $\mathcal{P}$.

 \begin{definition}
  A mapping $\tau$ from the processes of {\PISIMPLE} into processes of $\LOCKSIMPLE_{k,IS}$ is called a {\em translation}.
  \begin{itemize}
    \item $\tau$ is called {\em compositional}  iff 
         $\tau(\silent) = \silent$,  $\tau(\success) = \success$, 
         $\tau(\mathcal{P}_1 \PAR \mathcal{P}_2) = \tau(\mathcal{P}_1) \PAR \tau(\mathcal{P}_2)$; $\tau(\mathcal{U})$ does not contain the parallel operator $\PAR$ for every subprocess $\mathcal{U}$; and  
         $\tau(!\mathcal{U}) = \tau(!)\tau(\mathcal{U})$ and $\tau(?\mathcal{U}) = \tau(?)\tau(\mathcal{U})$ for every subprocess $\mathcal{U}$ 
    \item $\tau$ is called {\em correct} iff for all  \PISIMPLE-processes $P$, $P$ is may-convergent iff $\tau(P)$ is may-convergent,
       and   $P$ is must-convergent iff $\tau(P)$ is must-convergent,
  \end{itemize}
Compositional translations $\tau$ in our languages
can be identified  with the pair $(\tau(!),\tau(?))$ of strings, and we say that
$\tau$ has \emph{length} $n$, if $|\tau(!)| + |\tau(?)| = n$. 
 \end{definition}
 
 For example, a correct translation cannot map $\tau(\silent)  = \success$ since then $\silent$ is must-divergent, but $\tau(\silent)$ is must-convergent.
 Hence  $\tau(\silent)  = \silent$  and $\tau(\success)  = \success$ make sense  for correct translations.
  
%

 We show that three locks are sufficient for a correct compositional translation.
 
 \begin{theorem} \label{thm:translation-k-3-len-6}
    For $k = 3$, the translation $\tau$ with  
    $\tau(!) =  P_1T_3P_2T_1$ and  $\tau(?) = P_3T_2$   is correct for initial store $(\eempty,\full,\full)$.    
 \end{theorem}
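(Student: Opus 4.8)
The plan is to exhibit $\tau$ as a handshake protocol and then prove a tight operational correspondence between $\PISIMPLE$-communication steps and complete handshakes in $\LOCKSIMPLE_{3,(\eempty,\full,\full)}$. First I would read off the roles of the three locks from the initial store $(\eempty,\full,\full)$: lock $1$ (initially empty) is a \emph{sender mutex}, acquired by the leading $P_1$ of $\tau(!)$ and released by the trailing $T_1$; lock $3$ (initially full) carries the sender-to-receiver signal, emptied by the sender's $T_3$ and refilled by the receiver's $P_3$; and lock $2$ (initially full) carries the receiver-to-sender acknowledgement, emptied by the receiver's $T_2$ and refilled by the sender's $P_2$. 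A direct trace of a single $\tau(!)\PAR\tau(?)$ shows that the enabled steps are forced into the total order $P_1 < T_3 < P_3 < T_2 < P_2 < T_1$ (each of $P_2,P_3$ can fire only after the matching $T_2,T_3$, since locks $2$ and $3$ start full), so one handshake deterministically returns the store to $(\eempty,\full,\full)$ and simultaneously exposes both continuations.

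The core of the argument is an \emph{invariant} characterising the states reachable from $\tau(\mathcal{P})$. I would prove that every such state is either \emph{clean} --- the store is $(\eempty,\full,\full)$ and the process equals $\tau(\mathcal{Q})$ for some $\PISIMPLE$-process $\mathcal{Q}$ --- or \emph{dirty}, with exactly one handshake in progress and store in $\{(\full,\full,\full),(\full,\full,\eempty),(\full,\eempty,\full)\}$. The key sub-claims are: (i) lock $1$ acts as a mutex, so at most one sender is strictly between its $P_1$ and $T_1$; and (ii) because lock $3$ is full except in the short window after a committed sender's $T_3$, at most one receiver is strictly between its $P_3$ and $T_2$, and only in response to that sender. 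Enumerating the handshake stages and checking the enabled steps at each (other senders block at $P_1$, idle receivers block at $P_3$, and any newly exposed continuation blocks at $P_1$ or $P_3$) shows that once a sender performs $P_1$ the execution is forced along the protocol, the only remaining nondeterminism being the choice of responding receiver at the $P_3$ step.

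With the invariant in hand I would establish the forward simulation (call it Claim~A): from a clean state $\tau(\mathcal{Q})$, choosing an active sender and a responding receiver and running the forced handshake yields exactly $\tau(\mathcal{Q}')$, where $\mathcal{Q} \xrightarrow{\PIS} \mathcal{Q}'$ is the $\PISIMPLE$-step synchronising that send/receive pair; conversely every clean state reached from $\tau(\mathcal{P})$ is $\tau(\mathcal{Q})$ with $\mathcal{P} \xrightarrow{\PIS,*} \mathcal{Q}$. From this correctness follows. All $\LOCKSIMPLE$-executions terminate (each step deletes one prefix symbol), and $\success$ is produced only by $\tau(\success)=\success$, so a reachable state is successful iff the clean state it is heading toward contains $\success$; this settles both may-directions at once. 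For must-convergence the only extra case is a \emph{stuck-committed} dirty state (a sender blocked at $P_2$ after $P_1T_3$ with no receiver available): this arises only from a clean $\tau(\mathcal{Q})$ whose $\mathcal{Q}$ has no $?$-subprocess and is therefore $\PISIMPLE$-terminal, so if $\mathcal{P}$ is must-convergent such a reachable terminal $\mathcal{Q}$ must itself be successful; then $\success \in \tau(\mathcal{Q})$ survives the commit and the deadlock is successful, preserving must-convergence of $\tau(\mathcal{P})$. The reflecting directions follow symmetrically from Claim~A.

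I expect the main obstacle to be the rigorous proof of the invariant --- in particular ruling out unexpected interleavings, namely showing that the lock-$1$ mutex genuinely prevents two overlapping handshakes and that no dirty state can escape the protocol's forced path. This is a finite but somewhat tedious case analysis over prefix positions and the three reachable dirty stores. A secondary subtlety is the $\success$-matching for dirty states: because a responding receiver's continuation may already be exposed before the handshake finishes, each dirty state must be matched against the clean \emph{target} $\mathcal{Q}'$ it is heading toward rather than the clean \emph{source} $\mathcal{Q}$ it came from.
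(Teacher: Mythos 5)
Your proposal is correct and follows essentially the same route as the paper: the paper's proof also traces the forced handshake order $P_1, T_3, P_3, T_2, P_2, T_1$ with lock $1$ acting as a mutex, concludes that every execution of the translated process decomposes into complete handshakes (each retranslatable to one $\PISIMPLE$-communication) plus possibly one incomplete handshake stuck at $P_2$ when no receiver is available, and derives the four may/must implications from that. Your clean/dirty invariant is just a slightly more explicit packaging of the same argument.
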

 \begin{proof}
We give a sketch (the full proof can be found in \cite{schmidt-schauss-sabel:21-conclock-ext}): 
 A communication starts with executing $P_1$ of  $\tau(!) =  P_1T_3P_2T_1$, leaving the storage $(\full,\full,\full)$. 
 Then no other  sequence $\tau(!),\tau(?)$ in parallel processes can be executed.  Then $T_3$ is executed,  leaving the storage $(\full,\full,\eempty)$.  
 The next step is  that one process with  $\tau(?) = P_3T_2$  may start, and   $P_3$ is executed,  leaving the storage $(\full,\full,\full)$. 
 Now $T_2$ is executed, and this is the only possibility. the storage  is then $(\full,\eempty,\full)$.   Again, the only possibility is now $P_2$ from $\tau(!)$ 
  and the storage $(\full,\full,\full)$. 
 The last step is executing  $T_1$, which restores the initial storage $(\eempty,\full,\full)$. 
 
 This is the only execution possibility of  $\tau(!)$ and $\tau(?)$, hence it can be retranslated into an interaction communication of a single $!$ and a single $?$.
%
\ignore{ Ist im Appendix: 

Let $\mathcal{P}{i,j}$ be translated processes, {i.e.}~$\mathcal{P}_{i,j} = \tau(\mathcal{P}_{i,j}')$ for some $\PISIMPLE$-process $\mathcal{P}_{i,j}'$.
Let $\mathcal{P}_3 = \tau(\mathcal{P}_3')$ where $\mathcal{P}_3'$ is the translation of a (perhaps empty) multiset consisting of $\silent$ and/or $\success$.
Then every $\PISIMPLE$-process can be represented as a process
of the form
$$!\mathcal{P}_{1,1}' \PAR !\mathcal{P}_{1,2}' \PAR \ldots \PAR !\mathcal{P}_{1,n}' \PAR ?\mathcal{P}_{2,1}' \PAR \ldots \PAR ?\mathcal{P}_{2,m}' \PAR \mathcal{P}_3'$$
for some $i \geq 0, j \geq 0$.

Now assume that $i > 0, j > 0$, and we inspect the execution of the translated process.
For 
$$(\tau(!)\mathcal{P}_{1,1} \PAR \tau(!)\mathcal{P}_{1,2} \PAR \ldots \PAR \tau(!)\mathcal{P}_{1,n} \PAR \tau(?)\mathcal{P}_{2,1} \PAR \ldots\PAR \tau(?)\mathcal{P}_{2,m}
\PAR \mathcal{P}_3,(\eempty,\full,\full))$$ 
we first observe that the first reduction step must be 
a $P_1$ from some  $\tau(!)\mathcal{P}_{1,i}$,
since $\tau(?)$ starts with $P_3$.
W.l.o.g. we choose $i=1$ and have
$$
\begin{array}{ll}
&(P_1T_3P_2T_1\mathcal{P}_{1,1} \PAR \tau(!)\mathcal{P}_{1,2} \PAR \ldots \PAR \tau(!)\mathcal{P}_{1,n} \PAR \tau(?)\mathcal{P}_{2,1} \PAR \ldots\PAR \tau(?)\mathcal{P}_{2,m},(\eempty,\full,\full))
\\
\xrightarrow{LS}
&
(T_3P_2T_1\mathcal{P}_{1,1} \PAR \tau(!)\mathcal{P}_{1,2} \PAR \ldots \PAR \tau(!)\mathcal{P}_{1,n} \PAR \tau(?)\mathcal{P}_{2,1} \PAR \ldots\PAR \tau(?)\mathcal{P}_{2,m},(\full,\full,\full))
\end{array}
$$ 
Now all processes  $\tau(!)\mathcal{P}_{1,j}$ for $j > 1$ are blocked
(since they want to perform $P_1$), until 
$T_3P_2T_1\mathcal{P}_{1,1}$ is reduced to $\mathcal{P}_{1,1}$, since $T_1$ is the last operation of $\tau(!)$, and $\tau(?)$ 
does not contain $P_1$ or $T_1$.
For the next step, only the reduction
$$\begin{array}{ll}
& (T_3P_2T_1\mathcal{P}_{1,1} \PAR \tau(!)\mathcal{P}_{1,2} \PAR \ldots \PAR \tau(!)\mathcal{P}_{1,n} \PAR \tau(?)\mathcal{P}_{2,1} \PAR \ldots\PAR \tau(?)\mathcal{P}_{2,m},(\full,\full,\full))
\\
\xrightarrow{LS} &
(P_2T_1\mathcal{P}_{1,1} \PAR \tau(!)\mathcal{P}_{1,2} \PAR \ldots \PAR \tau(!)\mathcal{P}_{1,n} \PAR \tau(?)\mathcal{P}_{2,1} \PAR \ldots\PAR \tau(?)\mathcal{P}_{2,m},(\full,\full,\eempty))
  \end{array}
$$ 
is possible. Now one of the processes
$\tau(?)\mathcal{P}_{2,i}$ must be reduced, since all other processes are blocked. W.l.o.g. we choose $i=1$, and thus have
$$
\begin{array}{ll}
&(P_2T_1\mathcal{P}_{1,1} \PAR \tau(!)\mathcal{P}_{1,2} \PAR \ldots \PAR \tau(!)\mathcal{P}_{1,n} \PAR P_3T_2\mathcal{P}_{2,1} \PAR \ldots\PAR \tau(?)\mathcal{P}_{2,m}\PAR \mathcal{P}_3,(\full,\full,\eempty))
\\
\xrightarrow{LS}
&(P_2T_1\mathcal{P}_{1,1} \PAR \tau(!)\mathcal{P}_{1,2} \PAR \ldots \PAR \tau(!)\mathcal{P}_{1,n} \PAR T_2\mathcal{P}_{2,1} \PAR \ldots\PAR \tau(?)\mathcal{P}_{2,m}\PAR \mathcal{P}_3,(\full,\full,\full))
\end{array}
$$
Now the process must reduce as follows
$$
\begin{array}{ll}
&(P_2T_1\mathcal{P}_{1,1} \PAR \tau(!)\mathcal{P}_{1,2} \PAR \ldots \PAR \tau(!)\mathcal{P}_{1,n} \PAR T_2\mathcal{P}_{2,1} \PAR \ldots\PAR \tau(?)\mathcal{P}_{2,m}\PAR \mathcal{P}_3,(\full,\full,\full))
\\
\xrightarrow{LS}
&(P_2T_1\mathcal{P}_{1,1} \PAR \tau(!)\mathcal{P}_{1,2} \PAR \ldots \PAR \tau(!)\mathcal{P}_{1,n} \PAR \mathcal{P}_{2,1} \PAR \ldots\PAR \tau(?)\mathcal{P}_{2,m}\PAR \mathcal{P}_3,(\full,\eempty,\full))
\\
\xrightarrow{LS}&
(T_1\mathcal{P}_{1,1} \PAR \tau(!)\mathcal{P}_{1,2} \PAR \ldots \PAR \tau(!)\mathcal{P}_{1,n} \PAR \mathcal{P}_{2,1} \PAR \ldots\PAR \tau(?)\mathcal{P}_{2,m}\PAR \mathcal{P}_3,(\full,\full,\full))
\\
\xrightarrow{LS}&
(\mathcal{P}_{1,1} \PAR \tau(!)\mathcal{P}_{1,2} \PAR \ldots \PAR \tau(!)\mathcal{P}_{1,n} \PAR \mathcal{P}_{2,1} \PAR \ldots\PAR \tau(?)\mathcal{P}_{2,m} \PAR \mathcal{P}_3,(\eempty,\full,\full))
\end{array}
$$

Note that also the last two steps are the only possibility, 
since $\mathcal{P}_{1,2}$ may only be $P_1, P_3, \success,$ or $\silent$.

This reasoning also shows that  $\tau(!)\mathcal{P}_{1,1}$ gets blocked, before reaching $\mathcal{P}_{1,1}$ if there is no $\tau(?)\mathcal{P}_{2,j}$,
and the same holds for $\tau(?)\mathcal{P}_{2,1}$ if there is no 
$\tau(1).\mathcal{P}_{1,j}$.

Now we show four implications: Let $\mathcal{P}$ be a $\PISIMPLE$-process.
\begin{enumerate}

 \item\label{item1} $\mathcal{P}\maycon \implies \tau(\mathcal{P})\maycon$:  
 If $\mathcal{P}$ is may-convergent, then there is a reduction sequence $\mathcal{P} \xrightarrow{\PIS,*} \success \PAR \mathcal{P}'$. 
 With the above translated sequences for a single communication step, we clearly can construct a reduction sequence $(\tau(\mathcal{P}),(\eempty,\full,\full)) \xrightarrow{LS,*} (\tau(\success) \PAR \tau(\mathcal{P}'), (\eempty,\full,\full)$ in $\LOCKSIMPLE$.
 Thus $\tau(\mathcal{P})\maycon$ in this case.
 \item\label{item2} $\tau(\mathcal{P})\maycon \implies \mathcal{P}\maycon$:
 Let $\tau(\mathcal{P}) \xrightarrow{LS,*} \mathcal{P}'$ where $\mathcal{P}'$ is successful.
By the reasoning from above (on the determinism of the reduction possibilities), we can assign each reduction step 
in $\tau(\mathcal{P}) \xrightarrow{LS,*} \mathcal{P}'$  to an occurrence of $?$ or $!$ in $\mathcal{P}$, and we can figure out which $?$-occurrence communicates with which $!$-occurrence. Thus it is quite clear that 
we can construct a sequence
$\mathcal{P} \xrightarrow{\PIS,*} \mathcal{P}_0$
such that $\tau(\mathcal{P}) \xrightarrow{LS,*} \tau(\mathcal{P}_0) \xrightarrow{LS,*} \mathcal{P}'$, where $\tau(\mathcal{P}_0) \xrightarrow{LS,*} \mathcal{P}'$ is empty or an incomplete translated reduction sequence of
$\tau(!)$ and $\tau(?)$. 

We consider two cases: As a first case assume that $\tau(\mathcal{P}_0) \xrightarrow{LS,*} \mathcal{P}'$ can be completed, {i.e.}~there exists a $\mathcal{P}_1$ such that
$\tau(\mathcal{P}) \xrightarrow{LS,*} \mathcal{P}' \xrightarrow{LS,*} \tau(\mathcal{P}_1)$
and $\mathcal{P} \xrightarrow{\PIS,*} \mathcal{P}_1$.
We verify that reducing successful processes 
does not change successfulness and thus $\tau(\mathcal{P}_1)$ is successful, since $\mathcal{P}'$ is successful. 
Clearly, $\mathcal{P}_1$ must also be successful, and thus $\tau(\mathcal{P})\maycon$.

As a second case, assume that  $\tau(\mathcal{P}_0) \xrightarrow{LS,*} \mathcal{P}'$ cannot be completed, then this can only be the case, since it started to
evaluate a $\tau(!)$ but there is no toplevel $\tau(?)$ in $\tau(\mathcal{P}_0)$. In this case successfulness of $\mathcal{P}'$ implies successfulness of $\tau(\mathcal{P}_0)$,
since the $\success$-symbol cannot be below the evaluated $\mathcal{!}$, and since there is no toplevel $?$ in $\mathcal{P}_0$. Since $\tau(\mathcal{P}_0)$ is successful, $\mathcal{P}_0$ is also successful and we have $\mathcal{P}\maycon$.

\item $\mathcal{P}{\uparrow} \implies \tau(\mathcal{P}){\uparrow}$:
Let $\mathcal{P} \xrightarrow{\PIS,*} \mathcal{P}'$ where $\mathcal{P}'$ is must-divergent.
Then $\tau(\mathcal{P}) \xrightarrow{LS,*} \tau(\mathcal{P}')$ by translating each
communication step. From item~\ref{item2}
we have $\neg \mathcal{P}'\maycon \implies \neg \tau(\mathcal{P}')\maycon$ and 
thus $\tau(\mathcal{P}')$ is must-divergent, and hence $\tau(\mathcal{P}){\uparrow}$.
\item $\tau(\mathcal{P}){\uparrow} \implies \mathcal{P}{\uparrow}$: 
Let $\tau(\mathcal{P}) \xrightarrow{LS,*} \mathcal{P}'$ where $\mathcal{P}'$ is must-divergent.
Then again we can assign each step to an occurrence of $?$ and $!$ in $\mathcal{P}$, and also can find a process $\mathcal{P}_0$ such that
$\mathcal{P} \xrightarrow{\PIS,*} \mathcal{P}_0$, 
$\tau(\mathcal{P}) \xrightarrow{LS,*} \tau(\mathcal{P}_0) \xrightarrow{LS,*} \mathcal{P}'$
where $\tau(\mathcal{P}_0) \xrightarrow{LS,*} \mathcal{P}'$ is empty or an incomplete translated reduction sequence of
$\tau(!)$ and $\tau(?)$. 
Again we consider two cases:
The sequence can be completed, {i.e.}~$\tau(\mathcal{P}_0) \xrightarrow{LS,*} \mathcal{P}' \xrightarrow{LS,*} \tau(\mathcal{P}_1)$
for some $\mathcal{P}_1$ such that $\mathcal{P}_0 \xrightarrow{\PIS,*} \mathcal{P}_1$. Since $\mathcal{P}'{\Uparrow}$ also $\tau(\mathcal{P}_1){\Uparrow}$,
and by item~\ref{item1}, we have $\mathcal{P}_1{\Uparrow}$ and thus
$\mathcal{P}{\uparrow}$.

If the sequence $\tau(\mathcal{P}_0) \xrightarrow{LS,*} \mathcal{P}'$ cannot be completed, then as before
the sequence must be an incomplete evaluation of $\tau(!)$ and there is no top-level $?$ in $\mathcal{P}_0$. Then $\tau(\mathcal{P}_0)$ must also be must-divergent, 
since no more ``encoded'' communication between any $!$ and $?$ is possible. From item~\ref{item1} we have that $\mathcal{P}_0$ is also must-divergent and thus the sequence 
$\mathcal{P} \xrightarrow{\PIS,*} \mathcal{P}_0$ shows that $\mathcal{P}{\uparrow}$ holds.
\end{enumerate}
The four implications show the correctness of $\tau$.
\msstodo{noch nicht gecheckt}
\endeDesAppendix   }
  \end{proof}
  
  There are also other correct compositional translations for $k = 3$: 
  An example is a compositional correct translation $\tau$ of length $8$, detected by an automated search,
  with
  $\tau(!) =  P_2P_1T_3P_1T_1T_2$ and  $\tau(?) = P_3T_1$   and with initial store $(\eempty,\eempty,\full)$.    
 
 The observation is that  the communication is completely protected by using $P_2$ as a mutex, which is similar to the translation of length 6 
   (see Theorem \ref{thm:translation-k-3-len-6})
       
 \subsection{\texorpdfstring{Blocking Variants of $\LOCKSIMPLE$}{Blocking Variants of LOCKSIMPLE}}
We choose for our locks, that $P_i$ blocks, but $T_i$ never blocks.
However, also other choices are possible.
Variants of $\LOCKSIMPLE$ where for every $i$ either $P_i$ blocks 
on a full lock, but $T_i$ is non-blocking, or $T_i$ blocks on an empty lock, but $P_i$ is non-blocking, do not lead to really new problems:
  In  \cite{schmidt-schauss-sabel:21-conclock-ext} we show that all those variants are equivalent to the previously defined language where for all $i$: 
 $P_i$ is blocking, but $T_i$ is non-blocking. 
 This is possible since we take into account any initial store and thus the main argument of the equivalence is that we can change 
 the initial store for every $i$ by switching the role of $P_i, T_i$ and at the same time switching the initial store
 for $i$ from $\full$ to $\eempty$ and vice versa.
 Thus this extension does not increase the number of (really) different languages for a fixed $k$. 
%
However, the variant where $P_i$ blocks for a full lock \emph{and} $T_i$ blocks for an empty lock for all $i$ (which is related to an implementation using the MVars in Concurrent Haskell) appears to be different from our $\LOCKSIMPLE$ languages.
 There are results on possibility and impossibility  of correct translations from $\PISIMPLE$ into  a further restricted variant of $\LOCKSIMPLE$  
 \cite{schmidt-schauss-sabel-wpte:2020}. A deeper investigation in these languages is future work.  
%
%
 
 \section{One Lock is Insufficient for any Initialization}\label{section:k=1-impossible}
  We show that there is no correct (compositional) translation into $\LOCKSIMPLE_{1,IS}$, the language with one lock,  for any initial storage, {i.e.}~for initial storage $\full$ and initial storage $\eempty$.
 
 \begin{lemma}\label{lemma:storage-1lock-PP}
   Let $\tau$ be a  correct  translation $\PISIMPLE~\to \LOCKSIMPLE_{1,\mathit{IS}}$.
   Then $\tau(!)$ as well as $\tau(?)$ either start  with $P_1$ or have  a subsequence $P_1P_1$.  
  \end{lemma}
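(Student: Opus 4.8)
The plan is to use a single ``test'' subprocess whose source-level behaviour is completely forced, and to read off the required structure of the strings $\tau(!)$ and $\tau(?)$ from the demand that their translations behave identically. Throughout, note first that by compositionality $\tau(!)$ is a pure string over $\{P_1,T_1\}$ (the base cases $\success,\silent$ of the subprocess grammar can only occur at the end, so the ``prefix'' $\tau(!)$ contains no terminal symbol), and likewise for $\tau(?)$.

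First I would observe that the $\PISIMPLE$-process $!\success$ is a fail: the only rule $\xrightarrow{\PIS}$ needs a matching $?$ in parallel, so $!\success$ is stuck, and it is not of the form $\success \PAR {\cal P}$, hence it is must-divergent and in particular not may-convergent. By correctness of $\tau$, the translated process $\tau(!\success) = \tau(!)\tau(\success) = \tau(!)\success$ is therefore not may-convergent in $\LOCKSIMPLE_{1,IS}$. The crucial point is that $\tau(!)\success$ is a \emph{single} subprocess running over the one lock, so its execution is deterministic: at each step the head symbol is processed, where $T_1$ always fires (emptying the lock) while $P_1$ fires only if the lock is currently $\eempty$ (making it $\full$) and otherwise blocks. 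Consequently the run either reaches $\success$ (may-convergent) or blocks at some $P_1$ facing a $\full$ lock (not may-convergent). Since we have just shown it is not may-convergent, the run \emph{must} block at some $P_1$.

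Next I analyse when a $P_1$ along this run can face a $\full$ lock, by tracking the lock value immediately before each operation. The lock is $\full$ before the operation at position $j$ exactly when $j=1$ and $IS=\full$, or $j>1$ and the operation at position $j-1$ was a (successful) $P_1$. Turning this around: suppose $\tau(!)$ neither starts with $P_1$ nor contains $P_1P_1$ as a factor. Then every $P_1$ occurs at some position $j>1$ whose predecessor, not being $P_1$, must be $T_1$; since $T_1$ empties the lock, each such $P_1$ meets an $\eempty$ lock and fires. Hence the deterministic run never blocks and reaches $\success$, so $\tau(!)\success$ would be may-convergent --- a contradiction. This forces $\tau(!)$ to start with $P_1$ or to contain the factor $P_1P_1$. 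The identical argument applied to the test process $?\success$ (which is also a fail in $\PISIMPLE$) yields the same conclusion for $\tau(?)$, proving the lemma.

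The step requiring the most care is the bookkeeping of the lock state along the deterministic single-subprocess run: one must verify that, in the absence of both a leading $P_1$ and an adjacent pair $P_1P_1$, no $P_1$ ever encounters a $\full$ lock, and that this holds \emph{independently of the initial store} $IS$ (the only place $IS$ could matter is position $1$, which is ruled out precisely because the string does not start with $P_1$). Everything else is a direct consequence of correctness together with the determinism of executing one subprocess over a single lock.
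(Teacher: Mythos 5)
Your proposal is correct and follows essentially the same route as the paper: the paper's proof also uses the must-divergent test processes $!\success$ and $?\success$ and argues by contraposition that a string neither starting with $P_1$ nor containing the factor $P_1P_1$ "can be executed without any wait", hence reaches $\success$, contradicting correctness. Your version merely makes explicit the lock-state bookkeeping (each $P_1$ is preceded by a $T_1$ that empties the lock) that the paper leaves implicit.
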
  
  \begin{proof}
      Consider the processes $!\success$ and  $?\success$  which are both must-divergent. 
      If $\tau(!)$ does not satisfy the condition, then the process $\tau(!\success)$ can be executed without any wait
       and is successful.   
      The same for $\tau(?\success)$. However, this is a contradiction to correctness.
  \end{proof}


\begin{theorem}\label{thm:storage-1lock-impossible}
There is no correct translation  $\PISIMPLE~\to \LOCKSIMPLE_{1,IS}$.
\end{theorem}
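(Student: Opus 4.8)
The plan is to assume, for contradiction, that a correct translation $\tau$ into $\LOCKSIMPLE_{1,IS}$ exists for some initial store $IS \in \{\eempty,\full\}$, and to break the requirement that must-convergence is preserved. Write $a = \tau(!)$ and $b = \tau(?)$, both strings over $\{P_1,T_1\}$. The starting point is the content behind \cref{lemma:storage-1lock-PP}: since $!\success$ and $?\success$ are must-divergent in $\PISIMPLE$, correctness forces $\tau(!\success) = a\success$ and $\tau(?\success) = b\success$ to be must-divergent in $\LOCKSIMPLE_{1,IS}$. With a single lock the only way a lone subprocess can fail to run to completion is to block on a $P_1$ while the lock is $\full$ (a $T_1$ never blocks). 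Hence both $a$ and $b$, run alone from $IS$, eventually reach a $P_1$ with the lock $\full$ and block there.

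The test process is $!\success \PAR ?\silent$, which is must-convergent in $\PISIMPLE$ because its unique communication step yields $\success$. By correctness, $\tau(!\success \PAR ?\silent) = a\success \PAR b$ must then be must-convergent in $\LOCKSIMPLE_{1,IS}$. I aim to contradict this by exhibiting a reachable \emph{deadlock}: a state in which both components are blocked on a $P_1$ with the lock $\full$ and no $\success$ is exposed, so that $a\success \PAR b$ is may-divergent.

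To construct the deadlock I classify each of $a,b$ by how it blocks alone from $IS$, looking at the last moment the lock is $\eempty$ before the blocking $P_1$. In scenario (I) the lock is $\eempty$ at some point, so the block occurs at a substring $P_1P_1$, and the prefix $u_s$ up to that last-empty moment runs from an empty lock without blocking and leaves it $\eempty$. In scenario (II) the lock is never $\eempty$, which can only happen when $IS=\full$ and the string begins with $P_1$, so it blocks immediately. The schedule is: first execute the $u_s$-prefixes of the scenario-(I) components (the first from $IS$, any subsequent one from the now-empty lock; these still run without blocking and restore the lock to $\eempty$, using that a scenario-(I) string with $IS=\full$ must begin with $T_1$), leaving the lock $\eempty$ with each component poised at a $P_1$. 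If both components are scenario (II), the lock is still $\full$ and both are already blocked on their first $P_1$ — an immediate deadlock. Otherwise I advance one scenario-(I) component across the first $P_1$ of its $P_1P_1$, turning the lock $\full$; that component is now blocked on its second $P_1$, and the other component is blocked on the $P_1$ it was poised at. In every case both components are blocked on a $\full$ lock and $\success$ is not exposed, contradicting must-convergence of $a\success \PAR b$. Since this covers both $IS=\eempty$ (where scenario (II) cannot arise) and $IS=\full$, no correct translation exists.

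The main obstacle is the coordination through the single shared lock: unlike in a setting with private resources, running one component's prefix changes the lock state seen by the other, so the two components cannot be positioned independently. The scenario (I)/(II) case distinction — together with verifying that the chosen prefixes run without blocking from an empty lock and return it to $\eempty$, which requires noting that a scenario-(I) string with $IS=\full$ starts with $T_1$ — is exactly what makes the simultaneous block reachable, and getting these prefix computations right uniformly for both initial stores is the delicate part.
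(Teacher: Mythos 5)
Your proposal is correct and follows essentially the same route as the paper: it uses the must-divergence of $!\success$ and $?\success$ to show that each of $\tau(!)$, $\tau(?)$ blocks solo at a $P_1$ (the content of \cref{lemma:storage-1lock-PP}), and then schedules the translation of a must-convergent two-component process into a deadlock by positioning each component just before its blocking $P_1$ and letting one of them fill the lock. The only difference is organizational — your scenario (I)/(II) classification by whether the lock is ever empty during the solo run uniformly covers the paper's separate case splits on $IS$ and on which strings start with $P_1$ — and your checks that the prefixes $u_s$ re-run correctly from an empty lock are exactly the details the paper's argument also relies on.
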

\begin{proof}
Let $\tau$ be a correct translation. 
We first consider the case that the initial storage is $\eempty$. 
Then from \cref{lemma:storage-1lock-PP} we derive that $\tau(!)$ as well as $\tau(?)$ have  a subsequence $P_1P_1$ or start with $P_1$. 
since $P_1$ as a prefix is executable (and similar as in the proof of \cref{lemma:storage-1lock-PP},  the processes $!\success$ and $?\success$ can be used as examples to refute the correctness of $\tau$).
Consider the process $\tau(!\success \PAR ?\success)$, which is must-convergent. 
First, reduce $\tau(!\success)$ until exactly before the first occurrence of $P_1P_1$. 
Then reduce $\tau(?\success)$. Since the reduction starts with  $C_1 = \eempty$, it will block after executing the first $P_1$ of the leftmost subsequence $P_1P_1$ (or earlier). Then $C_1 = \full$, and we have a deadlock. This is a contradiction to correctness of $\tau$.
  
Now we consider the case that the initial store is $\full$.
Then \cref{lemma:storage-1lock-PP} shows that $\tau(!)$ and $\tau(?)$ contain a subsequence $P_1P_1$ or start with $P_1$.
We again use the must-convergent process $\tau(!\success \PAR ?\success)$.
If both $\tau(!)$ and $\tau(?)$ start with $P_1$, then there is an initial deadlock.
Suppose that neither $\tau(!)$ nor $\tau(?)$  do start with a $P_1$, then they both start with a $T_1$, 
and have a subsequence $P_1P_1$. Let us consider the leftmost such subsequence for $\tau(!)$ 
as well as for $\tau(?)$. Construct the following execution for $\tau(!\success \PAR ?\success)$: 
First $\tau(!)$ until it blocks at the second $P_1$ of the sequence $P_1P_1$, then the 
execution of $\tau(?)$ until the second $P_1$ of the sequence $P_1P_1$. Then we have a deadlock, which is impossible.
 
If $\tau(!)$ starts with a $P_1$, but not $\tau(?)$, then there is a leftmost sequence $P_1P_1$ of $\tau(?)$. 
Execute $\tau(?)$ until it is blocked at $P_1$. Then we reach a deadlock.  This is a contradiction. 
\end{proof}

 \section{General Properties for at Least Two Locks}\label{sec:general-props}  
 
In this section, we consider compositional translations  $\PISIMPLE~\to \LOCKSIMPLE_{k,IS}$ with $k \geq 2$ and prove several properties of
correct compositional translations that will help us later to show that
$k = 2$ is impossible.
We also introduce the notion of a blocking type for a translation. The idea of this notion is recording how $\tau$ establishes that executing $\tau(!)$ in the process $\tau(!\success)$
blocks and why executing $\tau(?)$ in the process $\tau(?\success)$ blocks.
Both processes must block if $\tau$ is correct, since the 
the processes $!\success$ and $?\success$ are both blocking (and not successful) in 
$\PISIMPLE$.

Below this notion helps to structure the arguments for different cases.
 
 \begin{lemma}\label{lemma:reduction-uses-all-symbols} Let $\tau$ be a correct translation from   $\PISIMPLE~\to \LOCKSIMPLE_{k,IS}$  for $k \geq 1$. 
 Then there is a reduction sequence of 
  $\tau(!) \PAR \tau(?)$ that executes every symbol in $\tau(!) \PAR \tau(?)$.
 \end{lemma}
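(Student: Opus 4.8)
The plan is to reduce the claim to the must-convergence of a single translated process and then to analyse the terminal states of its executions. First I would observe that in $\PISIMPLE$ the process $!\success \PAR ?\success$ is must-convergent: its only step is $!\success \PAR ?\success \xrightarrow{\PIS} \success \PAR \success$, which is successful. By compositionality $\tau(!\success \PAR ?\success) = \tau(!)\success \PAR \tau(?)\success$, so by correctness of $\tau$ this process is must-convergent in $\LOCKSIMPLE_{k,IS}$. Since the trailing occurrences of $\success$ carry no lock operation and are inert, an execution of $\tau(!)\success \PAR \tau(?)\success$ that reduces it all the way to $\success \PAR \success$ is exactly an execution of $\tau(!) \PAR \tau(?)$ that executes every symbol; hence it suffices to exhibit one such complete execution.

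Next I would record the structural facts about $\LOCKSIMPLE_{k,IS}$ that drive the argument: every execution is finite, since each step consumes one leading symbol; a $T_i$-step is never blocked; and a state is terminal precisely when every non-trivial subprocess is blocked at a leading $P_i$ on a full lock. Combined with finiteness, must-convergence of $\tau(!)\success \PAR \tau(?)\success$ yields that every maximal execution ends in a successful terminal state, and that any reachable terminal state is successful (being may-convergent and terminal, it must already contain $\success$).

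Now I would take an execution $E$ that consumes the maximal possible number of lock operations; it exists by finiteness, and its endpoint $(\mathcal{Q},\mathcal{C})$ must be terminal, since otherwise one further step would consume one more symbol. By the previous observation $(\mathcal{Q},\mathcal{C})$ is successful. If $\mathcal{Q} = \success \PAR \success$ we are done. The only remaining possibility is that exactly one of the two subprocesses has been reduced to $\success$ --- say $\tau(!)$ is fully consumed --- while the remainder $P_iR'$ of $\tau(?)$ is blocked with $C_i = \full$.

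The main obstacle is to rule out this last configuration. My approach would be a local surgery on $E$: let $\sigma$ be the last $P_i$-step of $E$ (after which $C_i$ stays $\full$ until the end). This $\sigma$ must belong to $\tau(!)$, because the leading $P_i$ of the remainder of $\tau(?)$ is never executed, and immediately before $\sigma$ the lock is empty, $C_i = \eempty$. The idea is to reschedule $E$ so that the entire prefix of $\tau(?)$ preceding its blocking $P_i$ is already consumed by that moment, and then to fire that $P_i$ on the empty lock instead of (or just before) $\sigma$, consuming one more symbol and contradicting the maximality of $E$. The hard part will be justifying this reordering: one has to show that the prefix of $\tau(?)$ can indeed be advanced up to its blocking $P_i$ within the available interleavings, i.e.~that the rescheduling respects all intermediate lock states and that no other block intervenes in between. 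Establishing this commutation/rescheduling property is where the real work lies, and it is precisely here that must-convergence is used once more --- to guarantee that any alternative terminal state reached along the way cannot be a non-successful deadlock.
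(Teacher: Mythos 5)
Your reduction runs into a genuine gap exactly where you flag it, and the gap is the entire content of the lemma. By testing with $!\success \PAR ?\success$ you only get that every maximal execution of $\tau(!)\success \PAR \tau(?)\success$ ends \emph{successfully}, and success is already witnessed once a single subprocess reaches its trailing $\success$; so the problematic terminal state ``$\tau(!)$ fully consumed, $\tau(?)$ blocked at a leading $P_i$ on a full lock'' is perfectly compatible with must-convergence, and nothing you have derived excludes it. Your proposed repair --- take a length-maximal execution, locate its last $P_i$-step $\sigma$, and reschedule so that the blocked $P_i$ of $\tau(?)$ fires on the empty lock just before $\sigma$ --- is not a routine commutation. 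The steps of the $\tau(?)$-prefix that you want to pull forward may occur after $\sigma$ in $E$, and moving them earlier changes the lock states seen by the intervening steps of $\tau(!)$: a $P_j$-step of $\tau(!)$ that was enabled in $E$ can become blocked in the rescheduled interleaving (enabledness of put depends on the current store, and take-steps overwrite it). You give no argument that such a reordering exists, and must-convergence alone does not supply one, since it only speaks about states actually reachable from the initial configuration, not about the existence of a particular interleaving. As written, the proof does not go through.

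The paper avoids this entirely by a different choice of test processes. It first considers $!\success \PAR ?\silent$, which is must-convergent; since the \emph{only} occurrence of $\success$ in $\tau(!)\success \PAR \tau(?)\silent$ sits at the end of the $\tau(!)$-subprocess, any execution reaching a successful state is forced to consume \emph{all} of $\tau(!)$ --- there is no ``other $\success$'' to fall back on. That same symbol sequence is then replayed on $\tau(!)\silent \PAR \tau(?)\success$ (the lock dynamics do not see the trailing $\silent$/$\success$), and must-convergence of $!\silent \PAR ?\success$ lets one extend it to a successful state, which now forces consumption of all of $\tau(?)$. If you want to salvage your argument, replace $!\success \PAR ?\success$ by these two asymmetric tests; the rescheduling step then becomes unnecessary.
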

 \begin{proof}  
 First, consider $\tau(!\success) \PAR \tau(?\silent)$, which is must-convergent (since $!\success \PAR ?\silent$ is must-convergent), and hence there is a reduction sequence of $\tau(!) \PAR \tau(?)$ consuming 
 at least all symbols in $\tau(!)$. The same sequence can be used as a partial reduction sequence of $\tau(!\silent) \PAR \tau(?\success)$, and since  this process is must-convergent (since $!\silent \PAR?\success$ is must-convergent), 
 the sequence will also consume all symbols of $\tau(?\success)$.
 \end{proof}
 
 
 The notation $\#(S,r)$ means the number of occurrences of the symbol $S$ in the string $r$.
 \begin{proposition}\label{prop:estimation-P-leq-T}
 Let $\tau:\PISIMPLE~\to \LOCKSIMPLE_{k,IS}$ for $k \geq 2$ be a correct translation. 
 Then for every $1 \leq i \leq k$:  $\#(P_i,\tau(!)) + \#(P_i,\tau(?)) \leq \#(T_i,\tau(!)) + \#(T_i,\tau(?))$. 
 \end{proposition}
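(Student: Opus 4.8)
The plan is to argue lock by lock: fix an index $i$, abbreviate $n_P := \#(P_i,\tau(!)) + \#(P_i,\tau(?))$ and $n_T := \#(T_i,\tau(!)) + \#(T_i,\tau(?))$, and track the \emph{fullness} $\phi \in \{0,1\}$ of cell $C_i$ (with $\phi(\full)=1$ and $\phi(\eempty)=0$) along executions. The engine is a conservation law for a single lock: a firing $P_i$ requires $\phi = 0$ and raises it, contributing $\Delta\phi = +1$, whereas a firing $T_i$ sets $\phi$ to $0$, contributing $\Delta\phi = -1$ when the cell was full and $\Delta\phi = 0$ otherwise. Hence along \emph{any} execution $\phi_{\mathrm{final}} - \phi_{\mathrm{init}} = (\#\text{fired }P_i) - t_{\mathrm{eff}}$, where $t_{\mathrm{eff}}$ counts the effective (full-to-empty) takes and satisfies $t_{\mathrm{eff}} \le \#\text{fired }T_i$. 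Since $\phi$ takes values in $\{0,1\}$, the left-hand side lies in $\{-1,0,1\}$.

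First I would apply \cref{lemma:reduction-uses-all-symbols} to get a complete execution of $\tau(!) \PAR \tau(?)$ that fires every symbol, so that $\#\text{fired }P_i = n_P$ and $\#\text{fired }T_i = n_T$. The conservation law then gives $n_P = t_{\mathrm{eff}} + (\phi_{\mathrm{final}} - \phi_{\mathrm{init}}) \le n_T + 1$. This already proves the claim up to a slack of $1$, which is exactly the contribution of a complete execution that drives $C_i$ from empty to full.

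To eliminate the slack I would repeat the interaction $m$ times \emph{sequentially}. Writing $!^m$ (resp.\ $?^m$) for the single subprocess made of $m$ consecutive sends (resp.\ receives), the processes $!^m\success \PAR ?^m\silent$ and $!^m\silent \PAR ?^m\success$ are must-convergent and play the role that $!\success \PAR ?\silent$ and $!\silent \PAR ?\success$ play in \cref{lemma:reduction-uses-all-symbols}; their translations are $\tau(!)^m\success \PAR \tau(?)^m$ and $\tau(!)^m \PAR \tau(?)^m\success$, where $\tau(!)^m$ and $\tau(?)^m$ are single subprocesses (the $m$-fold concatenations). Because the unique $\success$ sits behind the whole of $\tau(!)^m$ in the first process, any execution reaching success must fire all of $\tau(!)^m$; rerunning that reduction inside the second process (whose body $\tau(!)^m \PAR \tau(?)^m$ is identical and evolves the store identically) and invoking must-convergence again forces all of $\tau(?)^m$ as well. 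This yields a complete execution of $\tau(!)^m \PAR \tau(?)^m$ firing all $m\,n_P$ puts and $m\,n_T$ takes on $C_i$, so the conservation law gives $m\,n_P \le m\,n_T + 1$. Taking $m = 2$ (or letting $m \to \infty$) forces $n_P \le n_T$, since $n_P \ge n_T + 1$ would imply $m\,n_P \ge m\,n_T + m > m\,n_T + 1$ for $m \ge 2$.

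The step I expect to be the main obstacle is precisely the construction of this complete $m$-fold execution: must-convergence of $\tau(!^m\success \PAR ?^m\silent)$ only guarantees that \emph{some} execution reaches a \emph{successful} state, i.e.\ exposes a single $\success$, and a priori this need not consume every symbol. The reason the sequential (rather than parallel) $m$-fold witnesses work is that placing $\success$ behind all $m$ copies in one subprocess makes ``exposing the $\success$'' equivalent to ``firing all $m$ copies''; the two-stage reuse of the reduction across the two witnesses then forces both the sender and the receiver side to be exhausted. The only delicate bookkeeping is checking that the available lock operations and the store transitions agree across the two witnesses, which holds because they share the identical body and the trailing $\success$/$\silent$ are passive.
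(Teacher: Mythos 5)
Your proposal is correct and follows essentially the same route as the paper: the paper's proof uses exactly your $m=2$ witnesses ($!!\success \PAR ??\silent$ and $!!\silent \PAR ??\success$, plus $!!\success\PAR ??\success$ to invoke \cref{lemma:reduction-uses-all-symbols} for the first round), builds the complete two-round execution by the same replay-and-extend argument, and derives the contradiction by the same single-cell counting you package as a conservation law (there phrased as: after round one $C_i$ is full, so round two needs a $T_i$ before each $P_i$ and deadlocks). Your explicit $\pm 1$ bookkeeping and the general-$m$ formulation are only presentational variations.
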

\begin{proof}
  The processes $!!\success \PAR ??\success$,  $!!\silent \PAR ??\success$  and $!!\success \PAR ??\silent$ are must-convergent, hence also their images under $\tau$.
  Now suppose the claim is false. Then for some index, say 1, $\#(P_1,\tau(!)) + \#(P_1,\tau(?)) > \#(T_1,\tau(!)) + \#(T_1,\tau(?))$. 
We apply Lemma \ref{lemma:reduction-uses-all-symbols} to $\tau(!!\success \PAR ??\success)$ and obtain a reduction sequence $R_1$ that exactly consumes the 
  top parts $\tau(!)$ and $\tau(?)$  of  $\tau(!!\success \PAR ??\success)$. 
  


Replacing $\success$ by $\silent$, the reduction sequence $R_1$ can be also used 
for $\tau(!!\success \PAR ??\silent)$.
Since $\tau(!!\success \PAR ??\silent)$ is must-convergent, $R_1$ can be continued to $R_1R_2$ ending in a success of the form $\success \PAR Q\silent$ 
where $Q$ is a suffix of $\tau(?)$, since  $!!\success \PAR ??\silent$ is must-convergent.

The reduction sequence $R_1R_2$ can also be used for $\tau(!!\silent \PAR ??\success)$ (by interchanging $\silent$ and $\success$), ending in $\silent \PAR Q\success$. Since $!!\silent \PAR ??\success$ is must-convergent, the reduction sequence $R_1R_2$ can be extended to $R_1R_2R_3$ resulting in $\silent \PAR \success$.

   After $R_1$, we have $C_1 = \full$ and that the initial store for index $1$ is $\eempty$, due to the assumption, and since the symbols in $\tau(!),\tau(?)$ are completely 
   consumed.
   Hence $R_2R_3$ must execute  a $T_1$ before  every other $P_1$.
   But since the number of $T_1$-symbols is strictly smaller than the number of $P_1$-symbols, there must be a deadlock situation at least
    for one of the symbols $P_1$. 
   
   This is a contradiction, hence the proposition holds.
 \end{proof}
 
%
%

\begin{definition} \label{def:blocking-types}
For a correct translation $\tau$ into  $\LOCKSIMPLE_{k,IS}$, 
a {\em blocking prefix} of a sequence $S$ of symbols  in $\LOCKSIMPLE_{k,IS}$ is a prefix of $S$ of one of the two forms:
\begin{enumerate}
   \item   $R_1P_iR_2P_i$, 
       where $R_1,R_2$ are sequences, and $R_2$ does not contain $P_i,T_i$, and the execution of  $S$ that starts with store $IS$ deadlocks exactly before 
        the last symbol, which is $P_i$.
    \item $R_1P_i$, where $R_1$  does not contain $P_i,T_i$, and the execution of $S$ that starts with store $IS$  deadlocks exactly before the last symbol, which is $P_i$.
\end{enumerate}
    We may also speak of $R_1P_i$ or $P_iR_2P_i$, respectively, as a {\em blocking subsequence} of $S$.\\
     In the case that $S$ has a blocking sequence, we say that the {\em blocking type} of $S$ is $P_iP_i$ if the blocking sequence is $R_1P_iR_2P_i$, and the blocking type is $P_i$, 
     if the blocking sequence is $R_1P_i$. 
  
  We say a translation $\tau$ has {\em blocking type} $(W_1, W_2)$, if $W_1$ is the blocking type of $\tau(!)$, and $W_2$ is the blocking type of $\tau(?)$.
  \end{definition}

 \begin{lemma}\label{lemma:P1P1-blocking-exists}
 Let $\tau:$ $\PISIMPLE~\to \LOCKSIMPLE_{k,IS}$ be a correct translation  where $k \geq 2$. 
 Then there is some $i$, such that $\tau(!)$ has a blocking subsequence of the form $RP_i$, or $P_iRP_i$, 
    where $R$ does not contain $P_i,T_i$. The same holds for $\tau(?)$. 
 \end{lemma}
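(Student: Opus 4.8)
The plan is to exploit that the single-thread processes $!\success$ and $?\success$ are must-divergent in $\PISIMPLE$: neither $!$ nor $?$ can fire on its own, so $\success$ is never reached. By correctness of $\tau$, their images $\tau(!\success) = \tau(!)\success$ and $\tau(?\success) = \tau(?)\success$ are must-divergent in $\LOCKSIMPLE_{k,IS}$. I treat $\tau(!)$ in detail; the argument for $\tau(?)$ is verbatim the same.

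First I would note that by compositionality $\tau(!)$ contains no $\PAR$, so $\tau(!)\success$ is a single subprocess and, starting from store $IS$, its execution is a deterministic linear sequence of $P_i$- and $T_i$-steps: every step is forced, a $T_i$-step never blocks (it empties a full lock or leaves an empty one unchanged), and a $P_i$-step either fills an empty lock $C_i$ or deadlocks on a full one. Consequently the execution either runs through all of $\tau(!)$ and exposes the trailing $\success$ --- which would make $\tau(!)\success$ successful, hence may-convergent, contradicting must-divergence --- or else it deadlocks. Therefore the unique execution of $\tau(!)$ from $IS$ deadlocks, and by the preceding remark the deadlock occurs exactly before some occurrence of $P_i$ whose lock $C_i$ is full at that moment.

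Next I would identify why $C_i$ is full when this blocking $P_i$ is reached, by inspecting the last operation on lock $i$ that is executed before it. There are three possibilities. If no $P_i$ or $T_i$ precedes the blocking $P_i$ in $\tau(!)$, then $C_i$ still holds its initial value, which must therefore be $\full$; writing $R$ for the prefix preceding the blocking $P_i$ (which by assumption contains no $P_i,T_i$), we obtain a blocking subsequence $RP_i$, matching form~2 of \cref{def:blocking-types}. If the last operation on lock $i$ were a $T_i$, then $C_i$ would be empty and the $P_i$ could not block, so this case is impossible. In the only remaining case the last operation on lock $i$ is a successful $P_i$ that filled the lock; letting $R$ be the segment strictly between that $P_i$ and the blocking $P_i$ (which then contains no $P_i,T_i$), we obtain a blocking subsequence $P_iRP_i$, matching form~1. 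In either surviving case $\tau(!)$ has a blocking subsequence of the required shape.

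Applying the identical reasoning to the must-divergent process $\tau(?)\success$ yields the corresponding blocking subsequence for $\tau(?)$, which completes the proof. I expect the only delicate point to be the clean justification that a single-thread execution is deterministic and that the non-blocking behaviour of $T_i$ forces the unavoidable deadlock to sit precisely before a $P_i$; once this is pinned down, the three-way case analysis on the last lock-$i$ operation is routine.
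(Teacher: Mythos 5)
Your proof is correct and follows essentially the same route as the paper's: both deduce from the must-divergence of $!\success$ and $?\success$ (via correctness and compositionality) that the single-threaded execution of $\tau(!)$ must deadlock at some $P_i$, and then locate the blocking subsequence by looking at the last lock-$i$ operation preceding it. Your version merely spells out the determinism of a single-subprocess execution and the three-way case analysis (no prior $i$-symbol, prior $T_i$ impossible, prior $P_i$) that the paper's one-line argument leaves implicit.
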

\begin{proof}
 The reduction of $\tau(!)$ cannot be completely executed, since $\tau(!\success)$ is a fail. Hence  the execution stops
 at a symbol $P_i$, and it is either the first occurrence of $P_i$, or a later occurrence. Hence the sequence before is of the form $R$, or $P_iR$, where $R$ does not contain $P_i,T_i$.
 The same arguments hold for $\tau(?)$.
\end{proof}

\begin{lemma}\label{lemma:P1-type-and-store}
 Let $\tau:$ $\PISIMPLE~\to \LOCKSIMPLE_{k,IS}$ be a correct translation  where $k \geq 2$. 
 If $\tau(!)$ is of  blocking type $P_i$ then $IS_i = \full$, and 
 if $\tau(!)$ is of  blocking type $P_iP_i$ then the first $i$-symbol is $T_i$, or $IS_i = \eempty$;
 The same holds for $\tau(?)$.
  \end{lemma}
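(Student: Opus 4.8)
The plan is to exploit that for the single subprocess $\tau(!)\success$ the execution in $\LOCKSIMPLE_{k,IS}$ is completely deterministic: there is no parallel component to choose from, $T_j$ never blocks, and $P_j$ blocks precisely when $C_j = \full$. Since the $\PISIMPLE$-process $!\success$ is stuck and unsuccessful, hence not may-convergent, correctness reflects this to $\tau(!\success) = \tau(!)\success$; thus the deterministic run of $\tau(!)\success$ from store $IS$ cannot reach $\success$ and must deadlock at some $P_j$. This deadlock position is unique and is precisely what fixes the blocking type, so I can read off the value of $C_i$ at the relevant position.

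For the first assertion I would use that blocking type $P_i$ means the blocking subsequence has the form $R_1P_i$ with $R_1$ containing no $i$-symbol. Hence no operation touches lock $i$ before the deadlocking $P_i$ is reached, so $C_i$ still equals its initial value $IS_i$ at that moment. A $P_i$ deadlocks only on a full lock, whence $IS_i = \full$.

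For the second assertion I would inspect the \emph{first} $i$-symbol $X$ of $\tau(!)$. Because blocking type $P_iP_i$ has blocking subsequence $P_iR_2P_i$, the deadlock occurs at the last of at least two $i$-operations, so $X$ strictly precedes the deadlock and is therefore executed without blocking in the deterministic run. If $X = T_i$, the first disjunct holds immediately. If $X = P_i$, then as $X$ is the very first $i$-operation we have $C_i = IS_i$ when $X$ executes, and a non-blocking $P_i$ requires $C_i = \eempty$; hence $IS_i = \eempty$, the second disjunct.

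The statement for $\tau(?)$ follows by the identical argument applied to the fail $?\success$, i.e. to the deterministic run of $\tau(?)\success$. I do not anticipate a real obstacle; the only point needing care is the bookkeeping that ``$R_1$ (resp.\ $R_2$) contains no $i$-symbol'' means lock $i$ is untouched along that stretch, so that its content can be identified with $IS_i$ (or with its value at the preceding $i$-operation).
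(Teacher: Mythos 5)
Your proposal is correct and follows essentially the same route as the paper's (very terse) proof: both arguments track the content of lock $i$ along the deterministic run of the single subprocess, using that no $i$-symbol occurs before the relevant position to identify $C_i$ with $IS_i$, and that a $P_i$ blocks exactly on $\full$. Your case split on the first $i$-symbol is just the direct form of the paper's contrapositive phrasing (``if $IS_i=\full$ then the first $i$-symbol must be $T_i$''), and your explicit observations that the run of $\tau(!)\success$ is deterministic and that the blocking prefix $R_1P_iR_2P_i$ guarantees the first $i$-symbol is executed before the deadlock are exactly the implicit steps the paper omits.
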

\begin{proof}
   The blocking type $P_i$ is only possible if in $R$ of the prefix $RP_i$ there is no $T_i$, hence the initial store $\mathit{IS}_i = \full$.
    If the blocking type is $P_iP_i$ and  $IS_i = \full$, then the first $i$-symbol must be a $T_i$. The other case is that $\mathit{IS}_i$ is $\eempty$.
  %
\end{proof}
  
%
%
%
%
%

\section{Non-Existence of a Correct Translation for Two Locks}\label{sec:k=2-refuting}    
In this section, we will show that there is no correct compositional translation
from $\PISIMPLE$ to $\LOCKSIMPLE_{2,IS}$  (for any initial storage $IS$). We distinguish several cases by considering different blocking types according to 
\cref{def:blocking-types}. 
When reasoning on translations, 
we use an extended notation of translations
as pairs of strings (i.e.~$(\tau(!),\tau(?))$): We describe sets of translations using set-concatenation (writing singletons without curly braces) and the Kleene-star.
For instance, we write $(\{P_1,T_1\}^*T_2,\{P_2\}^*T_1)$ to denote the set of all translations where $\tau(!)$ starts with arbitrary many $P_1$- and $T_1$-steps ending with $T_2$, and $\tau(?)$ starting with an arbitrary number of $P_2$-steps followed by a single $T_1$-step.


An automated search for compositional translations for $k = 2$ and length $\le 10$ has refuted the correctness of all these translations for all initializations of the initial storage. This is consistent with our general arguments in this section.
\subsection{\texorpdfstring{Refuting the Blocking Type $(P_iP_i,P_jP_j)$}{Refuting the Blocking Type (PiPi,PjPj)}}
 \begin{proposition}\label{proposition:k-is-2-and-i-not-j}   
 Let $\tau:$ $\PISIMPLE~\to \LOCKSIMPLE_{2,IS}$ be a correct translation of blocking type $(P_iP_i, P_jP_j)$.
  Then  $i \not= j$. 
 \end{proposition}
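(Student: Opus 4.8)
The plan is to argue by contradiction: assume $i=j$, say $i=j=1$ (the two locks play symmetric roles), and derive a forbidden deadlock. By \cref{def:blocking-types} the hypothesis means that, run in isolation from the initial store, each of $\tau(!)$ and $\tau(?)$ performs a successful $P_1$ (which leaves $C_1=\full$) and then blocks at a later $P_1$ with \emph{no} $T_1$ executed in between; that is, each self-deadlocks on lock~$1$ while holding it. The target for the contradiction is $!\success \PAR ?\success$, which is must-convergent in $\PISIMPLE$. By compositionality its image is $\tau(!)\success \PAR \tau(?)\success$, which must therefore be must-convergent, i.e.\ have no reachable stuck non-successful state. Since every $\xrightarrow{LS}$-step consumes one symbol, executions are finite, so it suffices to exhibit one maximal execution ending in a non-successful state.

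First I would freeze one subprocess at its block and drive the other. Running $\tau(!)$ alone until it blocks reaches a state with $C_1=\full$ and $\tau(!)$ waiting at its blocking $P_1$; I then schedule only $\tau(?)$. Two observations make a deadlock appear: (a) any configuration in which $\tau(!)$ waits at $P_1$ and $\tau(?)$ waits at a $P$-operation on a full lock is stuck, whether they block on the same lock or on the two different locks (the latter requiring $C_1=C_2=\full$), since neither $P$ can fire and no $T$-operation is available; and (b) because the last lock-$1$ operation before $\tau(?)$'s own block is a successful $P_1$ setting $C_1=\full$ with no intervening $T_1$, the frozen $\tau(!)$ cannot re-empty lock~$1$, so $\tau(?)$ cannot run to completion and, if it reaches that point, must block there. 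To neutralize the second lock in the easy cases I would stop $\tau(?)$ at its first lock-$1$ operation: if it blocks on a $P_2$ earlier, then $C_1$ is still $\full$ and we already have a cross-lock deadlock; if its first lock-$1$ operation is $P_1$, it blocks on the full $C_1$, a same-lock deadlock. By symmetry (swapping $!$ and $?$) this settles every case except the one where both $\tau(!)$ and $\tau(?)$ have $T_1$ as their first lock-$1$ symbol; by \cref{lemma:P1-type-and-store} this case is genuinely unavoidable (e.g.\ it is forced when $IS_1=\full$).

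The hard part is exactly this remaining case, and the obstacle is the coupling through the second lock. Here I would use that a leading $T_1$ sets $C_1=\eempty$ regardless of its previous value, so that after it the lock-$1$ state — and hence all subsequent lock-$1$ behaviour of $\tau(?)$ — coincides with its solo run; combined with observation (b) this again forces $\tau(?)$ to reach its blocking $P_1$ with $C_1=\full$, yielding a same-lock deadlock with the frozen $\tau(!)$. The one way this can fail is if $\tau(?)$ blocks on a $P_2$ at a moment when $C_1=\eempty$, so that $\tau(!)$'s waiting $P_1$ is enabled and the state is not yet stuck; handling this is where the real work lies. I would resolve it by a careful interleaving that keeps lock~$1$ full whenever $\tau(?)$ is about to wait on lock~$2$, appealing to \cref{prop:estimation-P-leq-T} at index $2$ to control the $P_2$/$T_2$ balance and to the finiteness of executions to conclude that some maximal execution must end stuck rather than successful. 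In every case the resulting stuck, non-successful state contradicts must-convergence of $\tau(!)\success \PAR \tau(?)\success$, so $i=j$ is impossible and $i\neq j$.
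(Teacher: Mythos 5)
Your overall strategy coincides with the paper's: assume $i=j=1$, take a must-convergent process of the form $!X \PAR {?}Y$, run $\tau(!)$ up to its blocking $P_1$ (leaving $C_1=\full$), then drive $\tau(?)$ and hunt for a deadlock. Your treatment of the easy cases is correct and matches the paper: if $\tau(?)$ blocks at a $P_2$ before touching lock~1, then $C_1$ is still $\full$ and the state is stuck; if its first lock-1 symbol is $P_1$, it blocks there; hence the first lock-1 symbol of $\tau(?)$ must be $T_1$, i.e.\ its prefix is $\{P_2,T_2\}^*T_1$.

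The gap is in the remaining case, and you have correctly located it but not closed it. Your proposed tools do not work: \cref{prop:estimation-P-leq-T} is a global count of $P_i$ versus $T_i$ over $\tau(!)\PAR\tau(?)$ and gives no control over \emph{when} lock~1 is full during a particular interleaving; and finiteness of executions only guarantees that maximal executions terminate, not that some maximal execution terminates in a non-successful state — for a must-convergent image \emph{every} maximal execution ends successfully, so "some maximal execution must end stuck" is exactly what has to be proved, not a consequence of termination. The paper closes this case differently: it does a case split on the value of $C_2$ left by $\tau(!)$'s partial run and on $IS_2$, and shows that in each sub-case either $\tau(?)$ gets stuck at a $P_2$ \emph{while $C_1$ is still $\full$} (a genuine deadlock), or the store seen by $\tau(?)$ synchronizes with the store of its solo run on \emph{both} locks (lock~1 via the first $T_1$, lock~2 either because $C_2=IS_2$ or via a forced leading $T_2$). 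After synchronization the combined run of $\tau(?)$ mirrors its solo run exactly; since its solo blocking type is $P_1P_1$, it never blocks at a $P_2$ at all, and it runs to its blocking $P_1$ leaving $C_1=\full$ — a deadlock with the frozen $\tau(!)$. In other words, the troublesome scenario "$\tau(?)$ blocks at a $P_2$ with $C_1=\eempty$" is not to be survived by clever scheduling; it is to be \emph{excluded} by showing the lock-2 state agrees with the solo run before any such $P_2$ is reached (or else the block happens while $C_1=\full$). Without this synchronization argument your proof does not go through.
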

\begin{proof}
W.l.o.g.~assume that the blocking type is $(P_1P_1, P_1P_1)$.
Then the blocking prefixes  of $\tau(!)$ and  $\tau(?)$ are   $M_1P_1RP_1$ and $M_2P_1R'P_1$, respectively. 
Now
we reduce the must-convergent process $\tau(!\silent) \PAR \tau(?\success)$ by selecting the following reduction sequence: 
  first, reduce $\tau(!)$ until $M_1P_1R$ is completely executed, and then
 reduce $\tau(?)$ as far as possible. 
 Let $Q$ be the prefix of $\tau(?)$  of the form $\{P_2,T_2\}^*\{P_1,T_1\}$.  
 If $P_1$ is the  symbol from $\{P_1,T_1\}$ of $\tau(?)$, then a deadlock would occur, which is not possible, since  $!\silent \PAR ?\success$ is must-convergent.
 Hence $Q$ as a prefix of $\tau(?)$ must be of the form   $\{P_2,T_2\}^*T_1$. 
%
 There are two cases:
 \begin{enumerate}
   \item  After executing $M_1P_1R$ it holds $C_1 = \full$ and $C_2 = \eempty$.  
     \begin{enumerate}
       \item  $IS_2 = \eempty$.  
        Now, since reducing $\tau(?)$ starts with $C_2 = \eempty$, and the final $T_1$ of $Q$  resets $C_1$, the reduction sequence starting with $\tau(!\silent)$ and then executing  $\tau(?)$  is possible 
         until the end of $M_2P_1R'$. 
         Since now $C_1 = \full$ and in both pending subprocesses a $P_1$ is to be executed, we have a  deadlock, which is impossible due to must-convergence of $!\silent \PAR ?\success$.
         \item   $IS_2 = \full$.  Then the first $\{P_2,T_2\}$-symbol of $\tau(?)$ cannot be $P_2$, since it would block. 
            Hence the first $\{P_2,T_2\}$-symbol of $\tau(?)$ is $T_2$. Then the further reduction of $\tau(?)$ is independent of the initial values and it is the same as 
            in the previous case.

       \end{enumerate}
   \item After executing $M_1P_1R$ it holds $C_1 = \full$ and $C_2 = \full$.   
      Then the first symbol of $\tau(?)$ cannot be $P_2$, since this would be a deadlock.
      Also, the first symbol of $\tau(?)$ cannot be $T_2$, since then the reduction of  $\tau(?)$  alone is the same as started with the initialization $C_1 = C_2 = \eempty$,
      and the reduction proceeds until the end of the blocking sequence, which leads to a deadlock. 
      Hence $\tau(?)$ starts with $T_1$. 
      The prefix of $\tau(?)$ cannot be $T_1\{T_1,P_1\}^*P_2$, since this either blocks  within $T_1\{T_1,P_1\}^*$  or at $P_2$. 
      Hence the prefix is $T_1\{T_1,P_1\}^*T_2$. This implies that  $\tau(?)$ is executable until the blocking $P_1$, and thus leads to a deadlock. 
      Hence this case  is also not possible.
 \end{enumerate}
  We have checked all cases, hence $i = j$ is not possible and the lemma is proved.
\end{proof}


%
 

  \ignore{
  
  Now the goal is to show:
  \begin{quote}
   Let $\tau:$ $\PISIMPLE~\to \LOCKSIMPLE_{2,IS}$ be a correct translation of blocking type $(P_1P_1,P_2P_2)$, 
     leads to a contradiction to correctness of $\tau$, and hence is not possible. 
  \end{quote}
  
  }

 We consider the  blocking type $(P_1P_1,P_2P_2)$ in the rest of this subsection,
which suffices due to symmetry and \cref{proposition:k-is-2-and-i-not-j}.
 
 \begin{lemma}\label{lemma:blocking-pre-B-all-inital}
  Let $\tau:$ $\PISIMPLE~\to \LOCKSIMPLE_{2,IS}$ be a correct translation of blocking type $(P_1P_1,P_2P_2)$.   
  Then the following holds:
 \begin{enumerate}
     \item The blocking prefix of $\tau(!)$ is  $R_1P_1\{P_2,T_2\}^*T_2P_1$ and  the blocking prefix of $\tau(?)$ is
       $R_3P_2\{P_1,T_1\}^*T_1P_2$.  
     \item  $\{T_1,P_1\}^*T_2$ is a prefix of $\tau(!)$, and $\{T_2,P_2\}^*T_1$ is a prefix of $\tau(?)$.
  \end{enumerate}
 \end{lemma}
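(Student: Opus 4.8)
The plan is to fix notation from the blocking type, then combine \emph{complete joint reductions} (\cref{lemma:reduction-uses-all-symbols}) with must-convergence of the small handshake processes $!\success\PAR?\silent$ and $!\silent\PAR?\success$ to forbid deadlocks, which forces the claimed shape. By \cref{def:blocking-types} the hypothesis on the blocking type already gives a blocking prefix $R_1P_1R_2P_1$ of $\tau(!)$ with $R_2\in\{P_2,T_2\}^*$, and a blocking prefix $R_3P_2R_4P_2$ of $\tau(?)$ with $R_4\in\{P_1,T_1\}^*$; moreover the solo run of $\tau(!)$ from $IS$ passes through $R_1P_1R_2$ and blocks at the final $P_1$ with $C_1=\full$, and symmetrically for $\tau(?)$. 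The first observation I would isolate is a \emph{mutual-release principle}: between its two $P_1$'s the process $\tau(!)$ issues no $T_1$ (since $R_2$ has no $1$-symbol), so in \emph{any} reduction driving $\tau(!)$ past its second $P_1$ some $T_1$ of $\tau(?)$ must fire while $\tau(!)$ sits between its two $P_1$'s; symmetrically, clearing $\tau(?)$'s second $P_2$ needs a $T_2$ of $\tau(!)$ to fire while $\tau(?)$ sits between its two $P_2$'s. These releasing $T_1$'s lie in $R_3$ or $R_4$, and the releasing $T_2$'s lie in $R_1$ or $R_2$.

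For part~(1) I would take a reduction $\rho$ of $\tau(!)\PAR\tau(?)$ executing every symbol (\cref{lemma:reduction-uses-all-symbols}) and track the four events $a_1,a_2$ (the two $P_1$'s of $\tau(!)$) and $b_1,b_2$ (the two $P_2$'s of $\tau(?)$). The intended ``good'' interleaving is the handshake $\tau(!)$ does its $T_2$ emptying $C_2$, blocks at $a_2$; then $\tau(?)$ fires $b_1$, runs $R_4$ ending in a $T_1$ that empties $C_1$, unblocking $a_2$, and then blocks at $b_2$ until $\tau(!)$ emits a further $T_2$. I would show that the symbol of $\tau(!)$ immediately preceding $a_2$ must be exactly the $T_2$ emptying $C_2$ for $b_1/b_2$, i.e.\ $R_2$ must end in $T_2$; otherwise $C_2$ is not emptied by $\tau(!)$ before it blocks, and one can schedule $\tau(!)$ to its block first and then let $\tau(?)$ run, reaching a state where both processes wait at their respective $P$'s with both locks full. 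Such a deadlock inside the must-convergent processes $\tau(!\silent\PAR?\success)$ and $\tau(!\success\PAR?\silent)$ reaches a non-may-convergent state, contradicting correctness. The symmetric argument gives that $R_4$ ends in $T_1$, which is exactly part~(1).

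For part~(2) I would argue by contradiction that the first $2$-symbol of $\tau(!)$ is $T_2$ (and dually the first $1$-symbol of $\tau(?)$ is $T_1$). Suppose the first $2$-symbol of $\tau(!)$ were $P_2$, and split on whether it occurs before $\tau(!)$'s blocking $P_1$ (inside $R_1$) or after it (inside $R_2$). In the first case the solo run of $\tau(!)$ reaches this $P_2$ with $C_2=IS_2$, so it would block there unless $IS_2=\eempty$, which via \cref{lemma:P1-type-and-store} pins the relevant initializations; feeding the resulting state (with $C_2=\full$ left by this $P_2$) into a malicious reduction of $\tau(!\silent\PAR?\success)$ that stalls $\tau(?)$ at its blocking $P_2$ produces a deadlock. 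In the second case a similar reduction leaves both locks full before $\tau(?)$ can release them, again deadlocking a must-convergent process; \cref{prop:estimation-P-leq-T} is used to bound $\#(P_i,\cdot)\le\#(T_i,\cdot)$ and exclude the remaining store values. The case $\tau(?)$ is handled symmetrically.

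The main obstacle is the ordering/interleaving bookkeeping of the second step together with the store initializations: because the unknown prefixes $R_1$ and $R_3$ may operate on the \emph{other} lock, I must carefully rule out that a release (a $T_1$ inside $R_3$, or a $T_2$ inside $R_1$) fires ``too early'' and performs the unblocking instead of the last symbol of $R_4$, respectively $R_2$. Controlling these premature releases across all initializations $IS\in\{\eempty,\full\}^2$ (cut down by the $!\leftrightarrow?$ and $1\leftrightarrow2$ symmetry and by \cref{proposition:k-is-2-and-i-not-j}, which lets us fix the blocking type $(P_1P_1,P_2P_2)$) is where the real work lies; once the schedule is controlled, each deadlock construction is routine, since every state reached from one of the must-convergent handshake processes is required to be may-convergent.
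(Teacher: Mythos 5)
Your setup and overall strategy (deadlock constructions inside the must-convergent handshake processes $!\success\PAR ?\silent$ and $!\silent\PAR ?\success$, plus the ``mutual-release principle'') match the paper's, but the key scheduling step in part~(1) fails. You propose to run $\tau(!)$ all the way to its block at the second $P_1$ and only then release $\tau(?)$, claiming this reaches a state where both processes wait at puts on full locks. That is not guaranteed: at that point $\tau(!)$'s pending operation is a $P_1$, and $\tau(?)$'s blocking prefix $R_3P_2R_4P_2$ contains $T_1$'s (indeed $R_4$ must end in $T_1$ by the very statement being proved), so while $\tau(?)$ runs it may empty $C_1$ and thereby unblock $\tau(!)$ before getting stuck itself; for instance, if $\tau(?)$ halts at its second $P_2$ immediately after executing the final $T_1$ of $R_4$, then $C_1=\eempty$, $\tau(!)$ proceeds past its blocking prefix, and the execution may well complete --- no contradiction is obtained. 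The paper avoids this by stopping $\tau(!)$ one step earlier, with the offending final $P_2$ of $R_2$ (resp.\ the pair $P_1P_1$, in the separate case $R_2=\emptyset$) still pending: then $\tau(?)$ run alone necessarily blocks at some put, and whichever put that is, $\tau(!)$ stays blocked too ($\tau(?)$ stuck at a $P_2$ keeps $C_2$ full against $\tau(!)$'s pending $P_2$; $\tau(?)$ stuck at a $P_1$ keeps $C_1$ full against $\tau(!)$'s subsequent $P_1$), giving a genuine deadlock. You explicitly flag this ``premature release'' problem in your last paragraph as where the real work lies, but the fix is a different schedule, not more bookkeeping on the one you chose.

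For part~(2), the case split on whether the first $2$-symbol of $\tau(!)$ lies in $R_1$ or in $R_2$ is unnecessary, and your deadlock constructions there are not pinned down (the ``blocking $P_2$'' of $\tau(?)$ is defined relative to the initial store $IS$, not to the store your partial run of $\tau(!)$ leaves behind, so ``stalling $\tau(?)$ at its blocking $P_2$'' needs justification). The paper's argument is uniform: assuming the prefix of $\tau(?)$ is $\{T_2,P_2\}^*P_1$, first drive $\tau(!)$ to its block at its second $P_1$ (so $C_1=\full$), then run $\tau(?)$; its prefix touches only lock~$2$ before the offending $P_1$, so $C_1$ stays full and $\tau(?)$ dies either at a $P_2$ or at that $P_1$, and in both cases neither side can release the other; the $\tau(!)$ half is symmetric. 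In summary: the missing schedule in part~(1) is a genuine gap, and part~(2) is repairable but under-specified.
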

 \begin{proof} 
 Let the blocking prefix of $\tau(!)$ be   $R_1P_1P_1$   and the blocking prefix of $\tau(?)$ be   $R_3P_2\{T_1,P_1\}^*P_2$. 
 Then first execute  $R_1$, and then $R_3P_2\{T_1,P_1\}^*$ until it blocks. 
    If it blocks at a $P_1$, then it is a deadlock. If it blocks at a $P_2$, then $P_1P_1$ cannot be both executed, hence a deadlock. 
 Hence $\tau(!)$ has a blocking prefix  $R_1P_1R_2P_1$ where $R_2 \not= \emptyset$.  
 By symmetry, we obtain
 that the blocking prefix of $\tau(?)$ is
 $R_3P_2R_4P_2$ where 
 $R_4 \not = \emptyset$.  
 Now let the blocking prefix of $\tau(!)$ be   $R_1P_1\{T_2,P_2\}^*P_2P_1$.
 Execute $\tau(!)$ until  $P_2P_1$ is left, and then execute  $\tau(?)$. Clearly, $\tau(?)$ must block, 
 independent of the previous executions.
 If $\tau(?)$ blocks at $P_1$, then we have a deadlock, and if it blocks at $P_2$, 
 then we also have a deadlock. 
 Hence the blocking prefix of $\tau(!)$ is of the form  $R_1P_1\{T_2,P_2\}^*T_2P_1$. 
 
 By symmetry, we obtain that the blocking prefix of $\tau(?)$ is of the form  $R_3P_2\{T_1,P_1\}^*T_1P_2$.  
 Now we prove restrictions on the prefix of $\tau(!)$ and $\tau(?)$. 
   Assume that the prefix of $\tau(?)$ is $\{T_2,P_2\}^*P_1$.  Then 
   first reduce
      $\tau(!)$ until it blocks before $P_1$, then reduce $\tau(?)$, until it blocks within $\{T_2,P_2\}^*$ or at the (first) $P_1$ in  $\tau(?)$.
      Both cases lead to a deadlock, 
          hence this case is impossible. 
       Thus $\tau(?)$ has prefix  $\{T_2,P_2\}^*T_1$. 
    \end{proof}
  
  \noindent For the rest of this subsection, we assume blocking type $(P_1P_1,P_2P_2)$, and that only correct translations are of interest.
 
%
%
%
%
\begin{lemma}\label{lemma:k-is-2-send-is-not-t1plus-t2}
Let $\tau$ be a correct translation. Then for any initial storage    
the prefix of $\tau(!)$ cannot be  $T_1^+T_2$ nor $T_2^+T_1$.
\end{lemma}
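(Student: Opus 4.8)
The plan is to argue by contradiction, refuting correctness through a deadlocking schedule — the technique used in the surrounding refutations (e.g.\ \cref{proposition:k-is-2-and-i-not-j} and \cref{lemma:blocking-pre-B-all-inital}). Suppose the prefix of $\tau(!)$ is $T_1^+T_2$ (the case $T_2^+T_1$ is treated the same way and I comment on it at the end). I would start from the must-convergent process $!\silent \PAR ?\success$, whose image $\tau(!)\silent \PAR \tau(?)\success$ must be must-convergent by correctness; the goal is to exhibit a reachable state of this process that is a deadlock (not successful, no $\xrightarrow{LS}$-step possible), which is the contradiction.

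First I would fix the shapes of the two subprocesses with \cref{lemma:blocking-pre-B-all-inital}: $\tau(!)$ has blocking prefix $R_1P_1\{P_2,T_2\}^*T_2P_1$, while $\tau(?)$ has blocking prefix $R_3P_2\{P_1,T_1\}^*T_1P_2$ and prefix $\{T_2,P_2\}^*T_1$. The point of the hypothesis is that $T_1^+T_2$ places a $T_2$ inside $R_1$, i.e.\ strictly before the first $P_1$ of $\tau(!)$, and the whole block $T_1^+T_2$ consists only of non-blocking takes. Hence I can execute this block at once, and its $T_2$ acts against the \emph{initial} contents of lock $2$ (a no-op if $C_2=\eempty$, an early emptying if $C_2=\full$) rather than against a $P_2$ of $\tau(?)$. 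In either initial-store case this $T_2$ is ``used up'' and cannot unblock the blocking $P_2$ of $\tau(?)$; the only remaining emptying of lock $2$ available from $\tau(!)$ is the protected $T_2$ sitting between its two $P_1$'s.

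The deadlock I would then aim for is a clash on lock $2$: $\tau(!)$ stalled at a $P_2$ with $C_2=\full$ and $\tau(?)$ stalled at its second $P_2$ with $C_2=\full$. A stall of $\tau(!)$ on lock $1$ is ruled out, because when $\tau(?)$ sits at its blocking $P_2$ its last $1$-operation was a $T_1$, so $C_1=\eempty$ and $\tau(!)$'s first $P_1$ would be enabled; and if $\tau(!)$ ever reaches its second $P_1$ it has already performed the protected $T_2$ and thereby released $\tau(?)$. Thus, after wasting the early $T_2$, I would schedule $\tau(?)$ to fill lock $2$ and run to its blocking second $P_2$, while advancing $\tau(!)$ only as far as a $P_2$ occurring before its protected $T_2$; both processes are then blocked on a full lock $2$, contradicting must-convergence.

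The hard part will be the bookkeeping of the two-lock state across the interleaving together with the requirement that the argument hold for \emph{every} initial store: I would split on $IS_2\in\{\eempty,\full\}$ to certify that the early $T_2$ is really consumed against the initial state, and on whether $\tau(!)$ possesses a $P_2$ before its protected $T_2$. In the sub-case where no such $P_2$ exists the lock-$2$ clash cannot be assembled, and I expect to need a complementary argument — driving instead a must-divergent process such as $!\success \PAR !\success$, where the leading $T_1$'s of one copy can release the blocking $P_1$ of the other, toward a successful state — to reach the contradiction. The $T_2^+T_1$ case is structurally identical: its leading $T_2$'s are again takes occurring before the first $P_1$, they are wasted against the initial lock-$2$ contents in the same way, and the same lock-$2$ deadlock (or the same complementary convergence argument) applies.
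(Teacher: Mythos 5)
There is a genuine gap, and your route also misses a much shorter argument. The paper's own proof does not build a deadlock at all: it observes that a prefix $T_1^+T_2$ (or $T_2^+T_1$) consists only of non-blocking takes and contains an occurrence of \emph{both} $T_1$ and $T_2$. Hence in the translation of the must-divergent process $!\success \PAR \cdots \PAR !\success$ (pure senders, so a fail in \PISIMPLE{}, whatever the store), one can run the first copy of $\tau(!)\success$ to completion: whenever it blocks at some $P_i$ with $C_i=\full$, the prefix of a fresh spare copy supplies a freely executable $T_i$ that empties $C_i$; with sufficiently many copies the lead copy reaches $\success$, so $\tau$ maps a fail to a may-convergent process. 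This works for every initial store and uses no blocking-type information, which is exactly why the lemma can later be reused in \cref{prop:no-blockin-type-p1-p2} for the modified store $(k_1,k_2)$.

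Your construction has two concrete problems. First, you explicitly leave open the sub-case where $\tau(!)$ has no $P_2$ before its protected $T_2$, so the lock-$2$ clash cannot be assembled; your proposed patch for that sub-case ($!\success \PAR !\success$) is in fact the paper's idea, but with only two copies it is insufficient in general: one spare prefix $T_1^+T_2$ donates only a single $T_2$, so if the lead copy blocks on lock $2$ more than once you run out of helpers --- you need arbitrarily many spare copies, and once you allow that, the whole deadlock machinery of your first three paragraphs becomes unnecessary. Second, your argument leans on \cref{lemma:blocking-pre-B-all-inital} and therefore silently assumes blocking type $(P_1P_1,P_2P_2)$ with its associated constraints on the store; since \cref{prop:no-blockin-type-p1-p2} invokes the present lemma precisely because it holds ``for any initial state'' and independently of the blocking type, a proof conditioned on that lemma would not support the later use. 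The claim you do get right --- that the leading takes are non-blocking and that the early $T_2$ acts only against the current store --- is the key observation; it just should be used to \emph{complete} a run of a sender in a must-divergent process rather than to stage a deadlock in a must-convergent one.
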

\begin{proof}
In each case the must-divergent process $\tau(!\success \PAR \ldots \PAR !\success)$ with sufficiently many subprocesses can be reduced such that it leads to a success, 
which contradicts the correctness of $\tau$:
    Fix the first subprocess and reduce it until the end using the prefixes of the other subprocesses to proceed in case of a blocking.
    This leads to success, which is a contradiction.
\end{proof}
 
 
 \cref{lemma:blocking-pre-B-all-inital} implies:
 \begin{lemma}\label{lemma:k-is-2-send-is-not-t1startp2}
The prefix of $\tau(!)$ cannot be  $T_1^*P_2$.
\end{lemma}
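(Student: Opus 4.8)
The plan is to obtain the statement as an immediate consequence of \cref{lemma:blocking-pre-B-all-inital}, exactly as the preceding sentence in the text advertises. The crucial point is that part~(2) of that lemma already fixes the identity of the first occurrence of a symbol from $\{P_2,T_2\}$ (a ``$2$-symbol'') in $\tau(!)$. Concretely, \cref{lemma:blocking-pre-B-all-inital}(2) asserts that $\{T_1,P_1\}^*T_2$ is a prefix of $\tau(!)$; reading this off, $\tau(!)$ can be written as $u\,T_2\,v$ with $u\in\{T_1,P_1\}^*$, so that every symbol of $\tau(!)$ strictly preceding this $T_2$ is a $T_1$ or a $P_1$. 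In particular, the first $2$-symbol of $\tau(!)$ is a $T_2$.

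With this in hand, I would argue by contradiction. Suppose the prefix of $\tau(!)$ were $T_1^*P_2$, that is, $\tau(!)=T_1^nP_2w$ for some $n\ge 0$ and some string $w$. Since the block $T_1^n$ contains no $2$-symbol, the displayed $P_2$ is the first $2$-symbol occurring in $\tau(!)$. This contradicts the consequence of \cref{lemma:blocking-pre-B-all-inital}(2) recorded above, namely that this first symbol is a $T_2$. Hence the prefix of $\tau(!)$ cannot be $T_1^*P_2$.

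I expect no real obstacle here: all the work is already carried out in \cref{lemma:blocking-pre-B-all-inital}, and what remains is only the elementary observation that a string possesses a unique first $2$-symbol, which cannot simultaneously be $P_2$ and $T_2$. The sole point deserving a moment's care is to confirm that the standing hypotheses of this subsection --- correctness of $\tau$ together with blocking type $(P_1P_1,P_2P_2)$ --- are precisely those under which \cref{lemma:blocking-pre-B-all-inital} was established, so that its conclusion is legitimately available.
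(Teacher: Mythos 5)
Your proposal is correct and matches the paper exactly: the paper gives no separate argument for this lemma beyond the phrase ``\cref{lemma:blocking-pre-B-all-inital} implies,'' and your derivation --- that part~(2) of that lemma forces the first $\{P_2,T_2\}$-symbol of $\tau(!)$ to be $T_2$, whereas a prefix $T_1^*P_2$ would make it $P_2$ --- is precisely the intended one-line deduction. Your check that the standing hypotheses of the subsection (correctness and blocking type $(P_1P_1,P_2P_2)$) are those of \cref{lemma:blocking-pre-B-all-inital} is also the right point to verify.
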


 \begin{lemma}\label{lemma:k2-T2+P2}
 Let $\tau$ be a correct translation.  
Then the prefix of $\tau(!)$ cannot be $T_2^+P_2$. 
\end{lemma}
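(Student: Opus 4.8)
The plan is to assume, for contradiction, that the prefix of $\tau(!)$ is $T_2^+P_2$, so $\tau(!) = T_2^a P_2\cdots$ with $a\geq 1$, and to refute correctness using the must-convergent source process $!\success \PAR ?\success$: I would exhibit a reduction of its translation $\tau(!)\success \PAR \tau(?)\success$ that runs into a deadlock. First I would record the structure of $\tau(?)$ that \cref{lemma:blocking-pre-B-all-inital} supplies. Its blocking prefix is $R_3 P_2 R_4 P_2$, where the prefix condition $\{T_2,P_2\}^*T_1$ forces $R_3$ to be a block of $T_2$'s, so $\tau(?) = T_2^c P_2 R_4 P_2\cdots$ with $c\geq 0$ and $R_4\in\{P_1,T_1\}^*T_1$. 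The only features I need are that $\tau(?)$ fills lock $2$ at its first $P_2$, that $R_4$ touches only lock $1$, and that $\tau(?)$ then blocks on lock $2$ at its second $P_2$.

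The heart of the argument is that the leading $T_2$ of $\tau(!)$ opens a window in which $\tau(?)$ can seize lock $2$. Concretely I would schedule: (1) $\tau(!)$ executes its leading $T_2^a$, emptying lock $2$ regardless of the initial store; (2) $\tau(?)$ executes $T_2^c P_2$, which (lock $2$ being empty) fills lock $2$ and leaves $\tau(?)$ just before $R_4$; (3) $\tau(!)$ now attempts its own first $P_2$ and, since lock $2$ is full, blocks; (4) $\tau(?)$ executes $R_4$, which touches only lock $1$ and, entered with the same lock-$1$ value as in its solo run, proceeds exactly as there and hence without blocking, leaving lock $2$ full; (5) $\tau(?)$ attempts its second $P_2$ and, lock $2$ still being full, blocks. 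Now both components are stuck, each waiting for lock $2$ to be emptied, yet the next symbol of each is a $P_2$, a put, so neither can emit the $T_2$ the other needs, and there is no further process. Since the two trailing $\success$ are still guarded behind the blocked prefixes, the reached state is a fail; as $!\success \PAR ?\success$ is must-convergent, this contradicts correctness of $\tau$ and proves the lemma.

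The two places to verify carefully are exactly those where the schedule must mirror a solo execution, so that every step stays legal for an arbitrary initial store: step~(2) uses $R_3\in T_2^*$ so that $\tau(?)$ can fill lock $2$ from an empty lock without an intervening block, and step~(4) uses both that $R_4$ contains no lock-$2$ symbol (so lock $2$ remains full) and that $\tau(?)$ enters $R_4$ with lock $1$ in its initial-store value — identical to its solo blocking run — which guarantees that the $P_1$/$T_1$ steps of $R_4$ do not block. I expect the main obstacle to be precisely this bookkeeping, namely checking that inserting $\tau(?)$'s fill of lock $2$ between $\tau(!)$'s leading $T_2$ and its $P_2$ leaves all subsequent steps executable for every initialization, rather than any genuine combinatorial difficulty; once the structural description from \cref{lemma:blocking-pre-B-all-inital} is in hand, the deadlock is forced.
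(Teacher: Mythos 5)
Your proposal is correct and is essentially the paper's own argument: both exploit the leading $T_2^+$ of $\tau(!)$ to let $\tau(?)$ run past its first blocking-$P_2$ and up to its second one, after which every component is stuck on a $P_2$ with lock $2$ full. One small inaccuracy: \cref{lemma:blocking-pre-B-all-inital} does \emph{not} force $R_3\in T_2^*$ — it only gives $R_3\in\{T_2,P_2\}^*$ (e.g.\ $P_2T_2$ is admissible when $IS_2=\eempty$) — so your stated justification for step~(2) rests on a false premise. The step itself still goes through: the solo run of $\tau(?)$ executes the whole $2$-block $R_3P_2$ without blocking (otherwise its blocking type would not be $P_2P_2$), and starting that block from $C_2=\eempty$ instead of $IS_2$ yields the same non-blocking execution; this is exactly the mirror-the-solo-run argument you already use for $R_4$ in step~(4), so the fix is local.
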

\begin{proof}
Consider the must-convergent process $\tau(!\success \PAR \ldots \PAR !\success \PAR ?\silent)$. First reduce all the prefixes $T_2^+$ in all $\tau(!\success)$
 until $P_2$ is the first symbol. 
Since $\{T_2,P_2\}^*T_1$ is a prefix of $\tau(?)$, and due to the assumption of the blocking type, reduction cannot block at a $P_2$ in $\{T_2,P_2\}^*$.
Hence $T_1$ is executed, which means that reduction is now independent of the initial store.
 We reduce $\tau(?)$ until it stops before the second $P_2$ of the blocking subsequence.
Then it is  a deadlock, which contradicts correctness of $\tau$.
 \end{proof}

 \begin{lemma}\label{lemma:k2-excl-pref-not-P1}
The prefix of $\tau(!)$ cannot be  $P_1$.
\end{lemma}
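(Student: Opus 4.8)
The plan is a proof by contradiction: I assume that $\tau(!)$ begins with the symbol $P_1$ and construct an execution of a must-convergent process that deadlocks, contradicting correctness. First I would fix the initial store. Since $\tau$ has blocking type $(P_1P_1,P_2P_2)$ and the first lock-$1$ symbol of $\tau(!)$ is $P_1$ rather than $T_1$, \cref{lemma:P1-type-and-store} forces $IS_1 = \eempty$. Consequently every copy of $\tau(!)$ begins by putting, i.e.\ by changing $C_1$ from $\eempty$ to $\full$, so the leading $P_1$ behaves like acquiring a mutex on lock $1$. Using \cref{lemma:blocking-pre-B-all-inital} I would also record the shape of the blocking prefix of $\tau(!)$, namely $R_1P_1\{P_2,T_2\}^*T_2P_1$ with $R_1\in\{T_1,P_1\}^*$. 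Since the first symbol is $P_1$, a leading $P_1P_1$ would make $\tau(!\success)$ block already at the second symbol with nothing between the two puts, contradicting the form of the blocking prefix, in which the two blocking $P_1$'s are separated by a nonempty segment ending in $T_2$; hence the second symbol of $\tau(!)$ must be $T_1$ or $T_2$.

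The core of the argument uses the must-convergent process $\tau(!\success \PAR \ldots \PAR !\success \PAR ?\success)$ with sufficiently many $!\success$-components. I would first run one copy of $\tau(!)$ up to its blocking second $P_1$. By the prefix shape this run completes $R_1P_1R_2$ (where $R_2$ ends in $T_2$), so it leaves $C_1=\full$ held by the blocked copy and $C_2=\eempty$, while all remaining $\tau(!)$-copies are blocked at their leading $P_1$ because $C_1=\full$. Now I would try to advance $\tau(?)$, whose prefix is $\{T_2,P_2\}^*T_1\cdots$ by \cref{lemma:blocking-pre-B-all-inital}. If this initial lock-$2$ segment already blocks (at a $P_2$ meeting $C_2=\full$), then both components wait --- $\tau(!)$ for $C_1=\eempty$ and $\tau(?)$ for $C_2=\eempty$ --- and the global state is a deadlock without any $\success$, contradicting must-convergence.

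The remaining sub-case is where the lock-$2$ prefix of $\tau(?)$ runs without blocking and $\tau(?)$ reaches its first $T_1$, which resets $C_1$ and releases a blocked put. Here a single pair $!\success\PAR?\success$ need not deadlock, so I would exploit the many parallel $!\success$-components together with a conservation argument on lock $1$: because the lock starts empty, between any two $P_1$-firings at least one $T_1$ must fire, so the number of puts that can ever succeed is bounded by one plus the number of $T_1$-firings. Before any $\tau(!)$-component completes, the only available $T_1$'s come from the single $\tau(?)$ (whose firings are capped by its own blocking $P_2$) and from the still-uncompleted $\tau(!)$-components, which cannot produce extra tokens without first being released. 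Taking more $!\success$-copies than this bounded token budget, I would drive the system to the state where $\tau(?)$ is blocked at its second blocking $P_2$ (so $C_2=\full$) and every $\tau(!)$-copy is blocked at a $P_1$ (so $C_1=\full$) with no component successful --- again a deadlock, contradicting must-convergence.

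The main obstacle is exactly this last sub-case: making the token-counting precise and showing that the finite supply of $T_1$-operations available before any send completes is genuinely insufficient to release all the contending puts, so that a deadlocked, $\success$-free state is reachable. This requires carefully coordinating the values of $C_1$ and $C_2$ during the interleaving and invoking \cref{prop:estimation-P-leq-T} together with the prefix restrictions of \cref{lemma:k-is-2-send-is-not-t1plus-t2} and \cref{lemma:k2-T2+P2}; the two-symbol dichotomy ($T_1$ versus $T_2$) from the first step lets me treat the position of the first $T_1$ in $\tau(!)$ uniformly. The easy sub-case, by contrast, is immediate once the blocking-prefix shape is in hand.
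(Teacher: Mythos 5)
Your setup matches the paper's: assume the prefix is $P_1$, deduce $IS_1=\eempty$ from the blocking type, invoke the prefix shapes of \cref{lemma:blocking-pre-B-all-inital}, and aim for a reachable deadlock in a translated must-convergent flat process with many $!\success$-components. But the core of the argument has a genuine gap, and two of your choices actively work against you. First, the scheduling order: you run a copy of $\tau(!)$ to its blocking $P_1$ first and then advance $\tau(?)$. Since the blocking prefix of $\tau(?)$ is $R_3P_2\{P_1,T_1\}^*T_1P_2$ and (by \cref{lemma:P1-type-and-store} and \cref{lemma:blocking-pre-B-all-inital}) its behaviour from the store $(\full,\eempty)$ you have created is identical to its canonical behaviour from $IS$, it does \emph{not} block inside the $\{T_2,P_2\}^*$ segment — your "easy sub-case" is vacuous — and it necessarily fires the $T_1$ immediately preceding its blocking $P_2$. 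That $T_1$ resets $C_1=\eempty$ and releases the very $\tau(!)$ you blocked, so the invariant $C_1=\full$ that your target deadlock state needs is destroyed, and the released $\tau(!)$ now runs into the unanalysed suffix beyond its blocking prefix. The paper avoids exactly this by reversing the order: reduce $\tau(?\silent)$ first (ending with $C_1=\eempty$, $C_2=\full$), then one $\tau(!\success)$, which runs canonically because its prefix $\{P_1,T_1\}^*T_2$ insulates it from the perturbed $C_2$, and ends genuinely blocked with $C_1=\full$, $C_2=\eempty$; from there an alternating schedule (continue $\tau(?)$; whenever it blocks at a $P_2$, spend a fresh $\tau(!)$ whose forced $T_2$ unblocks it) is driven to a deadlock.

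Second, you use $?\success$ where the paper uses $?\silent$. In the branch where $\tau(?)$ is eventually consumed entirely — a branch the paper's case analysis shows cannot be excluded — your process becomes successful and yields no contradiction, whereas with $?\silent$ the remaining $\tau(!)$-copies all stand at a $P_1$ with $C_1=\full$ and the state is a success-free deadlock. Finally, the token-counting you propose to close the hard sub-case cannot work as a global count: \cref{prop:estimation-P-leq-T} guarantees $\#T_1 \geq \#P_1$ over $\tau(!)\PAR\tau(?)$, so adding more $!\success$-copies adds at least as many $T_1$'s as $P_1$'s and there is no arithmetic deficit. The deadlock comes from \emph{ordering}, not counting — a subprocess blocked at a $P_1$ cannot reach the $T_1$'s in its own remainder — and establishing that requires the explicit schedule construction you correctly identify as the missing piece.
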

\begin{proof} Assume the prefix of $\tau(!)$ is $P_1$. Then $IS_1 = \eempty$ due to the assumption that the blocking type is $(P_1P_1,P_2P_2)$.   
   Consider the must-convergent process $!\success \PAR \ldots \PAR !\success \PAR ?\silent$, where we fix the  number of  $!\success$-subprocesses later if this is necessary.
   We will use the structure of the subprocesses $\tau(!)$ and $\tau(?)$ proved in Lemma \ref{lemma:blocking-pre-B-all-inital} whenever necessary. 
   \begin{enumerate}
     \item  Reduce $\tau(?\silent)$ before it stops at the second $P_2$ of the blocking subsequence. After this we have $C_1=\eempty, C_2=\full$. 
     \item Reduce one subprocess $\tau(!\success)$ until it blocks. Since $C_1 = IS_1 = \eempty$ at the start and $\{P_1,T_1\}^*T_2$ is a prefix of $\tau(!)$, 
      the reduction is the same as started with $IS$, hence it stops at the second $P_1$ of the blocking subsequence and so $C_1 = \full, C_2 = \eempty$ at the end.
      \item We go on with the reduction of $\tau(?)$  until it blocks. It cannot block at a $P_1$, since this would be a deadlock.
       If the reduction consumes all of $\tau(?)$, then we reduce the next $\tau(!)$: The prefix $\{T_1,P_1\}^*T_2$ shows that it cannot block at $P_1$ of     $\{T_1,P_1\}^*$,
       since this would be a deadlock, hence $T_2$ is executed, Now it cannot block at a $P_2$ before the end of the blocking sequence. Thus reduction will lead to a deadlock at the 
       end of the blocking sequence, since all remaining subprocesses start with a $P_1$.
       
       The last case is that the further reduction of $\tau(?\silent)$ blocks at a $P_2$. Then again we reduce the next subprocess $\tau(!\success)$.
       It cannot block at $P_1$ of the prefix $\{T_1,P_1\}^*T_2$, since this would be a deadlock, hence it executes a $T_2$, and thus again it blocks at a $P_1$ at the end of a 
       blocking sequence. This is the final deadlock.\qedhere
   \end{enumerate}
\end{proof}


\begin{lemma}\label{lemma:k-is-2-send-is-not-t2plus-p1}
 Let $\tau$ be a correct translation.  
Then the prefix of $\tau(!)$ cannot be $T_2^+P_1$.
\end{lemma}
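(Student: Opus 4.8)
The plan is to mirror the refutation of \cref{lemma:k2-excl-pref-not-P1} (the case where the prefix of $\tau(!)$ is $P_1$), exploiting that a leading block $T_2^+$ is harmless and even helpful. First I would record two facts. Since the blocking type is $(P_1P_1,P_2P_2)$ and the first $1$-symbol of $\tau(!)$ is $P_1$ (it is preceded only by $T_2$'s), \cref{lemma:P1-type-and-store} forces $IS_1 = \eempty$. Moreover, the initial $T_2^+$ of $\tau(!)$ never blocks and always leaves $C_2 = \eempty$; hence, whenever lock $1$ is empty, a fresh copy of $\tau(!)$ started from scratch reproduces exactly its execution from $IS$ (lock $2$ is re-initialised by the $T_2^+$, and lock $1$ agrees because $IS_1 = \eempty$), so it runs deterministically up to and blocks at the second $P_1$ of its blocking subsequence, leaving $C_1 = \full$ and $C_2 = \eempty$.

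With these facts I would examine the must-convergent process $!\success \PAR \cdots \PAR !\success \PAR ?\silent$ with sufficiently many $!\success$-subprocesses, and build a reduction of its translation ending in a deadlock. Step~1: reduce $\tau(?)$ up to just before the second $P_2$ of its blocking subsequence; this is possible by its blocking type $P_2P_2$, and by the shape $R_3 P_2\{P_1,T_1\}^*T_1 P_2$ of the blocking prefix given in \cref{lemma:blocking-pre-B-all-inital} the last $1$-operation executed is $T_1$, so afterwards $C_1 = \eempty$ and $C_2 = \full$. Step~2: run one $\tau(!\success)$; by the observation above it passes through $T_2^+P_1$ and stops at its second $P_1$ with $C_1 = \full$ and $C_2 = \eempty$. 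Step~3: resume $\tau(?)$; since $C_2 = \eempty$ its second $P_2$ now fires and $\tau(?)$ continues.

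From here the argument is the case analysis of \cref{lemma:k2-excl-pref-not-P1}. Continuing $\tau(?)$ until it blocks or is consumed: if it blocks at a $P_1$ then, as $C_1 = \full$ is held by the stuck copy, we already have a deadlock; if it is consumed, or if it blocks at a $P_2$, I feed a further fresh $\tau(!\success)$-copy, which (again by the re-initialising $T_2^+$) seizes lock $1$ and blocks at its own second $P_1$, so that lock $1$ stays full. Iterating, every copy ends blocked at some $P_1$ with $C_1 = \full$, no $\success$ is exposed, and no step leading to success remains; this fail contradicts must-convergence of the source process.

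The hard part is making the last paragraph airtight, i.e.\ guaranteeing that the reduction really terminates in a fail rather than letting some $\tau(!)$-copy complete to $\success$. The only way a blocked copy could finish is if lock $1$ is released by a $T_1$ inside $\tau(?)$; the key point is that each such release can be immediately absorbed by driving a fresh copy (whose leading $T_2^+$ makes this step completely schedule-independent) to re-seize and hold lock $1$ at its second $P_1$, so lock $1$ is never left free for a blocked copy to complete. Since $\tau(?)$ is a single finite sequence it contains only finitely many $T_1$'s, hence can free lock $1$ only finitely often; choosing the number of $!\success$-subprocesses to exceed $|\tau(?)|$ ensures enough fresh copies are available, and the resulting terminal state (all copies waiting at some $P_1$ with $C_1 = \full$ and $\tau(?)$ consumed or blocked) is a genuine fail.
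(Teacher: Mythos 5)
Your proposal is correct and follows essentially the same route as the paper, which simply strips the harmless $T_2^+$-prefixes and then reuses the argument of \cref{lemma:k2-excl-pref-not-P1}; you absorb the $T_2^+$ into each fresh copy's run (observing it re-initialises $C_2$) rather than stripping all prefixes upfront, which is only a cosmetic difference. Your extra bookkeeping on why the constructed reduction really ends in a fail (finitely many $T_1$'s in $\tau(?)$, enough $!\success$-copies to re-seize lock $1$ after each release) is a welcome tightening of a point the paper leaves implicit.
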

\begin{proof}
Consider the must-convergent process $\tau(!\success \PAR \ldots \PAR !\success \PAR ?\silent)$.
First, reduce all $T_2^+$-prefixes away, thhen use the same arguments as in  Lemma \ref{lemma:k2-excl-pref-not-P1}, which is possible, since it is the same process.
\end{proof}

\ignore{
\begin{proof}
Consider the must-convergent process $\tau(!\success \PAR \ldots \PAR !\success \PAR ?\silent)$.
First, reduce all $T_2^+$-prefixes away.  
 The same arguments as in the  proof of Lemma \ref{lemma:k2-T2+P2} show that the reduction of $\tau(?)$ is independent of the initial store.
 
\dstodo{Der $T_2^*P_1$-Fall enthält aber auch,
dass es keine $T_2$ gibt? Wieso ist dann klar, dass
$\tau(?)$ bis zum zweiten $P_2$ für jeden IS reduziert werden kann?}

 Reduce $\tau(?)$ until it stops before the second $P_2$ of the blocking subsequence.
 Now $C_1 = \eempty$, and since $\{T_1,P_1\}^*T_2$ is a prefix of $\tau(!)$ (Lemma \ref{lemma:blocking-pre-B-all-inital}), the reduction of $\tau(!)$ is  independent of the initial value (of $C_2$).
  Then reduce the first  $\tau(!\success)$ until the but last symbol $T_2$ of the blocking subsequence will be reduced (see Lemma \ref{lemma:blocking-pre-B-all-inital}). 
 Repeating this, using other subprocesses of the form $\tau(!)$, the subprocess $\tau(?)$ can be stepwise reduced
 until either it is completely reduced, or a symbol $P_1$ has to be reduced but $C_1 = \full$. \\
 In the first case reduce a further $\tau(!)$ until the end of the blocking sequence and hence $C_1 = \full$.

 \dstodo{Ist das klar, dass das immer geht? Eigentlich ja nur für den initial state, der aber nich unbedingt gegeben ist???}
 
 We can assume that there are no more subprocesses $\tau(!\success)$.
 Now all processes have $P_1$ as a first symbol and $C_1 = \full$, which is a deadlock.
 \end{proof}
 }
 
 Since $P_1$ as prefix of $\tau(!)$ is already excluded, we show the following.

\begin{lemma}\label{lemma:k-is-2-send-is-not-t1plus-p1}
Let $\tau$ be a correct translation.  
Then the prefix of $\tau(!)$ cannot be $T_1^+P_1$.
\end{lemma}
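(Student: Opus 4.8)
The plan is to assume, for contradiction, that $\tau(!)$ has prefix $T_1^+P_1$ and to reduce this situation to the one already refuted in \cref{lemma:k2-excl-pref-not-P1}. Note first that the leading $1$-symbol of $\tau(!)$ is now a $T_1$, so \cref{lemma:P1-type-and-store} places no constraint on $IS_1$: in contrast to the case of prefix $P_1$ (where $IS_1 = \eempty$ is forced) the lock~$1$ may start either empty or full. This is precisely the wrinkle that makes a separate argument necessary, and the idea is to remove it by executing the $T_1$-steps first.

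Concretely, I would take the must-convergent process $\tau(!\success \PAR \ldots \PAR !\success \PAR ?\silent)$ with sufficiently many $!\success$-subprocesses, exactly as in \cref{lemma:k2-excl-pref-not-P1}, and first reduce, in every $\tau(!\success)$-subprocess, its whole initial run of $T_1$-steps. Each such step is a take on lock~$1$, hence always executable; it never touches $C_2$ and leaves $\tau(?\silent)$ untouched, and after all of them $C_1 = \eempty$ while $C_2 = IS_2$ is unchanged, with every $\tau(!\success)$ now standing at the $P_1$ that follows its $T_1^+$-prefix. Because $\{P_1,T_1\}^*T_2$ is a prefix of $\tau(!)$ (\cref{lemma:blocking-pre-B-all-inital}), the execution of $\tau(!)$ from $IS$ begins with exactly this run of $T_1$-steps and thereby brings $C_1$ to $\eempty$; hence the reduction of the stripped tail $P_1\ldots$ from $C_1 = \eempty$ blocks at the second $P_1$ of the blocking subsequence in the same way as the full $\tau(!\success)$ does in the earlier lemma. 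The resulting state and process therefore coincide with the starting configuration of the argument in \cref{lemma:k2-excl-pref-not-P1}, namely $C_1 = \eempty$, $C_2 = IS_2$, with each sender standing at its $P_1$ and $\tau(?\silent)$ intact. Applying that argument verbatim produces a deadlock, contradicting must-convergence; hence the prefix of $\tau(!)$ cannot be $T_1^+P_1$.

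The step I expect to be the main obstacle is the verification underlying the previous paragraph: that stripping the $T_1^+$-prefixes really lands in the configuration governed by \cref{lemma:k2-excl-pref-not-P1}. This requires checking that nothing depending on whether $IS_1$ was $\full$ or $\eempty$ can intervene before the state $C_1 = \eempty,\, C_2 = IS_2$ is reached, and that the blocking behaviour of the $\tau(!)$-tail from $C_1 = \eempty$ is the one relied upon there. Both facts rest on the structural constraints on $\tau(!)$ and $\tau(?)$ recorded in \cref{lemma:blocking-pre-B-all-inital} (that all lock-$1$ activity governing the first blocking happens before any dependence on lock~$2$), so that the replacement of the initialisation assumption $IS_1 = \eempty$ by the execution-established equality $C_1 = \eempty$ is harmless.
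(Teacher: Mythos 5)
Your proof is correct, but it takes a different route from the paper's. The paper proves this lemma directly: it first reduces $\tau(?)$ to just before the second $P_2$ of its blocking subsequence (reaching $C_1=\eempty$, $C_2=\full$), then case-analyses where a sender blocks, and in the main case works off $\tau(?)$ by interleaving it with fresh $\tau(!)$-subprocesses --- crucially \emph{using} the $T_1^+$-prefixes of fresh senders to unblock $\tau(?)$ whenever it gets stuck at a $P_1$ --- until a final deadlock in which all remaining senders are stuck at a $P_1$. You instead normalise first: executing all the never-blocking $T_1$-prefixes establishes $C_1=\eempty$ without touching $C_2$ or $\tau(?)$, after which every sender stands at a $P_1$ and the argument of \cref{lemma:k2-excl-pref-not-P1} replays (with the assumption $IS_1=\eempty$ replaced by the execution-established $C_1=\eempty$; the replacement is harmless because the first $1$-symbol of $\tau(?)$ is a $T_1$, and the structural facts from \cref{lemma:blocking-pre-B-all-inital} that the earlier proof uses are stable under removing the $T_1^+$ from the senders). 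This is in fact exactly the strategy the paper itself uses for the neighbouring case $T_2^+P_1$ in \cref{lemma:k-is-2-send-is-not-t2plus-p1}. Your version terminates earlier --- once the $T_1$'s are gone, $\tau(?)$ blocking at a $P_1$ is immediately a deadlock, whereas the paper's schedule must keep unblocking it --- so the case analysis is shorter; the price is the obligation, which you correctly identify and essentially discharge, that the stripped tails still satisfy the properties of $\tau(!)$ on which \cref{lemma:k2-excl-pref-not-P1} relies.
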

\begin{proof} Let us assume that the prefix of $\tau(!)$ is $T_1^+P_1$. We know that it is also $\{P_1,T_1\}^*T_2$. Consider the must-convergent process $\tau(!\success \PAR \ldots \PAR !\success \PAR ?\silent)$.
Reduce $\tau(?)$ until it stops before the second $P_2$ of the blocking subsequence with $C_1 = \eempty,C_2 = \full$.
 There are two cases:
 \begin{enumerate}
   \item $\tau(!\success)$ can be reduced until it blocks at a $P_2$.   Then we assume that the process is $\tau(!\success \PAR ?\silent)$       
        Hence  we have a deadlock.
    \item $\tau(!\success)$ can be reduced until it blocks at a $P_1$.    
        This position must be the second position in a blocking subsequence, since reduction starts with $C_1 = \eempty$, 
        and the prefix $\{P_1,T_1\}^*T_2$ enforces that a $T_2$ is executed before any $P_2$ in $\tau(!)$.    
             Due to the form of the blocking sequence   the last step before blocking was a $T_2$.
         We   continue now the reduction  of $\tau(?)$. 
          This can block at a $P_2$, and we will again use a $\tau(!)$-subprocess for unblocking.
          Or it stops at a $P_1$, then we use the $T_1^+$ at the start of a fresh $\tau(!)$ to unblock. 
          Finally, $\tau(?)$ is worked-off. 
            The already used $\tau(!)$ now remain with a prefix $P_1$. 
           We  execute the remaining  $\tau(!)$ until the blocking $P_1$.
  \end{enumerate}
 All cases lead to a deadlock, which is a contradiction to correctness of $\tau$.
\end{proof}
 
 \begin{proposition}\label{prop:p1p1-p2p2-not}
    Blocking type $(P_iP_i{,}P_jP_j)$\,is impossible for a correct translation for $k=2$.
 \end{proposition}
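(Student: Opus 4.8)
The plan is to turn the accumulated restrictions of this subsection into a statement about the \emph{leading symbols} of $\tau(!)$ and to check that every possibility has already been ruled out. First I would use \cref{proposition:k-is-2-and-i-not-j} to reduce to the case of distinct indices, and then rename the two locks (and, if needed, swap the roles of $!$ and $?$) to assume without loss of generality that the blocking type is $(P_1P_1,P_2P_2)$; this is exactly the representative case announced before \cref{lemma:blocking-pre-B-all-inital}. Hence it suffices to derive a contradiction from the existence of a correct translation $\tau$ of blocking type $(P_1P_1,P_2P_2)$.

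The structural backbone is \cref{lemma:blocking-pre-B-all-inital}(2): $\{T_1,P_1\}^{*}T_2$ is a prefix of $\tau(!)$. This already forces the first symbol of $\tau(!)$ into $\{T_1,P_1,T_2\}$ (so the possibility $P_2$ is excluded outright, without even invoking \cref{lemma:k-is-2-send-is-not-t1startp2}), and it tells us that every symbol preceding the first $T_2$ is a $T_1$ or a $P_1$. I would then argue by the first symbol. The value $P_1$ is refuted by \cref{lemma:k2-excl-pref-not-P1}. If the first symbol is $T_1$, consider the maximal initial run of $T_1$'s: by the prefix shape the next, non-$T_1$ symbol is either $T_2$ or $P_1$, giving a prefix of the form $T_1^{+}T_2$ or $T_1^{+}P_1$, which are excluded by \cref{lemma:k-is-2-send-is-not-t1plus-t2} and \cref{lemma:k-is-2-send-is-not-t1plus-p1}. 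If the first symbol is $T_2$, then the $\{T_1,P_1\}^{*}$ part is empty; looking now at the maximal initial run of $T_2$'s, the following non-$T_2$ symbol is $T_1$, $P_1$, or $P_2$, giving a prefix $T_2^{+}T_1$, $T_2^{+}P_1$, or $T_2^{+}P_2$, all excluded by \cref{lemma:k-is-2-send-is-not-t1plus-t2}, \cref{lemma:k-is-2-send-is-not-t2plus-p1}, and \cref{lemma:k2-T2+P2}. Since the only candidates for the first symbol are $P_1$, $T_1$, $T_2$ and each is refuted, no correct $\tau$ of blocking type $(P_1P_1,P_2P_2)$ can exist, and the proposition follows.

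The essential difficulty has in fact already been absorbed into the earlier lemmas, so the obstacle at this final stage is purely bookkeeping: one must confirm that the exclusion lemmas, together with the prefix normal form of \cref{lemma:blocking-pre-B-all-inital}, really cover \emph{every} shape of the leading run of $\tau(!)$ and not merely the conspicuous ones. The single point deserving care is that the ``symbol following the maximal initial run'' is guaranteed to exist in each branch: because the blocking type $P_1P_1$ forces $\tau(!)$ to contain two occurrences of $P_1$, the string $\tau(!)$ is not a pure prefix of $T$-symbols, so each maximal run $T_1^{+}$ or $T_2^{+}$ is indeed followed by a further symbol. With that observation the case distinction is exhaustive.
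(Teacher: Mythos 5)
Your proposal is correct and follows essentially the same route as the paper: reduce to $(P_1P_1,P_2P_2)$ via \cref{proposition:k-is-2-and-i-not-j} and symmetry, use the prefix normal form from \cref{lemma:blocking-pre-B-all-inital}, and exhaust the possible leading runs of $\tau(!)$ with \cref{lemma:k2-excl-pref-not-P1,lemma:k-is-2-send-is-not-t1plus-t2,lemma:k-is-2-send-is-not-t1plus-p1,lemma:k2-T2+P2,lemma:k-is-2-send-is-not-t2plus-p1}. The only (harmless) deviations are that you exclude a leading $P_2$ and a $T_1^{+}P_2$ prefix directly from the shape $\{T_1,P_1\}^{*}T_2$ instead of citing \cref{lemma:k-is-2-send-is-not-t1startp2} (which the paper itself derives from the same lemma), and you add the explicit observation that the two forced occurrences of $P_1$ guarantee each maximal $T$-run is followed by another symbol.
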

 \begin{proof}   
\Cref{proposition:k-is-2-and-i-not-j} excludes the case $i=j$.
For the case $i \not= j$,
it is sufficient to consider $i=1$, $j=2$ (due to symmetry). Assume that $\tau$ is a correct translation of blocking type
$(P_1P_1,P_2P_2)$.
  \Cref{lemma:blocking-pre-B-all-inital} shows that   $\{T_1,P_1\}^*T_2$  and 
$\{T_1,T_2,P_1,P_2\}^*P_1\{P_2,T_2\}^*T_2P_1$ must be prefixes of $\tau(!)$. Thus $\tau(!)$ must start with $T_1,P_1$ or $T_2$ and the length of $\tau(!)$ is at least 3.
Lemma~\ref{lemma:k2-excl-pref-not-P1} shows that $\tau(!)$ cannot start with $P_1$.
 \Cref{lemma:k-is-2-send-is-not-t1plus-t2,lemma:k-is-2-send-is-not-t1startp2,lemma:k-is-2-send-is-not-t1plus-p1}
show that  
the prefix of $\tau(!)$ cannot be $T_1^+T_2$, $T_1^+P_1$, nor $T_1^*P_2$. Thus $\tau(!)$ cannot start with $T_1$.
\Cref{lemma:k-is-2-send-is-not-t1plus-t2,lemma:k2-T2+P2,lemma:k-is-2-send-is-not-t2plus-p1} show that 
the prefix of $\tau(!)$ cannot be $T_2^+P_2,T_2^+T_1$, nor $T_2^+P_1$.
Thus $\tau(!)$ cannot start with $T_2$.
Hence, we have a contradiction, and $\tau$ cannot be correct.
\end{proof}

 \subsection{\texorpdfstring{Refuting Blocking Types $(P_iP_i,P_i)$,  $(P_i,P_iP_i)$, $(P_i, P_i)$, $(P_iP_i,P_j)$}{Refuting Blocking Types (PiPi,Pi),  (Pi,PiPi), (Pi, Pi), (PiPi,Pj)}}

 \begin{proposition}\label{prop:P1P1P1-not}
    Let $\tau$ be a correct translation. For $k = 2$ the blocking types $(P_1P_1,P_1)$,  $(P_1, P_1P_1)$, and $(P_1, P_1)$ are not possible.
   \end{proposition}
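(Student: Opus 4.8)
The plan is to refute all three blocking types by a single device: for each type I would exhibit a \emph{deadlocking} execution of $\tau(!\success \PAR ?\success)$. Since $!\success \PAR ?\success$ is must-convergent in $\PISIMPLE$ (its only reduction produces the successful $\success \PAR \success$), correctness of $\tau$ forces $\tau(!\success \PAR ?\success) = \tau(!)\success \PAR \tau(?)\success$ to be must-convergent, so any execution that reaches a non-successful deadlock yields a contradiction. In all three types both components block (in isolation, starting from $IS$) at an occurrence of $P_1$; hence the common strategy is to drive the store to $C_1 = \full$ and then bring both subprocesses to rest immediately before a $P_1$ (or before an intervening $P_2$), with no top-level $T$-step left to release any lock.

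First I would treat $(P_1,P_1)$. By \cref{lemma:P1-type-and-store} type $P_1$ of either component gives $IS_1 = \full$, and by \cref{def:blocking-types} the blocking prefixes of $\tau(!)$ and $\tau(?)$ have the shape $\{P_2,T_2\}^*P_1$, i.e.\ they touch only $C_2$ before the blocking $P_1$. I would first run the prefix of $\tau(!)$ from $IS$ up to its $P_1$; this is executable by the definition of the blocking type and leaves $C_1 = \full$. Then I would run $\tau(?)$ from the resulting store: its prefix manipulates only $C_2$, so it either reaches its own $P_1$ (blocked, since $C_1 = \full$) or it blocks earlier at some $P_2$. In either outcome $\tau(!)$ is stuck before a $P_1$ and $\tau(?)$ is stuck before a $P_1$ or a $P_2$; the next symbol of each stuck subprocess is a $P$, so no $T_1$ or $T_2$ can fire, and the state is a non-successful deadlock, the desired contradiction.

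The two mixed types $(P_1P_1,P_1)$ and $(P_1,P_1P_1)$ I would handle symmetrically (swapping the roles of $!$ and $?$), so it suffices to describe $(P_1P_1,P_1)$. Here \cref{def:blocking-types} gives $\tau(!)$ a blocking prefix $M_1P_1R_2P_1$ with $R_2 \in \{P_2,T_2\}^*$, and by the definition of the blocking type the execution of $\tau(!)$ from $IS$ reaches the second $P_1$ and deadlocks there; note that after $M_1P_1R_2$ we have $C_1 = \full$, because the first $P_1$ fills $C_1$ and $R_2$ does not touch it. I would run exactly this isolated prefix of $\tau(!)$ first, reaching $C_1 = \full$ with $\tau(!)$ parked before its second $P_1$. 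Then, exactly as in the previous case, I would run $\tau(?)$ (whose prefix before its single blocking $P_1$ lies in $\{P_2,T_2\}^*$): it stops either at its $P_1$ (blocked by $C_1 = \full$) or at an intervening $P_2$, and again both subprocesses are stuck before $P$-steps with no $T$-step enabled, a non-successful deadlock.

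The main obstacle, and the point every case really turns on, is verifying that the configuration reached is a \emph{genuine} deadlock and not merely a momentary block: I must argue that in the final state no $T$-step at all is enabled. This holds because each stuck subprocess has a $P$ as its next symbol, while the $T$-steps that could empty the blocking lock occur, if at all, strictly beyond that $P$ and are therefore unreachable; hence the global state admits no reduction and exhibits no top-level $\success$ (the appended $\success$ still sits behind a nonempty prefix), making it a fail. The remaining bookkeeping, namely that running one component in isolation from $IS$ is a legitimate execution prefix inside the parallel process and that the chosen prefixes indeed leave $C_1 = \full$, is routine from \cref{def:blocking-types}.
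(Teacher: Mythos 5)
Your proposal is correct and follows essentially the same route as the paper: for each blocking type it drives one component to its blocking $P_1$ (leaving $C_1=\full$), then runs the other component's $\{P_2,T_2\}^*P_1$ prefix until it sticks at a $P_2$ or at its $P_1$, yielding a non-successful deadlock that contradicts must-convergence of $!\success \PAR\ ?\success$. The only cosmetic difference is that the paper uses $!\ \PAR\ ?\success$ in the $(P_1,P_1)$ case, which changes nothing.
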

  \begin{proof}
      First, we assume $(P_1P_1,P_1)$. Consider the process $\tau(!\success) \PAR \tau(?\success)$
      which must be must-convergent for a correct
      translation $\tau$.
      The blocking prefix of $\tau(?)$ is of the form $\{P_2,T_2\}^*P_1$, and $IS_1 = \full$. Then construct the following reduction: 
      first, reduce $\tau(!\success)$ until the blocking $P_1$ (now $C_1=\full$ still holds), and then the prefix $\{P_2,T_2\}^*P_1$ of $\tau(?\success)$. 
      If it blocks at some $P_2$, then it is a deadlock,
      and if it blocks at the $P_1$, it is also a deadlock.
      The symmetric type $(P_1, P_1P_1)$ is also impossible (by the symmetric reduction).
      Now assume the type is $(P_1, P_1)$.  Then the blocking prefixes  of $\tau(!)$ and $\tau(?)$ are both of the form $\{P_2,T_2\}^*P_1$.  Reducing $\tau(!)$ blocks at $P_1$.
      Afterwards reducing $\tau(?)$ either stops at a $P_2$, which is a deadlock, or at $P_1$, which is also a deadlock.
      Thus for the must-convergent process $(! \PAR ?\success)$ we
      can construct a reduction sequence for $\tau(!\PAR ?\success)$ that ends in a deadlock. 
  \end{proof}
  
    
In the following,  we only have  to think about  the blocking types $(P_1P_1,P_2)$,  and $(P_1, P_2)$, since $(P_2, P_1P_1)$ is a symmetric case of the first one.

\begin{lemma}\label{lemma:P1P1-P2-not}
 Blocking type $(P_1P_1,P_2)$ is not possible for a correct translation and $k=2$.
 \end{lemma}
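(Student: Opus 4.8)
The plan is to contradict correctness by exhibiting a reduction of a \emph{must}-convergent process that runs into a deadlock. First I would record the structure forced by the blocking type. By \cref{lemma:P1P1-blocking-exists} the sequence $\tau(!)$ has a blocking subsequence $P_1R_2P_1$ with $R_2\in\{P_2,T_2\}^*$, so its blocking prefix has the form $R_1P_1R_2P_1$; applying \cref{lemma:P1-type-and-store} to the blocking type $P_1P_1$ tells me that the first $1$-symbol of $\tau(!)$ is $T_1$ or else $IS_1=\eempty$. Dually, $\tau(?)$ has a blocking prefix $R'P_2$ with $R'\in\{P_1,T_1\}^*$, and since its blocking type is $P_2$, \cref{lemma:P1-type-and-store} forces $IS_2=\full$. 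Thus $\tau(!)$ can pass its \emph{second} $P_1$ only after some other subprocess has executed a $T_1$ (to reset $C_1$ to $\eempty$), and $\tau(?)$ can pass its $P_2$ only after some other subprocess has executed a $T_2$ (to reset $C_2$ to $\eempty$). This mutual dependency is what I would exploit.

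As the must-convergent witness I would take $\tau(!\success \PAR \cdots \PAR !\success \PAR ?\silent)$ with as many $\tau(!)$-copies as needed, which equals $\tau(!)\success \PAR \cdots \PAR \tau(!)\success \PAR \tau(?)$ and is must-convergent because $!\success \PAR \cdots \PAR !\success \PAR ?\silent$ is (in the style already used in \cref{lemma:k2-excl-pref-not-P1}). The target configuration is a state in which $\tau(?)$ is blocked immediately before its $P_2$ with $C_2=\full$ and every active $\tau(!)$-copy is blocked immediately before its second $P_1$ with $C_1=\full$; such a state is a non-successful deadlock and hence contradicts must-convergence. To reach it I would drive $\tau(?)$ through $R'$ and each selected $\tau(!)$-copy through $R_1P_1R_2$, exploiting the ``first $1$-symbol is $T_1$ or $IS_1=\eempty$'' property so that, once a copy has passed its leading $T_1$, its subsequent behaviour is insensitive to the current value of $C_1$ and the copies behave uniformly regardless of how I interleave them. \Cref{lemma:reduction-uses-all-symbols} fixes the intended communication pattern that correctness demands, and \cref{prop:estimation-P-leq-T} (for $i=2$) would be used to bound the number of $T_2$-symbols available, i.e.\ to control how $C_2$ can be manipulated.

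The case distinction I would make is on where the unblocking $T$-symbols sit. If $R'$ contains no $T_1$ (so $R'$ is empty or a single $P_1$) and $R_1P_1R_2$ contains no $T_2$, then neither process can ever supply what the other needs, and reducing $\tau(?)$ and one $\tau(!)$-copy each in isolation to its block already yields the deadlock. The substantive cases are those in which $R'$ contains a $T_1$ and/or $R_2$ contains a $T_2$, so that a single pair can partially unblock; here I would track the values of $C_1$ and $C_2$ explicitly and choose both the interleaving and the number of $\tau(!)$-copies so that the net effect on $C_2$ leaves it $\full$ (keeping $\tau(?)$ stuck at $P_2$) while the $T_1$'s supplied by $\tau(?)$ and by the $R_1$-parts of the copies are never sufficient to move all copies past their second $P_1$ (keeping $C_1=\full$ at the relevant blocks).

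I expect the main obstacle to be exactly this mutual-unblocking case: when $R'$ has a $T_1$ \emph{and} $R_2$ has a $T_2$, a single $\tau(!)/\tau(?)$ pair need not deadlock, because a $T_2$ inside $R_2$ can free $\tau(?)$, which may in turn free $\tau(!)$ further along. The careful part is therefore to show that, for a suitable number of $\tau(!)$-copies, some interleaving is \emph{forced} into the target configuration with both locks full, i.e.\ that the supply of $T_1$- and $T_2$-tokens cannot simultaneously satisfy both the $P_1$-blocks of the copies and the $P_2$-block of $\tau(?)$. Making this argument uniform across both initial stores $IS_1\in\{\eempty,\full\}$ — by always routing each $\tau(!)$-copy through its leading $T_1$ before it reaches $C_1$-dependent behaviour — is the bookkeeping that must be carried out with care.
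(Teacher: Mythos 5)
Your structural setup coincides with the paper's: $IS_2=\full$, blocking prefix $R'P_2$ of $\tau(?)$ with $R'\in\{P_1,T_1\}^*$, blocking prefix $R_1P_1R_2P_1$ of $\tau(!)$, and the witness family $\tau(!\success\PAR\cdots\PAR{!\success}\PAR ?\silent)$. But the argument stops exactly where the lemma gets hard: you name the mutual-unblocking case (a $T_2$ inside the blocking prefix of $\tau(!)$ that can release $\tau(?)$) and then say that ``the careful part is to show'' that some interleaving is forced into a deadlock --- that \emph{is} the content of the lemma in this case, and it is not supplied. Moreover, the configuration you aim for there (every $\tau(!)$-copy stuck at its second $P_1$ \emph{and} $\tau(?)$ stuck at its $P_2$ with $C_2=\full$) is the wrong target. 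The paper's pivotal step is a dichotomy on the two-subprocess instance $\tau(!\success\PAR ?)$: run $\tau(?)$ to its $P_2$-block (this touches only lock $1$), then run $\tau(!)$ to its blocking $P_1$; if $C_2$ is still $\full$ you already have a deadlock and are done, so in the only surviving case every such run of a $\tau(!)$-copy to its block necessarily leaves $C_2=\eempty$. Hence in the hard case one provably cannot park $\tau(?)$ at its $P_2$ while keeping $C_2=\full$, and the deadlock must be engineered differently.

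What the paper does instead: it first pins the prefix of $\tau(!)$ down to $\{P_1,T_1\}^*T_2$ (the alternatives --- no $2$-symbol before the blocking $P_1$, or prefix $\{P_1,T_1\}^*P_2$ --- each yield an immediate two-subprocess deadlock), and then, in the multi-copy process, \emph{sacrifices} $\tau(?)$ rather than blocking it: each fresh $\tau(!)$-copy driven to its blocking $P_1$ executes a $T_2$ on the way and so pushes $\tau(?)$ past its next $P_2$; $\tau(?)$ cannot stop at a $P_1$ (immediate deadlock), so with sufficiently many copies it is consumed completely, after which only $\tau(!)$-copies remain, all waiting on $P_1$ with $C_1=\full$. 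This endgame on lock $1$ alone is the idea missing from your plan. Your appeal to \cref{prop:estimation-P-leq-T} to bound the supply of $T_2$-symbols plays no role in the paper's proof and is not developed far enough to substitute for it.
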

 \begin{proof}
 Assume that the blocking type of $\tau$ is $(P_1P_1,P_2)$. 
   \cref{lemma:P1-type-and-store}    
    shows that  $IS_2 = \full$, and the prefix of $\tau(?)$ is $\{P_1,T_1\}^*P_2$.   
    This holds, since if the first symbol in $\tau(?)$ which is in  $\{P_2,T_2\}^*$ is $T_2$, then the blocking type would be different for $\tau(?)$.
  
  Since the blocking type of $\tau(!)$ is $P_1P_1$, \cref{lemma:P1-type-and-store} shows that either $IS_1 = \eempty$
 or the first 1-symbol in the blocking-sequence (which is of the form 
 $R_1P_1\{T_2,P_2\}^*P_1$) is $T_1$.
 %

 The blocking prefix of $\tau(!)$ cannot be  $\{P_1,T_1\}^*$:  This would imply that it stops with $P_1P_1$.
   Then the process $\tau(!\success \PAR ?)$ permits a failing reduction: First, reduce $\tau(?)$ until it blocks with $P_2$, 
  and then reduce $\tau(!)$, which blocks at $P_1$ without changing $C_2$, hence it is a deadlock.  
  \item A prefix of  $\tau(!)$ is of the form  $\{P_1,T_1\}^*T_2$:  Suppose the prefix is $\{P_1,T_1\}^*P_2$.
    Reducing $\tau(!\success \PAR ?)$ as follows: First $\tau(!)$,
   which cannot block within the prefix  $\{P_1,T_1\}^*$, hence it blocks at  $P_2$. Subsequent reduction of $\tau(?)$ leads to a deadlock
   since it blocks at $P_2$.
  
    For the final contradiction, we show that the process $\tau(!\success \PAR ?)$ permits a failing reduction: First, reduce $\tau(?)$ until it blocks with $P_2$, 
      and then reduce $\tau(!)$, which blocks at $P_1$.  If $C_2 = \full$ after the reduction, then it is a deadlock.
      Hence $C_2 = \eempty$ after the reduction. This holds for every reduction of $\tau(!)$ until blocking.
     Now  we restart with the process $\tau(!\success \PAR\ldots \PAR !\success \PAR ?)$, where we  
    will fix the number of $!\success$-subprocesses  later.
     First, reduce $\tau(!)$ until the blocking $P_1$ and get $C_2 = \eempty$. Then we reduce $\tau(?)$ as far as possible.
     There are  cases: 
     \begin{enumerate}
      \item 
 $\tau(?)$ can be completely reduced. Then we reduce the second 
     $\tau(!)$ until a blocking, which will occur at $P_1$.
     Then  $C_1=\full$, and hence both $\tau(!)$ are blocked forever.  
     \item   $\tau(?)$ blocks at a $P_1$, then  we have a deadlock.
       \item  $\tau(?)$ blocks  at a later $P_2$.
         Then again we use the next subprocess $\tau(!)$  and reduce it to the blocking $P_1$, with $C_2 = \eempty$, and can proceed with $\tau(?)$.
         This can be repeated until $\tau(?)$ is completely reduced, where we assume sufficiently many subprocesses $\tau(!\success)$.  
         Finally we get a deadlock by reducing the last  $\tau(!)$ to the blocking, and then we have a deadlock.\qedhere
              \end{enumerate}

  \end{proof}
%
%

 \subsection{\texorpdfstring{Refuting the Blocking Type $(P_1, P_2)$}{Refuting the Blocking Type (P1,P2)}}\label{sec:p1p2}
 The treatment of blocking type $(P_1,P_2)$ requires more arguments.
 We first show a lemma on the suffix of $\tau(!)$  and $\tau(?)$,
 that permit to  reuse results for other initial stores than $(\full,\full)$. 
 
 \begin{lemma}\label{lemma:P1P2-store-simple}
For $k=2$ and a correct translation $\tau$ of blocking type $(P_1, P_2)$,
the initial store can only be $(\full,\full)$ and the prefixes of $\tau(!)$ and $\tau(?)$  are $\{P_2,T_2\}^*P_1$, and  
    $\{P_1,T_1\}^*P_2$.
 \end{lemma}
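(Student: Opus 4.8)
The plan is to read off both conclusions directly from the definition of blocking type together with \cref{lemma:P1-type-and-store}, so I expect this to be a short, essentially bookkeeping argument with no genuine obstacle.

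First I would fix the shape of the two prefixes. By hypothesis $\tau(!)$ has blocking type $P_1$, so by \cref{def:blocking-types} its blocking prefix is of the form $R\,P_1$ where $R$ contains neither $P_1$ nor $T_1$, and the execution started from $IS$ deadlocks exactly before this final $P_1$. Since $k = 2$, the only lock operations available are $P_1, T_1, P_2, T_2$, so the restriction that $R$ avoids $P_1, T_1$ forces $R \in \{P_2, T_2\}^*$; hence the prefix of $\tau(!)$ is $\{P_2, T_2\}^* P_1$. By the completely symmetric argument applied to $\tau(?)$, which has blocking type $P_2$, its blocking prefix must be $\{P_1, T_1\}^* P_2$.

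Second I would pin down the initial store. \Cref{lemma:P1-type-and-store} states that if a translated subprocess is of blocking type $P_i$ then $IS_i = \full$. Applying it to $\tau(!)$ (blocking type $P_1$) yields $IS_1 = \full$, and applying it to $\tau(?)$ (blocking type $P_2$) yields $IS_2 = \full$. Together these give $IS = (\full, \full)$, which is therefore the only admissible initial store and completes the argument.

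The single point worth stating explicitly, rather than a real difficulty, is the passage from ``$R$ avoids the operations of one lock'' to ``$R$ is a Kleene-star over the operations of the other lock''. This is immediate only because $k = 2$ leaves exactly one complementary lock once $P_1, T_1$ (respectively $P_2, T_2$) are excluded; for larger $k$ this step would fail, which is precisely why the lemma is confined to $k = 2$.
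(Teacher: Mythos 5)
Your proof is correct and matches the paper's intent: the paper's own proof simply declares the claims obvious, and your argument spells out exactly the intended reading-off from \cref{def:blocking-types} (blocking prefix $R\,P_i$ with $R$ avoiding $P_i,T_i$, hence over the other lock's operations when $k=2$) combined with \cref{lemma:P1-type-and-store} to force $IS=(\full,\full)$. No gap; your closing remark about why this collapse of $R$ to a single complementary alphabet is special to $k=2$ is accurate.
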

Due to space constraints the proof of the following proposition is given in \cite{schmidt-schauss-sabel:21-conclock-ext}:
\begin{proposition}\label{prop:building-blocks-no}
 Let $\tau$ be a  translation for $k = 2$ of blocking type $(P_1, P_2)$. 
 Let $\tau(!)$  consist of a sequence of building blocks which follow the pattern
  $\{T_1,T_2\}^*P_1$ or $\{T_1,T_2\}^*P_2$, where in addition a suffix $\{T_1,T_2\}^*$ is appended.
  Let  $\tau(?)$ consist of a sequence of building blocks which follow the pattern
  $\{T_1,T_2\}^*P_1$ or $\{T_1,T_2\}^*P_2$. 
 Then $\tau$ is not correct.
 \end{proposition}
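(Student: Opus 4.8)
The plan is to invoke \cref{lemma:P1P2-store-simple}, which fixes the initial store as $(\full,\full)$ and the prefixes of $\tau(!)$ and $\tau(?)$ as $\{P_2,T_2\}^*P_1$ and $\{P_1,T_1\}^*P_2$, and then to exhibit a single must-convergent $\PISIMPLE$-process whose translation admits a reachable terminal non-successful state (a deadlock); this refutes preservation of must-convergence and hence correctness. Two structural facts drive everything. First, since every building block has the shape $\{T_1,T_2\}^*P_j$, the take-operations never block, so an execution of a parallel composition of translated subprocesses can only get stuck with every live subprocess poised at a $P_i$ with $C_i=\full$, and such a $P_i$ is released only when some other subprocess executes a $T_i$. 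Second, $T_i$ acts idempotently on $C_i$: a batch of consecutively executed $T_i$'s realizes a single $C_i=\eempty$ window, so several suppliers firing their take-operations back-to-back deliver far fewer usable empty windows than their raw count.

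The argument splits according to the only real freedom the hypothesis leaves open, namely whether the trailing suffix $\{T_1,T_2\}^*$ of $\tau(!)$ contains a $T_1$. The point of this dichotomy is a carry-over asymmetry between $\tau(!)^m$ and $\tau(?)^m$ (the translations of the subprocesses $!^{m}\success$ and $?^{m}\success$). Since $\tau(?)$ ends in a $P$-symbol and has no trailing take, and since by blocking type $P_2$ the leading $P_2$ of each copy has no $T_2$ before it inside that copy, none of the $m$ leading $P_2$-blocks of $\tau(?)^m$ can be enabled by $\tau(?)^m$ itself: each needs a $T_2$ produced by a parallel sender. The same holds for the leading $P_1$-blocks of $\tau(!)^m$ precisely when the suffix of $\tau(!)$ carries no $T_1$; if it does carry a $T_1$, that trailing take refills $C_1$ for the next copy and $\tau(!)^m$ self-sustains its $P_1$-blocks.

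In the first case (no $T_1$ in the suffix) I would take the process formed by one subprocess $!^{m}\success$ together with $n\ge m$ parallel copies of $?\silent$; in $\PISIMPLE$ it is must-convergent, since the $m$ sends pair with $m$ of the receives and every maximal execution exposes the $\success$. I then build a deadlocking schedule of its translation: drive all receiver copies through their prefixes so that they flush their take-operations in one batch and park at their first $P_2$ with $C_2=\full$, while advancing $!^{m}\success$ just enough that, after consuming the few usable $C_1=\eempty$ windows, it starves at a leading $P_1$ with $C_1=\full$; releasing a parked receiver by a sender $T_1$ only moves it to its next block while emitting a boundedly-wasted further contribution, and iterating leaves every subprocess parked at a $P_i$ with $C_i=\full$ and no take available. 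In the second case (a $T_1$ in the suffix) I would instead use one subprocess $?^{m}\success$ together with $n\ge m$ parallel copies of $!\silent$ and run the mirror-image schedule, starving $?^{m}\success$ at a leading $P_2$; this direction is always available because $\tau(?)$ has no suffix. The conservation inequality $\#(P_i,\tau(!))+\#(P_i,\tau(?))\le\#(T_i,\tau(!))+\#(T_i,\tau(?))$ of \cref{prop:estimation-P-leq-T}, together with \cref{lemma:reduction-uses-all-symbols}, keeps the bookkeeping of windows consistent.

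The main obstacle is exactly this scheduling argument: making the interleaving robust against arbitrary internal orderings of $T_1,T_2,P_1,P_2$ within the blocks, controlling both lock values $C_1,C_2$ simultaneously, and quantifying the \emph{effective} take-throughput that parked suppliers can ever deliver (the idempotent-waste phenomenon) so as to show it falls strictly below the $m$ empty windows the starved side requires. Once that combinatorial core is pinned down, the chosen process is must-convergent in $\PISIMPLE$ but its translation deadlocks, contradicting correctness, and the proposition follows.
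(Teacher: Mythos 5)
There is a genuine gap, and you in effect acknowledge it yourself: the paragraph beginning ``The main obstacle is exactly this scheduling argument'' defers precisely the claim that has to be proved. The whole proposition reduces to showing that an adversarial interleaving can force enough \emph{wasted} takes (your ``idempotent-waste phenomenon'') that the distinguished subprocess starves before reaching $\success$. Since \cref{prop:estimation-P-leq-T} guarantees that the raw count of $T_i$'s is at least the count of $P_i$'s, no naive counting gives a contradiction; everything hinges on quantifying the effective take-throughput under a concrete schedule, and that is exactly what is missing. The paper's own proof does not use an asymptotic family at all: it fixes the single witness $!\PAR !\PAR ?\PAR ?$ with $\success$ attached to one subprocess, and constructs a failing reduction by induction with an explicit strategy (flush all enabled takes, then fire the $P$-prefix of a reducible subprocess minimizing the number of remaining $P$-symbols, never emptying a subprocess prematurely), followed by an exhaustive case analysis for the transitions from four to three, three to two, and two to one live subprocesses. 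The fact that even this two-communication witness requires such a delicate case analysis is a strong indication that your unbounded-$m$ counting argument cannot be waved through.

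Beyond the deferred core, your case split is based on an incorrect structural claim. You assert that $\tau(!)^m$ can self-sustain its leading $P_1$'s only if the trailing suffix $\{T_1,T_2\}^*$ of $\tau(!)$ contains a $T_1$. But a $T_1$ occurring in an \emph{interior} building block of $\tau(!)$ (after the blocking $P_1$ but before the end) equally refills $C_1$ for the next copy's leading $P_1$, as long as no intervening $P_1$ consumes it. For instance $\tau(!)=T_2P_1T_1P_2$ has empty suffix, yet in $\tau(!)^m$ each copy's $T_1$ enables the next copy's $P_1$; your case~1 schedule would then fail to starve $!^m\success$ at a leading $P_1$. The analogous remark applies to $T_2$'s interior to $\tau(?)$. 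So the dichotomy ``suffix contains $T_1$ or not'' does not partition the translations in the way your two schedules require, and the argument would need to be restructured (or, better, replaced by the paper's fixed-witness strategy) before it could be completed.
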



 \begin{corollary}\label{corr:tau-excl-question-T-suffix}
   Let $\tau$ be a correct translation for $k = 2$ of blocking type $(P_1, P_2)$. 
 Then  $\tau(!)$ and $\tau(?)$ have a nontrivial suffix in  $\{T_1,T_2\}^+$.  
 \end{corollary}

Extending a must-convergent process by $! \PAR ?$ may destroy the must-convergence. An example is $!?\silent \PAR ?\success$,
where $! \PAR ? \PAR !?\silent \PAR ?\success$
becomes may-divergent. However, for flat processes, the extension preserves must-convergence, where a \PISIMPLE-process is \emph{flat} if it is of the form $A_1 \PAR \ldots \PAR A_n$,
where $A_i$ is $!\silent, ?\silent, !\success$, 
or $?\success$. 
\begin{lemma}\label{lemma:must-conv-kept-flat}
     Let $Q$ be a flat~\PISIMPLE-process that is must-convergent. Then the process $! \PAR ? \PAR  Q$ is also must-convergent.
\end{lemma}
\ignore{
\begin{proposition}\label{prop:P1-P2-not}
 \dstodo{kommt danach nochmal neu (6.20), wenn das stimmt, kann das hier loeschen}
  Let $\tau$ be a  translation for $k = 2$ of blocking type $(P_1, P_2)$. 
  Then the translation $\tau$ is not correct.
 \end{proposition}
 \begin{proof} 
 We apply  our analyses above as follows:
 Let us assume that $\tau$ is correct.

 The must-convergent counterexample processes $Q$ that are  used for refuting correctness, are extended by the two subprocesses 
 $\tau(! \PAR ?)$, giving $Q' = Q \PAR ! \PAR ?$.\\
  \Cref{lemma:must-conv-kept} and the assumption on correctness of $\tau$ shows that $Q'$ is  must-convergent.
  The idea is to  first reduce $\tau(! \PAR ?)$  to obtain a storage 
  $\not= (\full,\full)$.
  Then we can use the lemmas and the same arguments to obtain a failing reduction in every case, and obtain in every case a contradiction.
  Since there are also arguments on must-divergence we go through the lemmas one-by-one.
  \begin{itemize}
    \item \Cref{proposition:k-is-2-and-i-not-j} uses must-convergent processes and in the proof we construct failing reductions. This also holds in our slightly
    changed situation. When the proof there speaks of $IS$, then this refers to the store after reducing $\tau(\PAR ! \PAR ?)$.
    \item The lemmas before 
     \cref{prop:p1p1-p2p2-not} and the
     \cref{prop:p1p1-p2p2-not} itself use  must-convergent processes, hence
       the argument can be used in the current situation.  Hence $(P_iP_i, P_jP_j)$ is impossible after adjusting the store.
       \item \Cref{lemma:k-is-2-send-is-not-t1plus-t2} uses divergent processes, but it holds for any initial storage, and thus does not need the special process construction.
     \item \Cref{prop:P1P1P1-not} also builds upon must-convergent processes and construct a failing reduction, and refutes the blocking types $(P_1P_1,P_1)$,
     $(P_1, P_1P_1)$, and $(P_1, P_1)$. 
      \item \Cref{lemma:P1P1-P2-not}  shows that the blocking type $(P_1P_1,P_2)$ in our situation is not possible.
      \item The blocking type $(P_1,P_2)$ is not possible in our situation, since we have a store $\not= (\full,\full)$. \qedhere
  \end{itemize}
  \end{proof}
 
 \msstodo{Einige wenige Lemmas brauchen must-divergente Prozesse.  Muss man also genauer machen;  DONE}
 }

\begin{proposition}
\label{prop:no-blockin-type-p1-p2}
Blocking type $(P_1, P_2)$ is impossible for correct translations for $k = 2$.
 \end{proposition}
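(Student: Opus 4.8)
The plan is to argue by contradiction: suppose $\tau$ is a correct translation of blocking type $(P_1,P_2)$ for $k=2$. By \Cref{lemma:P1P2-store-simple} the initial store must then be $(\full,\full)$, the prefix of $\tau(!)$ is $\{P_2,T_2\}^*P_1$, and the prefix of $\tau(?)$ is $\{P_1,T_1\}^*P_2$; moreover, by \Cref{corr:tau-excl-question-T-suffix}, both $\tau(!)$ and $\tau(?)$ end with a nonempty suffix in $\{T_1,T_2\}^+$. The first observation is that \emph{any} string over $\{P_1,P_2,T_1,T_2\}$ splits canonically (cutting just after each $P$-symbol) into a sequence of building blocks of the form $\{T_1,T_2\}^*P_1$ or $\{T_1,T_2\}^*P_2$ followed by a trailing block in $\{T_1,T_2\}^*$. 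Hence $\tau(!)$ and $\tau(?)$ already have the building-block shape of \Cref{prop:building-blocks-no}. The \emph{only} reason \Cref{prop:building-blocks-no} does not apply verbatim is that it forbids a trailing $T$-suffix on $\tau(?)$, whereas \Cref{corr:tau-excl-question-T-suffix} forces exactly such a suffix. So the genuine content of this proposition, and the main obstacle, is to rule out precisely the case in which both $\tau(!)$ and $\tau(?)$ end with $T$-symbols.

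To bridge this gap I would use \Cref{lemma:must-conv-kept-flat} to move the store away from the unique admissible store $(\full,\full)$. Concretely, let $Q$ be a flat, must-convergent \PISIMPLE-process serving as a counterexample in the scheduling arguments behind \Cref{prop:building-blocks-no} (e.g.~a process of the form $!\success \PAR \ldots \PAR !\success \PAR ?\silent$). By \Cref{lemma:must-conv-kept-flat} the process $! \PAR ? \PAR Q$ is still must-convergent, and hence, by correctness, so is $\tau(!) \PAR \tau(?) \PAR \tau(Q)$. I would then schedule a complete, isolated communication of the added pair $\tau(!) \PAR \tau(?)$ first: by \Cref{lemma:reduction-uses-all-symbols} such a reduction consuming all symbols of both strings exists, and since the last symbol executed can be taken to be the final $T$ of the trailing suffix of $\tau(?)$, a non-blocking step that empties a cell with nothing left to refill it, the store reached afterwards is some $IS' \neq (\full,\full)$. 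Thus $(\tau(Q),IS')$ is a reachable, and therefore must-convergent, state whose store is no longer $(\full,\full)$.

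The final step is to derive a contradiction from $(\tau(Q),IS')$. Since $IS' \neq (\full,\full)$, at least one cell $C_i$ is empty, so a $P_i$ in $\tau(Q)$ that was blocking under $(\full,\full)$ now fires; exploiting the building-block structure of $\tau(!)$ and $\tau(?)$ exactly as in the proof of \Cref{prop:building-blocks-no} (and reusing the failing-reduction constructions of \cref{lemma:k-is-2-send-is-not-t1plus-t2,lemma:P1P1-P2-not}), one schedules the remaining subprocesses so that they unblock one another out of step and the computation reaches a deadlock. A deadlock reachable from a must-convergent state contradicts must-convergence, refuting correctness of $\tau$. The main obstacle is the bookkeeping of the middle step: one must verify both that the prepended communication genuinely leaves a store $IS'\neq(\full,\full)$ and that, for this particular $IS'$, the building-block scheduling really does run into a blocked $P_1$ for $\tau(!)$ or $P_2$ for $\tau(?)$ with the relevant cell full. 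This is where the explicit prefix shapes from \Cref{lemma:P1P2-store-simple} and the mandatory $T$-suffixes from \Cref{corr:tau-excl-question-T-suffix} have to be combined carefully.
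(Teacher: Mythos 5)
Your setup coincides with the paper's own proof: you use \cref{lemma:P1P2-store-simple} and \cref{corr:tau-excl-question-T-suffix} to force the store $(\full,\full)$ and the trailing $T$-suffixes, you prepend $!\PAR ?$ to a flat must-convergent counterexample $Q$, invoke \cref{lemma:must-conv-kept-flat} and correctness, and execute the added communication completely (via \cref{lemma:reduction-uses-all-symbols}), so that the last executed symbol is a $T$ and the reached store $(k_1,k_2)$ differs from $(\full,\full)$. Up to that point the argument matches the paper.

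The gap is in your final step. You propose to finish by re-running the building-block deadlock construction of \cref{prop:building-blocks-no} on $(\tau(Q),(k_1,k_2))$, but that proposition and its proof are tied to blocking type $(P_1,P_2)$, which by \cref{lemma:P1P2-store-simple} can only occur for the store $(\full,\full)$; for the perturbed store the blocking behaviour of $\tau(!)$ and $\tau(?)$ is in general entirely different, so the scheduling does not transfer ``exactly as in the proof''. What is actually needed---and what the paper does---is an exhaustive case analysis on the blocking behaviour of $\tau$ \emph{relative to} $(k_1,k_2)$: it may be any of the types $(P_iP_i,P_jP_j)$, $(P_iP_i,P_j)$, $(P_i,P_i)$, etc., each refuted by reusing \cref{proposition:k-is-2-and-i-not-j}, \cref{prop:p1p1-p2p2-not}, \cref{prop:P1P1P1-not} and \cref{lemma:P1P1-P2-not}; this reuse is legitimate precisely because those proofs only employ flat must-convergent counterexamples, so \cref{lemma:must-conv-kept-flat} applies, and the type $(P_1,P_2)$ is excluded for $(k_1,k_2)$ by \cref{lemma:P1P2-store-simple}. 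Most importantly, you omit the possibility that $\tau(!)$ or $\tau(?)$ has \emph{no} blocking prefix for $(k_1,k_2)$, i.e.\ can be executed to completion on its own. In that case there is no deadlock to construct; the contradiction instead comes from the must-divergent processes $?!\success \PAR\, !$ resp.\ $!?\success \PAR\, ?$, whose translations become may-convergent after the store change. A deadlock-only argument cannot close this case, so as written the proof is incomplete.
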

\ignore{
\begin{proposition}\label{prop:P1-P2-not}
  Let $\tau$ be a  translation for $k = 2$ of blocking type $(P_1, P_2)$. 
  Then the translation $\tau$ is not correct.
 \end{proposition}
 }
\begin{proof} 
Assume  that $\tau$ is correct for initial state (\full,\full).
 Then \cref{corr:tau-excl-question-T-suffix} shows that
 $\tau(!)$ and $\tau(?)$ must end with $\{T_1,T_2\}^+$. Since $\tau(! \PAR ?)$ must be completely
executable (see \cref{lemma:reduction-uses-all-symbols}),  reducing
$\tau((!\PAR ? \PAR Q),(\full,\full)) \xrightarrow{LS,*}
(\tau(Q),(k_1,k_2))$ must lead to a state $(k_1,k_2) \not= (\full,\full)$ for every $Q$.
We consider the blocking behavior of $\tau$ for $(k_1,k_2) \not= (\full,\full)$.
\begin{itemize}
 \item If $\tau(?)$ is non-blocking for $(k_1,k_2)$, then consider the must-divergent process $!?\success \PAR ?$.
 Then $(\tau(!?\success \PAR ?),(\full,\full))
 \xrightarrow{LS,*} (\tau(?)\success,(k_1,k_2))
 \xrightarrow{LS,*} (\success,(l_1,l_2))$.
 Thus $\tau$ is not  correct.
 \item If $\tau(!)$ is non-blocking for $(k_1,k_2)$, then consider the must-divergent process $?!\success \PAR !$.
 Then $(\tau(?!\success \PAR !),(\full,\full))
 \xrightarrow{LS,*} (\tau(!)\success,(k_1,k_2))
 \xrightarrow{LS,*} (\success,(l_1,l_2))$.
 Thus $\tau$ is not  correct.
\item  We know that the prefix of $\tau(!)$ cannot be $T_1^+T_2$ nor $T_2^+T_1$ (see \cref{lemma:k-is-2-send-is-not-t1plus-t2}). 
 \item The blocking type of $\tau$ for $(k_1,k_2)$
  is $(P_iP_i,P_jP_j)$.
  Then the proof of \cref{proposition:k-is-2-and-i-not-j} can be adapted to first show that $i \not=j$: It uses flat must-convergent processes  and constructs failing reductions. 
  Let $Q$ be such a counter-example process
  \cref{lemma:must-conv-kept-flat} shows 
  that $! \PAR ? \PAR Q$ is also must-convergent,
 and thus $\tau(! \PAR ? \PAR Q, (\full,\full)) \xrightarrow{LS,*} (\tau(Q),(k_1,k_2))$ and thus
 $(\tau(Q),(k_1,k_2))$ also must be must-convergent. But the constructed failing reductions of \cref{proposition:k-is-2-and-i-not-j} refute this.
  For the case $i \not= j$, we can reason as in the 
  lemmas before 
     \cref{prop:p1p1-p2p2-not} and also as in
     \cref{prop:p1p1-p2p2-not} itself, since they all  use  flat must-convergent \PISIMPLE-processes and show that there are failing reductions after translating them.
     Again if $Q$ is such a process, 
  \cref{lemma:must-conv-kept-flat} shows 
  that $! \PAR ? \PAR Q$ is also must-convergent,
 and thus $\tau(! \PAR ? \PAR Q, (\full,\full)) \xrightarrow{LS,*} (\tau(Q),(k_1,k_2))$. Thus $(\tau(Q),(k_1,k_2))$  must be must-convergent. But the constructed failing reductions
 in the proofs in the lemmas before 
 \cref{prop:p1p1-p2p2-not}, or in the proof of 
 \cref{prop:p1p1-p2p2-not}, respectively, refute the must-convergence. Thus the proved properties also hold if $\tau$ is of blocking type 
  $(P_iP_i,P_jP_j)$ for $(k_1,k_2)$ (where \cref{lemma:k-is-2-send-is-not-t1plus-t2} can be used directly, since it holds for any initial state).
       This shows $(P_iP_i, P_jP_j)$ is impossible as blocking type of $\tau$ for $(k_1,k_2)$. 
 \item  The blocking type of $\tau$ for $(k_1,k_2)$ is $(P_1P_1,P_1)$ or $(P_1,P_1P_1)$ or $(P_1,P_1)$. Then the must-convergent $\PISIMPLE$-processes in the proof of  \Cref{prop:P1P1P1-not}
     can be used, since they are flat.
     Let $Q$ be such a process. By 
       \cref{lemma:must-conv-kept-flat}
        $! \PAR ? \PAR Q$ is must-convergent.
        Since $\tau$ is correct $\tau(! \PAR ? \PAR Q)$ is must-convergent and thus
        $(\tau(Q),(k_1,k_2))$ is must-convergent.
     The proof of \cref{prop:P1P1P1-not} shows that $(\tau(Q),(k_1,k_2))$ may-diverges,  a contradiction.
    %
  \item $\tau$ is of blocking type $(P_1P_1,P_2)$ for $(k_1,k_2)$.
            Then the reasoning is analogous to the previous case using the must-convergent flat counterexample processes of 
            \cref{lemma:P1P1-P2-not}.
 \item The blocking type $(P_1,P_2)$ is not possible,
 since we have a store $(k_1,k_2) \not= (\full,\full)$. \qedhere
 
 

\end{itemize}
\end{proof}


\ignore{ 
\StartAppendix 
 \subsection{Refuting the Blocking Type $(P_1, P_2)$}
 
 The treatment of this blocking type requires more arguments and proceeds by first showing a lemma on the suffix of $\tau(!)$  and $\tau(?)$,
 that permit to  reuse results for other initial stores then $(\full,\full)$. 
 
 \begin{lemma}\label{lemma:P1P2-store-simple}
  Let $\tau$ be a correct translation for $k = 2$ of blocking type $(P_1, P_2)$.  Then
    the initial store can only be $(\full,\full)$ and the prefixes of $\tau(!)$ and $\tau(?)$  are $\{P_2,T_2\}^*P_1$, and  
    $\{P_1,T_1\}^*P_2$, respectively.
 \end{lemma}
\begin{proof}
  The  claims are obvious. 
\end{proof}

  \begin{proposition}\label{prop:building-blocks-no}
 Let $\tau$ be a  translation for $k = 2$ of blocking type $(P_1, P_2)$. 
 Let $\tau(!)$  consist of a sequence of building blocks which follow the pattern
  $\{T_1,T_2\}^*P_1$ or $\{T_1,T_2\}^*P_2$, where in addition a suffix $\{T_1,T_2\}^*$ is appended.
  Let  $\tau(?)$ consist of a sequence of building blocks which follow the pattern
  $\{T_1,T_2\}^*P_1$ or $\{T_1,T_2\}^*P_2$.  \\
 Then $\tau$ is not correct.
 \end{proposition}
\begin{proof} 
We will use the process with four subprocesses
$! \PAR ! \PAR ? \PAR ?$ where $\success$ is attached to the end of one subprocess, which makes $4$ must-convergent processes.
The induction proof is valid for all cases, and in the base case
we will specialize to the appropriate case.
The proof is an induction  on the number of reduction steps,
by applying reduction steps on the process
$Q_1\PAR Q_2 \PAR Q_3 \PAR Q_4$, where $Q_i$ may be empty or a sequence of building blocks, and at most two of them may have in addition a suffix 
$\{T_1,T_2\}^*$.
Initially, $Q_1 = Q_2 = \tau(!)$, and $Q_3 = Q_4 = \tau(?)$. 
The lengths may be different for the subprocesses $Q_i$ in the induction proof. 
The goal is to construct a failing reduction, {i.e.}, a reduction that ends in a deadlock and one of $Q_i$ is nonempty, and has a final suffix $\success$.
The strategy and the reduction steps are oriented at the building blocks.

There are several cases and situations:
\begin{enumerate}
  \item (The reduction step and the strategy)  The process is $Q_1 \PAR Q_2  \PAR Q_3 \PAR Q_4$ where at most two of the $Q_i$ are empty. 
        We define the reduction by a strategy, which is restricted to the case where the number of subprocesses is not strictly decreased.
        The reduction strategy and the properties of strictly decreasing cases are clarified in further items.\\
        The strategy is to first completely execute all $T_i$ in the prefixes of subprocesses until this is no longer possible. 
            If then no reduction of a $P$ is possible, then we have a deadlock. 
            Otherwise, there may be more than one subprocess with a reducible $P$-prefix.  
          If exactly one reduction is possible, then this is the only possibility.\\
         If at least 2 reductions are possible then we select  one of them according to the following selection:
            We will use as  measure $\mu(Q) = m$ of a  subprocess $Q$  the number of $P$-symbols in it.\\ 
           Then we will reduce the prefix $P$ in the reducible subprocess that is a minimum for $\mu$ among the reducible subprocesses.\\
           If this reduction step would make  one subprocess empty, then we will not execute it and specify it below. 
    \item(Reducing 4 to 3 subprocesses) Let us consider the cases:
       \begin{enumerate}
         \item All 4 subprocesses have exactly one $P$-symbol, and  a subprocess is reducible that has a $T$-suffix.\\
            Then it is of the form (up to symmetry):   $P_1\{T_1,T_2\}^* \PAR P_1\{T_1,T_2\}^* \PAR  P_i \PAR P_i$. \\
            In this case we reduce the first subprocess completely and get $P_1\{T_1,T_2\}^* \PAR  P_i \PAR P_i$. 
            If the  first one is now reducible, then we also reduce it completely. Then only one of $P_i \PAR P_i$ can be reduced and we get a deadlock.
            If $i = 2$ and only $P_2$ is reducible, then we reduce it 
               and then have a deadlock.  
            \item All 4 subprocesses have exactly one $P$-symbol, and  there is no reducible subprocess that has a $T$-suffix.\\
                  Then it is of the form (up to symmetry): $P_1 \PAR P_1 \PAR P_2\{T_1,T_2\}^* \PAR P_2\{T_1,T_2\}^*$.  \\
                  The process $P_1$ is reducible, and then there is a deadlock.
            \item There is a subprocess with more than one $P$-symbol, and (up to symmetry) also a reducible subprocess  $P_1\{T_1,T_2\}^*$.\\
                 Then we reduce the subprocess $P_1\{T_1,T_2\}^*$ completely and obtain a process with three subprocesses where at most one has a $T$-suffix.
             \item There is a subprocess with more than one $P$-symbol, and (up to symmetry) also a reducible subprocess  $P_1$.\\
                 Then the form of the process is:  $P_1 \PAR  \PAR P_2R_2 \ldots$ where $R_2$ contains a $P$-symbol.
                   The other subprocesses may be $P_1$ or $P_2R$.  
                    In this case one $P_1$-symbol is reduced, and then we have a deadlock. 
       \end{enumerate}
       \item(Reducing 3 to 2 subprocesses). Note that the previous item shows that among the three subprocesses there is at most one subprocess with a $T$-suffix. 
            Let us consider the cases:
            \begin{enumerate}
               \item All 3 subprocesses have exactly one $P$-symbol, and a subprocess is reducible that has a $T$-suffix.\\
                Then it is of the form (up to symmetry):   $P_1\{T_1,T_2\}^* \PAR  P_i \PAR P_i$. \\
                In this case we reduce the first subprocess completely and get  $P_i \PAR P_i$. 
                If none is reducible, we have a deadlock, and if both are reducible, then we reduce one, and then we have a deadlock. 
               \item All 3 subprocesses have exactly one $P$-symbol, and there is no subprocess  that has a $T$-suffix which is reducible.\\ 
                  Then it is of the form (up to symmetry): $P_1\{T_1,T_2\}^* \PAR  P_2 \PAR P_2$, and only $P_2$ is reducible. We reduce it and then obtain a deadlock.
                \item There is a subprocess with more than one $P$-symbol, and (up to symmetry) also a subprocess  $P_1\{T_1,T_2\}^*$ that is reducible according to the strategy.\\
                 Then we reduce the subprocess $P_1\{T_1,T_2\}^*$ completely and obtain a process with two subprocesses that have a prefix $P_2$, and that both do not have a $T$-suffix,
                   and moreover,  both are suffixes of $\tau(!)$  or  both are suffixes of $\tau(?)$. This case is treated below. 
                \item There is a subprocess with more than one $P$-symbol, and (up to symmetry) also a reducible subprocess  $P_1$.\\
                   Then the form of the process is:  $P_1 \PAR P_2R_2 \ldots$ where $R_2$ contains a $P$-symbol.
                   The other subprocesses may be $P_1$ or $P_2R$.  
                    In this case one $P_1$-symbol is reduced, and then we have a deadlock. 
            \end{enumerate}
           \item(Reducing 2 to 1 subprocess) By the reasoning above, the two subprocesses do not have a $T$-suffix. 
                 Let us consider the cases:
            \begin{enumerate}
               \item The 2 subprocesses contain exactly one $P$-symbol, and one subprocess is reducible.\\
                 Since both are suffixes of the same string, it is (up to symmetry)   $P_1 \PAR P_1$. \\
                 We reduce one of them and obtain a deadlock.
                \item  The process is of the form $P_1 \PAR P_2R_2$ where $R_2$ contains a $P$-symbol, but $P_2$ is not reducible (due to the strategy).
                 We reduce $P_1$ and obtain a deadlock. 
             \end{enumerate}
       
    \end{enumerate}

 For at least one of the four processes $\tau(! \PAR ! \PAR ? \PAR ?)$ where $\success$ is attached to the end of one subprocess we have shown that
 there is a failing reduction. Hence $\tau$ cannot be correct
  \end{proof}

     The lemma immediately implies: 
 \begin{corollary}\label{corr:tau-excl-question-T-suffix}
   Let $\tau$ be a correct translation for $k = 2$ of blocking type $(P_1, P_2)$. 
 Then  $\tau(!)$ and $\tau(?)$ have a nontrivial suffix in  $\{T_1,T_2\}^+$.  
 \end{corollary}

A \PISIMPLE-process is \emph{flat} if it
is of the form $A_1 \PAR \ldots \PAR A_n$,
where $A_i$ is $!\silent$, $?\silent$, $!\success$, 
or $?\success$.
\begin{lemma}\label{lemma:must-conv-kept-flat}
     Let $Q$ be a flat~\PISIMPLE-process that is must-convergent. Then the process $! \PAR ? \PAR  Q$ is also must-convergent.
\end{lemma}
\begin{proof}
For every $\PISIMPLE$-process $Q$, 
 may-convergence of $Q$ implies
 that $!\PAR  ?\PAR  Q$ is may-convergent:
 This holds, since $!\PAR  ?\PAR  Q \xrightarrow{\PIS} Q$.
 By contraposition this shows for every $\PISIMPLE$-process $Q$:
  If $!\PAR  ?\PAR  Q$ is must-divergent
  then clearly $Q$ is must-divergent.
  
Now we show the claim of the lemma, where we use contraposition and show that 
for every flat $\PISIMPLE$-process $Q$ it holds: 
if $! \PAR ? \PAR Q$ is may-divergent, 
then $Q$ is may-divergent.

We use induction on the length of a reduction from $! \PAR ? \PAR Q$ to a must-divergent process.
If the length is 0, then $! \PAR ? \PAR Q$ is must-divergent, and as shown before, also $Q$ is must-divergent.

If the length is $n > 0$, then we distinguish the cases of the first reduction for $! \PAR ? \PAR Q$:
\begin{itemize}
\item If the reduction is $! \PAR ? \PAR Q \xrightarrow{\PIS} !\PAR ? \PAR Q'$, then by the induction hypothesis $Q'$ is may-divergent, and since $Q \xrightarrow{\PIS} Q'$ also $Q$ is may-divergent.
\item 
If the reduction is $! \PAR ? \PAR Q \xrightarrow{\PIS} Q$, then $Q$ must be may-divergent.
\item $Q = ?Q_0 \PAR Q_1$ and the reduction is 
$? \PAR ! \PAR ?Q_0 \PAR Q_1 \to ? \PAR Q_0 \PAR Q_1$,
where $? \PAR Q_0 \PAR Q_1$ reduces in $n-1$ steps to a must-divergent process. Since $Q$ is flat, $Q_0 = \success$ or $Q_0 = \silent$.
The case $Q_0 = \success$ is impossible, since 
$? \PAR Q_0 \PAR Q_1$ would be successful and hence cannot reduce to a must-divergent process.

If $Q_0 = \silent$, then we are done, since $? \PAR Q_0 \PAR Q_1 = ? \PAR Q_1$ in this case, and thus $? \PAR Q_1 = ?Q_0 \PAR Q_1$ is may-divergent.
\item $Q = !Q_0 \PAR Q_1$ and the reduction is 
$? \PAR ! \PAR !Q_0 \PAR Q_1 \to ! \PAR Q_0 \PAR Q_1$,
where $! \PAR Q_0 \PAR Q_1$ reduces in $n-1$ steps to a must-divergent process. Then (similar to the previous case)  
$Q_0 = \silent$ and $Q = !Q_0 \PAR Q_1 = ! \PAR \silent \PAR Q_1$ is may-divergent.
\end{itemize}
\end{proof}
Note that the previous lemma does not hold for all (non-flat) processes, since {e.g.}~$!?\silent \PAR ?\success$ is must-convergent,
but 
$! \PAR ? \PAR !?\silent \PAR ?\success$
is may-divergent, since
$! \PAR ? \PAR !?\silent \PAR ?\success
~~\to~~
! \PAR ?\silent \PAR ?\success
~~\to~~
?\success$             
 
\ignore{
\begin{proposition}\label{prop:P1-P2-not}
 \dstodo{kommt danach nochmal neu (6.20), wenn das stimmt, kann das hier loeschen}
  Let $\tau$ be a  translation for $k = 2$ of blocking type $(P_1, P_2)$. 
  Then the translation $\tau$ is not correct.
 \end{proposition}
 \begin{proof} 
 We apply  our analyses above as follows:
 Let us assume that $\tau$ is correct.

 The must-convergent counterexample processes $Q$ that are  used for refuting correctness, are extended by the two subprocesses 
 $\tau(! \PAR ?)$, giving $Q' = Q \PAR ! \PAR ?$.\\
  \Cref{lemma:must-conv-kept} and the assumption on correctness of $\tau$ shows that $Q'$ is  must-convergent.
  The idea is to  first reduce $\tau(! \PAR ?)$  to obtain a storage 
  $\not= (\full,\full)$.
  Then we can use the lemmas and the same arguments to obtain a failing reduction in every case, and obtain in every case a contradiction.
  Since there are also arguments on must-divergence we go through the lemmas one-by-one.
  \begin{itemize}
    \item \Cref{proposition:k-is-2-and-i-not-j} uses must-convergent processes and in the proof we construct failing reductions. This also holds in our slightly
    changed situation. When the proof there speaks of $IS$, then this refers to the store after reducing $\tau(\PAR ! \PAR ?)$.
    \item The lemmas before 
     \cref{prop:p1p1-p2p2-not} and the
     \cref{prop:p1p1-p2p2-not} itself use  must-convergent processes, hence
       the argument can be used in the current situation.  Hence $(P_iP_i, P_jP_j)$ is impossible after adjusting the store.
       \item \Cref{lemma:k-is-2-send-is-not-t1plus-t2} uses divergent processes, but it holds for any initial storage, and thus does not need the special process construction.
     \item \Cref{prop:P1P1P1-not} also builds upon must-convergent processes and construct a failing reduction, and refutes the blocking types $(P_1P_1,P_1)$,
     $(P_1, P_1P_1)$, and $(P_1, P_1)$. 
      \item \Cref{lemma:P1P1-P2-not}  shows that the blocking type $(P_1P_1,P_2)$ in our situation is not possible.
      \item The blocking type $(P_1,P_2)$ is not possible in our situation, since we have a store $\not= (\full,\full)$. \qedhere
  \end{itemize}
  \end{proof}
 
 \msstodo{Einige wenige Lemmas brauchen must-divergente Prozesse.  Muss man also genauer machen;  DONE}
 }

  \begin{proposition}\label{prop:P1-P2-not}
  Let $\tau$ be a  translation for $k = 2$ of blocking type $(P_1, P_2)$. 
  Then the translation $\tau$ is not correct.
 \end{proposition}
\begin{proof} 
Let us assume that $\tau$ is correct (for initial state (\full,\full)).
 Since $\tau$ is correct, \cref{corr:tau-excl-question-T-suffix} shows that
 $\tau(!)$ and $\tau(?)$ must end with $\{T_1,T_2\}^+$. Since $\tau(! \PAR ?)$ must be completely
executable (see \cref{lemma:reduction-uses-all-symbols}),  reducing
$\tau((!\PAR ? \PAR Q),(\full,\full)) \xrightarrow{LS,*}
(\tau(Q),(k_1,k_2))$ must lead to a state $(k_1,k_2) \not= (\full,\full)$ for every $Q$.
 
Now we go through the cases whether $\tau$ is of a blocking type for $(k_1,k_2) \not= (\full,\full)$.
\begin{itemize}
 \item If $\tau(?)$ is non-blocking for $(k_1,k_2)$, then consider the must-divergent process $!?\success \PAR ?$.
 Then $(\tau(!?\success \PAR ?),(\full,\full))
 \xrightarrow{LS,*} (\tau(?)\success,(k_1,k_2))
 \xrightarrow{LS,*} (\success,(l_1,l_2))$.
 Thus $\tau$ is not  correct.
 \item If $\tau(!)$ is non-blocking for $(k_1,k_2)$, then consider the must-divergent process $?!\success \PAR !$.
 Then $(\tau(?!\success \PAR !),(\full,\full))
 \xrightarrow{LS,*} (\tau(!)\success,(k_1,k_2))
 \xrightarrow{LS,*} (\success,(l_1,l_2))$.
 Thus $\tau$ is not  correct.
\item  As a general property, we know that the prefix of $\tau(!)$ cannot be $T_1^+T_2$ nor $T_2^+T_1$ (see \cref{lemma:k-is-2-send-is-not-t1plus-t2}). 
 \item The blocking type of $\tau$ for $(k_1,k_2)$
  is $(P_iP_i,P_jP_j)$.
  Then the proof of \cref{proposition:k-is-2-and-i-not-j} can be adapted to first show that $i \not=j$: It uses flat must-convergent processes  and constructs failing reductions. 
  Let $Q$ be such a counter-example process
  \cref{lemma:must-conv-kept-flat} shows 
  that $! \PAR ? \PAR Q$ is also must-convergent,
 and thus $\tau(! \PAR ? \PAR Q, (\full,\full)) \xrightarrow{LS,*} (\tau(Q),(k_1,k_2))$ and thus
 $(\tau(Q),(k_1,k_2))$ also must be must-convergent. But the constructed failing reductions of \cref{proposition:k-is-2-and-i-not-j} refute this.
  For the case $i \not= j$ we can reason as in the 
  lemmas before 
     \cref{prop:p1p1-p2p2-not} and also as in
     \cref{prop:p1p1-p2p2-not} itself, since they all  use  flat must-convergent \PISIMPLE-processes and show that there are failing reductions after translating them.
     Again if $Q$ is such a process 
  \cref{lemma:must-conv-kept-flat} shows 
  that $! \PAR ? \PAR Q$ is also must-convergent,
 and thus $\tau(! \PAR ? \PAR Q, (\full,\full)) \xrightarrow{LS,*} (\tau(Q),(k_1,k_2))$ and thus
 $(\tau(Q),(k_1,k_2))$ also must be must-convergent. But the constructed failing reductions
 in the proofs in the lemmas before 
 \cref{prop:p1p1-p2p2-not}, or in the proof of 
 \cref{prop:p1p1-p2p2-not}, respectively, refute the must-convergence. Thus the proved properties also hold if $\tau$ is of blocking type 
  $(P_iP_i,P_jP_j)$ for $(k_1,k_2)$ (where \cref{lemma:k-is-2-send-is-not-t1plus-t2} can be used directly, since it holds for any initial state).

       This shows $(P_iP_i, P_jP_j)$ is impossible as blocking type of $\tau$ for $(k_1,k_2)$. 
 \item  The blocking type of $\tau$ for $(k_1,k_2)$ is $(P_1P_1,P_1)$ or $(P_1,P_1P_1)$ or $(P_1,P_1)$. Then the must-convergent $\PISIMPLE$-processes in the proof of  \Cref{prop:P1P1P1-not}
     can be used: They are flat processes and must-convergent. Let $Q$ be such a process. By 
       \cref{lemma:must-conv-kept-flat}
        $! \PAR ? \PAR Q$ is must-convergent.
        Since $\tau$ is correct $\tau(! \PAR ? \PAR Q)$ is must-convergent and thus
        $(\tau(Q),(k_1,k_2))$ is must-convergent.
        However, the proof of \cref{prop:P1P1P1-not} shows that $(\tau(Q),(k_1,k_2))$ may-diverges,  a contradiction.
    %
  \item $\tau$ is of blocking type $(P_1P_1,P_2)$ for $(k_1,k_2)$.
            Then we can reason similar to the previous case using the must-convergent counterexample processes of 
            \cref{lemma:P1P1-P2-not} which are all also flat $\PISIMPLE$-processes.
 \item The blocking type $(P_1,P_2)$ is not possible in our situation, since we have a store $(k_1,k_2) \not= (\full,\full)$. \qedhere
 
 

\end{itemize}
\end{proof}

\endIgnoreAppendix}

%
We now prove the main result:
 \begin{theorem}\label{thm:tau-2-incorrect}
   Let $IS$ be an initial store with two elements, and $\tau:$ $\PISIMPLE~\to \LOCKSIMPLE_{2,IS}$ be a compositional translation.
   Then $\tau$ is not correct.
  \end{theorem}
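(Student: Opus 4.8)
The plan is to assume, for contradiction, that $\tau$ is correct (and compositional, so that it is identified with the pair $(\tau(!),\tau(?))$), and then to derive a contradiction by a finite case analysis over the \emph{blocking type} of $\tau$, feeding each case into one of the refutations established earlier in this section. First I would apply \cref{lemma:P1P1-blocking-exists} to both $\tau(!)$ and $\tau(?)$: each has a blocking subsequence and hence a well-defined blocking type in the sense of \cref{def:blocking-types}. Since $k=2$, each such type is one of $P_1$, $P_2$, $P_1P_1$, $P_2P_2$, so the blocking type of $\tau$ is an ordered pair drawn from a four-element set, leaving $16$ combinations to rule out.

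Before splitting into cases I would set up two symmetries that reduce the $16$ combinations to a handful of representatives. The first is \emph{index relabeling}: exchanging the names of the two locks turns a correct translation into $\LOCKSIMPLE_{2,IS}$ into a correct translation into $\LOCKSIMPLE_{2,IS'}$, where $IS'$ is $IS$ with its two components swapped; since the theorem quantifies over all initial stores, this is harmless, and on blocking types it swaps the lock indices $1 \leftrightarrow 2$. The second is the \emph{send/receive symmetry}: the single reduction rule of $\PISIMPLE$ is symmetric in $!$ and $?$, so the translation $\tau'$ with $\tau'(!)=\tau(?)$ and $\tau'(?)=\tau(!)$ is correct if and only if $\tau$ is, and its blocking type is the transpose of $(W_1,W_2)$. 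These two commuting involutions generate a group under which it suffices to refute one representative per orbit.

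It then remains to dispatch the representatives. The four cases $(P_iP_i,P_jP_j)$ for all $i,j \in \{1,2\}$ are handled directly by \cref{prop:p1p1-p2p2-not}, which itself splits into the subcase $i=j$ (via \cref{proposition:k-is-2-and-i-not-j}) and $i \neq j$. The same-index cases with at least one single side, namely $(P_1,P_1)$, $(P_1,P_1P_1)$, $(P_1P_1,P_1)$, are refuted by \cref{prop:P1P1P1-not}, and their lock-$2$ analogues follow by index relabeling. Of the mixed-index cases with a single side, $(P_1P_1,P_2)$ is refuted by \cref{lemma:P1P1-P2-not} and $(P_1,P_2)$ by \cref{prop:no-blockin-type-p1-p2}; the companions $(P_2,P_1P_1)$, $(P_2P_2,P_1)$, $(P_1,P_2P_2)$ and $(P_2,P_1)$ are obtained from these two by the send/receive and index symmetries. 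Since each of the $16$ blocking types contradicts correctness, no correct compositional $\tau$ into $\LOCKSIMPLE_{2,IS}$ can exist.

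The substantive work sits in the preceding lemmas and propositions, where explicit deadlocking or success-producing executions had to be constructed for the various prefix shapes; the theorem itself is largely bookkeeping. Accordingly, the step I would treat most carefully is checking that the two symmetries genuinely preserve correctness and act on blocking types exactly as claimed, and that the chosen orbit representatives are exhaustive, so that no blocking type slips through the case analysis.
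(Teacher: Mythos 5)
Your proposal is correct and follows essentially the same route as the paper's proof: establish via \cref{lemma:P1P1-blocking-exists} that a correct translation has a well-defined blocking type, then dispatch all $16$ types using \cref{proposition:k-is-2-and-i-not-j}, \cref{prop:p1p1-p2p2-not}, \cref{prop:P1P1P1-not}, \cref{lemma:P1P1-P2-not}, and \cref{prop:no-blockin-type-p1-p2}, with the remaining cases handled by the index-swap and send/receive symmetries. Your explicit verification that index relabeling is harmless because the theorem quantifies over all initial stores is a detail the paper leaves implicit, but the argument is the same.
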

\begin{proof}
The proof is structured along the blocking types (\cref{def:blocking-types}) of translations. For $k = 2$ there are 4 blocking types 
of subprocesses,  and
16 potentially possible  blocking types of translations.  
Proposition \ref{proposition:k-is-2-and-i-not-j}  shows that type $(P_iP_i,P_iP_i)$ is impossible, and Proposition
\ref{prop:p1p1-p2p2-not} that $(P_iP_i,P_jP_j)$ for $i \not= j$ is impossible.
Proposition \ref{prop:P1P1P1-not} shows that blocking types $(P_1P_1,P_1)$,  $(P_1, P_1P_1)$, and $(P_1, P_1)$ are impossible,
and also the same for $P_2$, since this is analogous.
Lemma \ref{lemma:P1P1-P2-not}  shows that blocking types $(P_1P_1,P_2)$ (and also $(P_2P_2,P_1)$, $(P_1,P_2P_2), (P_2.P_1P_1)$ are impossible.
The harder case $(P_1,P_2)$ (and the symmetric case $(P_2,P_1)$)  is shown in a series of lemmas and finally 
proved in Proposition \ref{prop:no-blockin-type-p1-p2}.      
\end{proof}

%

\ignore{ist in einer subsection weiter vorne}

  \ignore{  \imAppendix
 \section{General Blocking Variants of Languages}

 We consider also variants of the simple concurrent languages where blocking of the operations may be also at $T_i$, 
 where we assume that either $P_i$ or $T_i$ is blocking. 
 We mark this blocking regime by labelling the blocking symbol with a ``b''. 
 
 \begin{definition}\label{def:simple-languages}
 The language  $\LOCKSIMPLE_{k,BP,IS}$, with blocking pattern $BP \in \{b,n\}^k$ and initial storage $IS \in \{\eempty,\full\}^k$ is determined by 
 \begin{itemize}
   \item For every $i=1,\ldots,k$ the operator symbols  $P_i^m$, where $m$ is the $i^{th}$ symbol of $BP$, and $T_i^{\overline{m}}$, where $\overline{b} :=
      n$  and $\overline{n} := b$. The set of all operators of this language is denoted as $OP^k_{BP}$.  
   \item A language of subprocesses ${\cal P}$, which are defined as elements of  $(OP^k_{BP})^*\{\silent,\success\}$. 
   \item The language of processes: ${\cal P}_1 \PAR \ldots \PAR {\cal P}_m$ where ${\cal P}_i$ are subprocesses.
   \item The initial storage $IS$.  
 \end{itemize}
 \end{definition}
 
The operational semantics is  a straightforward extension of the usual one, where the interesting modification is:
The operation is as follows:
   
  \begin{tabular}{lll}
      $P_i^m$:& (put) &  Change $C_i$ from $\eempty \to \full$;\\
                 && If $C_i$ is $\full$ and $m = b$, then no action: {i.e.}~wait.\\
                 && If $C_i$ is $\full$ and $m = n$, then $C_i$ from $\full \to \full$.\\
      $T_i^m$:& (take) &  Change $C_i$ from $\full \to \eempty$;\\
                 && If $C_i$ is $\eempty$ and $m = b$, then no action: {i.e.}~wait.\\
                 && If $C_i$ is $\eempty$ and $m = n$, then $C_i$ from $\eempty \to \eempty$.\\
   \end{tabular}


As a convention we may omit the exponent ``n'', and also the last symbol $0$ in subprocesses.

 \begin{example}
    A language for $k = 2$ where both put-operators are blocking and the initial storage is $(\eempty,\eempty)$, is
     $\LOCKSIMPLE_{2,(b,b),(\eempty,\eempty)}$.  An example  for a process is $P_1^bT_2P_2^b\success \PAR T_2T_1P_1^b$.\\
      A language for $k = 3$ where one put-operator is blocking and the two other take-operators are blocking and the storage is initialized with $(\full,\full,\full)$ 
      is
     $\LOCKSIMPLE_{3,(b,n,n),(\full,\full,\full)}$.  An example for a process is $P_1^bT_2^bP_3T_1T_2^bT_3 \success \PAR T_3^bT_1P_3$.
 \end{example}
 
 We first show that the variation of blocking patterns can be simulated also by varying the initial value of the storage.
 I.e., there is a redundancy in the class of languages from Definition \ref{def:simple-languages}. 
 
 
 \begin{definition} We define a translation $\sigma'$ of a locksimple language as follows:
 Let  $\LOCKSIMPLE_{k,BP_1,IS_1}$ be a locksimple-language, and let $BP_2$ be another blocking pattern. 
 Then let $\LOCKSIMPLE_{k,BP_2,IS_2}$  be defined as follows:  
 $$\begin{array}{lcl} IS_{2,i} &  = & 
       \left\{ \begin{array}{ll}IS_{1,i}  & \mbox{if } BP_{1,i} = BP_{2,i}\\[1mm]   
                      \overline{IS_{1,i}}  & \mbox{if }  BP_{1,i} \not= BP_{2,i}  
         \end{array} \right.
 \end{array} 
 $$
  Here $\overline{\full} := \eempty$, and  $\overline{\eempty} := \full$.
  The translation also maps $\LOCKSIMPLE_{k,BP_1,IS_1}$-processes  to $\LOCKSIMPLE_{k,BP_2,IS_2}$-processes as follows, 
  where the structure remains the same, and the symbols are mapped as follows:
  For index $j$: 
  \begin{itemize}
    \item if $BP_{1,j} = BP_{2,j}$,  then the mapping is the identity,
    \item if $BP_{1,j} \not= BP_{2,j}$, then $\sigma'(T_j^m) = P_j^m$ and   $\sigma'(P_j^m) = T_j^m$ 
  \end{itemize}

 The standardizing translation $\sigma$ is defined when $BP_2 =  (b,\ldots,b)$, {i.e.}, 
 $\sigma: \LOCKSIMPLE_{k,BP_1,IS_1} \to \LOCKSIMPLE_{k,BP_2,IS_2}$, where $IS_2$ is defined as above.
 \end{definition}
 

 The goal is to show that $\sigma(\LOCKSIMPLE_{k,BP_1,IS_1} = \LOCKSIMPLE_{k,BP_2,IS_2}$  is a locksimple-language that is equivalent to 
 $\LOCKSIMPLE_{k,BP_1,IS_1}$ which can be achieved by showing that $\sigma$ is a correct translation,  a
 bijection, and also the inverse of $\sigma$ is a correct translation.

 \begin{theorem}
 Let $k \geq 1$ and let $\LOCKSIMPLE_{k,BP_1,IS_1}$ be a locksimple-language. 
 Then $\sigma:\LOCKSIMPLE_{k,BP_1,IS_1} \to \LOCKSIMPLE_{k,BP_2,0^k}$ as defined above
 is a correct translation.
 \end{theorem}
  \begin{proof}
  The first step is to show that the basic reduction behavior is the same:
  In the case that $BP_{1,j} = BP_{2,j}$ there is no change of the symbol nor the initial storage at index $j$.\\
 If $BP_{1,j} \not= P_{2,j}$ the changes of the symbol and the operation are detailed in the table.
  
  
 \begin{tabular}{ll  || ll}
    original  &  $\sigma(original)$  &
  original  &   $\sigma(original)$  \\ \hline
 $ \begin{array}{ll@{\to~}l}
   \eempty & P & \full \\
   \full & P & \eempty \\
   \eempty& T^b & \mbox{wait}\\
    \full & T^b & \eempty \\
  \end{array}$
  &
  $ \begin{array}{ll@{\to~}l}
   \full & T & \eempty \\
   \eempty & T & \eempty \\
   \full& P^b & \mbox{wait}\\
    \eempty & P^b & \eempty \\
  \end{array}$
  &
  $ \begin{array}{ll@{\to~}l}
   \eempty & P^b & \full \\
   \full & P^b & \eempty \\
   \eempty& T & \mbox{wait}\\
    \full & T & \eempty \\
  \end{array}$
  &
  $ \begin{array}{ll@{\to~}l}
   \full & T^b & \eempty \\
   \eempty & T^b & \eempty \\
   \full& P & \mbox{wait}\\
    \eempty & P & \eempty \\
  \end{array}$
 \end{tabular}
    
 The success symbol is not changed. Hence for all processes, the convergence behaviors are the same after applying $\sigma$. 
 Hence it is correct. 
 The translation is reversible, and the behavior change is the same, hence also the reverse translation is correct. 
  \end{proof}
  
  A consequence is that it is sufficient to consider the locksimple-languages, where always the $p$ is blocking, but the 
  initial storage may vary from $(\eempty,\ldots,\eempty)$  to $(\full,\ldots,\full)$.
  
  It is open whether the are more redundancies or other similarities within this restricted class of locksimple languages. 
  
  \endeImAppendix
  }
  \section{Conclusion}\label{section:conclusion}
    We proved that for locks where exactly one of the operations (put or take) blocks if the store is not as expected,
    a correct translation from $\PISIMPLE$ into $\LOCKSIMPLE$ requires at least three locks, and also exhibited a correct translation for 
    three locks. 
   It remains open whether for all the considered blocking variants 
   and initial storage values there are correct translations for $k \geq 3$.
%
%
     Future work is to provide more arguments that our results  
   can be transferred to 
     full concurrent programming languages.
   Future work is also  to investigate the same questions for locks
    where both, put and take are blocking, if
   the store is not as expected (like MVars in Concurrent Haskell).
%
%
 
 


\bibliographystyle{eptcs}
\bibliography{pichfbiburl.bib}
 
\end{document}